\documentclass[a4paper, 11pt, twoside]{article}

\usepackage[a4paper,top=3cm,bottom=2cm,left=3cm,right=3cm,marginparwidth=1.75cm]{geometry}

\usepackage{graphicx}
\usepackage{amsmath}
\usepackage{amsfonts}
\usepackage{amsthm}

\usepackage{nicematrix}

\usepackage{subcaption}

\usepackage{tikz} 
\usetikzlibrary{arrows}
\usetikzlibrary{decorations.pathmorphing}
\usetikzlibrary{decorations.markings}
\usetikzlibrary{patterns}
\usetikzlibrary{automata}
\usetikzlibrary{positioning}
\usepackage{tikz-cd}
\tikzset{->-/.style={decoration={
				markings,
				mark=at position #1 with {\arrow{latex}}},postaction={decorate}}}
	
	\tikzset{-<-/.style={decoration={
				markings,
				mark=at position #1 with {\arrowreversed{latex}}},postaction={decorate}}}

\usetikzlibrary{shapes.misc}\tikzset{cross/.style={cross out, draw, 
         minimum size=2*(#1-\pgflinewidth), 
         inner sep=0pt, outer sep=0pt}}

\usepackage{pgfplots}
\usepackage{appendix}

\newtheorem{theorem}{Theorem}[section]

\newtheorem{lemma}[theorem]{Lemma}

\newtheorem{remark}[theorem]{Remark}
\newtheorem{problem}[theorem]{RH problem}

\newcommand{\Z}{\mathbb{Z}}

\newcommand{\C}{\mathbb{C}}
\newcommand{\F}{\mathcal{F}}

\usepackage[numbers, comma, square, sort&compress]{natbib}
\numberwithin{equation}{section}

\title{Reducing the domain of discrete multi-time determinantal point processes}
\author{Tom Claeys and Felix Gideonse}

\begin{document}

\maketitle

%\tableofcontents
\begin{abstract}
We consider a class of discrete multi-time determinantal point processes whose correlation kernels admit a double-contour integral form, containing Schur processes as simple examples. We show that the form of the correlation kernel is preserved under conditioning of the point process to a restricted domain, however, with an integrand which becomes more complicated and which is characterized by a Riemann-Hilbert problem. We apply our general result to domino tilings of reduced Aztec diamonds and to lozenge tilings of hexagons with holes. Our results show a striking analogy with the Its-Izergin-Korepin-Slavnov method for integrable kernels.
\end{abstract}

\section{Introduction} \label{intro}

We consider a discrete lattice $\Lambda_N$ consisting of a finite number of copies of the integers,
\begin{equation}\label{def:lattice}
\Lambda_N:=\{1,2,\dots, N\}\times\mathbb Z=\{(t,h): t\in\{1,2,\dots, N\}, h\in\mathbb Z\}.
\end{equation}
For a point $(t,h)$ on the lattice, we will interpret $t$ as a (discrete) time variable and $h$ as a (discrete) space variable.
We consider determinantal point processes $\mathbb{P}$ on $\Lambda_N$, which are probability distributions on the space of point configurations $\mathcal{X}$ in $\Lambda_N$, whose correlation functions have a determinantal form in the following sense:
there exists a correlation kernel
\[K:\Lambda_N\times\Lambda_N\to \mathbb C:\left(t,h;t',h'\right)\mapsto K\left(t,h;t',h'\right),\]
which is such that, for every $m\in\mathbb N$, and for every collection of $m$ distinct points $(t_1,h_1),\ldots, (t_m,h_m) \in \Lambda_N$, we have that
\begin{equation}
    \mathbb P\left((t_j,h_j)\in \mathcal X\ \mbox{for every }j=1,\ldots, m\right)
=\det\left[K\left(t_j,h_j;t_k,h_k\right)\right]_{j,k=1}^m.
\end{equation}
We refer the reader to  \cite{Soshnikov_2000, borodin2015dpp, JOHANSSON2006dpp} for background and general theory of determinantal point processes.

\medskip

We assume that the correlation kernel belongs to a general class of kernels which we introduce now. Suppose that $\vec W = (W_1, \dots, W_N)$ is a sequence of complex-valued functions such that, for $t, t' = 1, \dots, N$, the function $1_{t>t'} W_{t'}/W_t$ is analytic in an annular neighbourhood of the unit circle in the complex plane, $\mathcal{U}$. 
For $d \in \mathbb{N}$, we define $\mathcal{S}_d(\vec W)$ as the set of kernels 
\[K:\Lambda_N\times\Lambda_N\to \mathbb C:\left(t,h;t',h'\right)\mapsto K\left(t,h;t',h'\right),\] 
for which there exist column vectors $\vec \rho_t = (\rho_{t,j})_{j=1}^d$ and $\vec \sigma_t = (\sigma_{t,j})_{j=1}^d$ of complex-valued functions such that $K$ admits the following double-contour integral form:
\begin{multline} \label{def: kernel}
        K\big(t,h;t',h'\big)= -1_{t>t'} \oint_{\Sigma_1}  z^{-h} \frac{W_{t'}(z)}{W_t(z)} z^{h'} \frac{dz}{2\pi iz}
         \\
         + \oint_{\Sigma_2} \oint_{\Sigma_1} u^{-h} \frac{1}{W_{t}(u)} \frac{v}{u-v} \vec\rho_t(u)^T \vec\sigma_{t'}(v)   W_{t'}(v) v^{h'} \frac{dv}{2\pi iv} \frac{du}{2\pi iu},
\end{multline}
where $\Sigma_1$ is the positively oriented unit circle and $\Sigma_2$ is a positively oriented simple closed contour outside the unit circle and inside $\mathcal{U}$, as depicted in Figure \ref{fig:contours}, such that for $t = 1, \dots, N$, the functions $\frac{1}{W_t} \vec\rho_{t}$ are analytic in a neighbourhood of $\Sigma_2$ and the functions $W_t\vec\sigma_{t}$ are analytic in a neighbourhood of $\Sigma_1$.

\begin{figure}
    \centering
    \begin{tikzpicture}
        \fill [opacity = 0.2, black] (0,0) circle[radius=100pt];
        \fill [white] (0,0) circle[radius=30pt];
        \draw circle [radius = 40pt];
        \coordinate [label=-45:\textcolor{black}{$\Sigma_1$}]  (1) at (40pt,0pt);
        \draw circle [radius = 80pt];
        \coordinate [label=-45:\textcolor{black}{$\Sigma_2$}]  (2) at (80pt,0pt);
        \draw [dash=on 2pt off 2pt phase 0pt] circle [radius = 30pt];
        \draw [dash=on 2pt off 2pt phase 0pt] circle [radius = 100pt];
        \draw [arrows = {-Latex[width=5pt, length=10pt]}] (40pt,5pt);
        \draw [arrows = {-Latex[width=5pt, length=10pt]}] (80pt,5pt);
        \node [fill=black,inner sep=0.5pt,label=0:$0$] at (0,0) {};
        \coordinate [label=0:\textcolor{black}{$\mathcal{U}$}] (3) at (90pt,-50pt);
    \end{tikzpicture}
    \caption{The contours of integration for a kernel belonging to the space $\mathcal{S}_d(\vec W)$. The annular neighbourhood $\mathcal{U}$ is highlighted in blue.}
    \label{fig:contours}
\end{figure}

\medskip

Correlation kernels of the form \eqref{def: kernel} appear in a variety of discrete multi-time determinantal point processes, the most important general class being the Schur processes \cite{okunkov2002, okunkov2003} which correspond to $d=1$, $\vec\sigma=\vec\rho=1$, for a large class of functions $W_t$. 
One of the simplest concrete examples of such a multi-time determinantal point process in the Schur class is the Krawtchouk ensemble introduced by Johansson in \cite{Johansson1999DiscreteOP, Johansson2000NonintersectingPR, Johansson2003TheAC}. Random point configurations in this ensemble are in one-to-one correspondence with weighted random domino tilings of the Aztec diamond, with weight $1$ assigned to horizontal dominoes and weight $a\in (0,1)$ assigned to vertical dominoes; see Section \ref{section:4}.
In this ensemble, the kernel is of the form \eqref{def: kernel} with $d=1$ and $\vec\rho=\vec\sigma=1$.
Moreover, writing $N=2M$ with $M\in\mathbb N$ and $t = 2s - \epsilon$
for $s\in\{1,\ldots, N\}$ and $\epsilon \in\{ 0, 1\}$, the functions $W_t$ are meromorphic functions in the complex plane given by
\begin{equation}
    W_t(z) = \frac{z^{2M-s+\epsilon - 1}}{(z-a)^{2M-s+\epsilon}(1+az)^{s}}.
\end{equation}
In more complicated models, the vectors $\vec \rho_t$ and $\vec \sigma_{t}$ may not have an explicit expression and may be characterized in terms of the solution of a Riemann-Hilbert (RH) problem. This is the case for lozenge tilings of hexagons, for which $d=2$, and for periodic tiling models \cite{Charlier2020DoublyPL, Duits2017TheTA}, where $d$ can be arbitrarily large.

\medskip

The general theory of determinantal point processes \cite{Soshnikov_2000, borodin2015dpp, JOHANSSON2006dpp} tells us that void probabilities of $\mathbb{P}$ are given by Fredholm determinants of $K$. To make this precise we need to interpret $K$ as the kernel of an integral operator (which we denote by the same symbol $K$) acting on $\ell^2(\Lambda_N)$:
\begin{equation}
    K[\sigma](t,h) = \sum_{t'=1}^N \sum_{h' \in \mathbb{Z}} K(t,h;t',h') \sigma(t',h'), \quad \sigma \in \ell^2(\Lambda_N).
\end{equation}
Given a subset $B \subset \Lambda_N$, the void (or gap) probability is the probability that a random point configuration $\mathcal X$ has no points in $B$. We denote $1_B$ for the projection operator on $B$ defined by
\begin{equation}
    1_B[\sigma](t,h)=\begin{cases}\sigma(t,h),& (t,h)\in B,\\
0,&(t,h)\in\Lambda_N\setminus B,\end{cases},\quad \sigma\in\ell^2(\Lambda_N).
\end{equation}
If $B$ is such that $1_BK$ defines a trace-class operator, then 
\begin{equation}\label{def:void}
\mathbb P \left(\mathcal{X} \cap B = \emptyset \right)=\det\left(1-1_BK\right)_{\ell^2\left(\Lambda_N\right)},
\end{equation}
where the right-hand side is the Fredholm determinant which can be evaluated as the Fredholm series
\begin{equation}\label{def:Fredholm}
\det\left(1-1_BK\right)_{\ell^2\left(\Lambda_N\right)}=1+\sum_{m=1}^\infty \frac{(-1)^m}{m!}\sum_{B^m}
\det\left(K\left(t_j,h_j;t_k,h_k\right)\right)_{j,k=1}^m.
\end{equation}
The second sum runs over all $m$-tuples of points $\left((t_1,h_1),\ldots, (t_m,h_m)\right)$ in $B^m$. 

\medskip

We consider finite subsets $B$ such that the operator $1_BK$ has finite rank and thus is trace-class. We conveniently parametrise $B$ as a union of clusters of consecutive points corresponding to the same time level. Given $t \in\{1, \dots, N\}$ and $a< b \in \Z$, we define the cluster
\begin{equation} \label{def: pregap}
    B_{t,a,b} := \{t\} \times \{a,a+1,\dots, b-1\} \subset \Lambda_N.
\end{equation}
Then, for $q \in \mathbb{N}$ and for vectors $\vec t = (t_1, \dots, t_q)$, $\vec a = (a_1, \dots, a_q)$, and $\vec b = (b_1, \dots, b_q)$ satisfying $N\geq t_1\geq \cdots\geq t_q\geq 1$ and $a_j<b_j$ for every $j=1,\ldots, q$, we define
\begin{equation} \label{def: gap}
    B_{\vec t, \vec a, \vec b} := \bigcup_{j=1}^q B_{t_j,a_j,b_j}.
\end{equation}
In other words, $B_{\vec t, \vec a, \vec b}$ contains $q$ clusters of points $\left\{(t_j,a_j), (t_j,a_j+1),\ldots, (t_j,b_j-1)\right\}$ each containing a group of consecutive lattice points at one time level, see Figure \ref{fig:lattice}.
We assume moreover that the clusters are disjoint and that 
there is at least one empty site between two clusters at the same time level (if $t_j=t_k$ for $j\neq k$, then $[a_j,b_j]$ and $[a_k,b_k]$ are disjoint).  
\begin{figure}
    \centering
    \begin{tikzpicture}
        \tikzset{
            dots/.pic={
                \node at (0pt,5pt) [circle, inner sep = 0.5pt, fill=black, draw] {};
                \node at (0pt,0pt) [circle, inner sep = 0.5pt, fill=black, draw] {};
                \node at (0pt,-5pt) [circle, inner sep = 0.5pt, fill=black, draw] {};
            }
        }

        \tikzset{
            cross/.pic={
                \draw (-3pt,-3pt) edge [thick] (3pt,3pt);
                \draw (-3pt,3pt) edge [thick] (3pt,-3pt);
            }
        }

        \node (1) at (-60pt,-120pt)     [black]   {$1$};
        \node (2) at (-30pt,-120pt)     [black]   {$2$};
        \node (3) at (0pt,-120pt)     [black]   {$3$};
        \node (4) at (30pt,-120pt)     [black]   {$4$};
        \node (5) at (60pt,-120pt)     [black]   {$5$};

        \pic at (-90pt, 120pt) {dots};
        \pic at (-60pt, 120pt) {dots};
        \pic at (-30pt, 120pt) {dots};
        \pic at (0pt, 120pt) {dots};
        \pic at (30pt, 120pt) {dots};
        \pic at (60pt, 120pt) {dots};

        \pic at (-90pt, -90pt) {dots};
        \pic at (-60pt, -90pt) {dots};
        \pic at (-30pt, -90pt) {dots};
        \pic at (0pt, -90pt) {dots};
        \pic at (30pt, -90pt) {dots};
        \pic at (60pt, -90pt) {dots};

        \node (3) at (-90pt,90pt)     [black]   {$3$};
        \node at (-60pt,90pt) [circle, inner sep = 3pt, fill=black, draw] {};
        \node at (-30pt,90pt) [circle, inner sep = 3pt, fill=black, draw] {};
        \node at (0pt,90pt) [circle, inner sep = 3pt, fill=black, draw] {};
        \node at (30pt,90pt) [circle, inner sep = 3pt, fill=black, draw] {};
        \node at (60pt,90pt) [circle, inner sep = 3pt, fill=black, draw] {};
        
        \node (2) at (-90pt,60pt)     [black]   {$2$};
        \node at (-60pt,60pt) [circle, inner sep = 3pt, fill=black, draw] {};
        \pic at (-30pt,60pt) {cross};
        \node at (0pt,60pt) [circle, inner sep = 3pt, fill=black, draw] {};
        \node at (30pt,60pt) [circle, inner sep = 3pt, fill=black, draw] {};
        \pic at (60pt,60pt) {cross};

        \node (1) at (-90pt,30pt)     [black]   {$1$};
        \node at (-60pt,30pt) [circle, inner sep = 3pt, fill=black, draw] {};
        \pic at (-30pt,30pt) {cross} ;
        \pic at (0pt,30pt) {cross} ;
        \node at (30pt,30pt) [circle, inner sep = 3pt, fill=black, draw] {};
        \pic at (60pt,30pt) {cross} ;

        \node (0) at (-90pt,0pt)     [black]   {$0$};
        \node at (-60pt,0pt) [circle, inner sep = 3pt, fill=black, draw] {};
        \node at (-30pt,0pt) [circle, inner sep = 3pt, fill=black, draw] {};
        \pic at (0pt,0pt) {cross} ;
        \node at (30pt,0pt) [circle, inner sep = 3pt, fill=black, draw] {};
        \pic at (60pt,0pt) {cross} ;

        \node (-1) at (-90pt,-30pt)     [black]   {$-1$};
        \node at (-60pt,-30pt) [circle, inner sep = 3pt, fill=black, draw] {};
        \pic at (-30pt,-30pt) {cross} ;
        \node at (0pt,-30pt) [circle, inner sep = 3pt, fill=black, draw] {};
        \node at (30pt,-30pt) [circle, inner sep = 3pt, fill=black, draw] {};
        \node at (60pt,-30pt) [circle, inner sep = 3pt, fill=black, draw] {};

        \node (-2) at (-90pt,-60pt)     [black]   {$-2$};
        \node at (-60pt,-60pt) [circle, inner sep = 3pt, fill=black, draw] {};
        \node at (-30pt,-60pt) [circle, inner sep = 3pt, fill=black, draw] {};
        \node at (0pt,-60pt) [circle, inner sep = 3pt, fill=black, draw] {};
        \node at (30pt,-60pt) [circle, inner sep = 3pt, fill=black, draw] {};
        \node at (60pt,-60pt) [circle, inner sep = 3pt, fill=black, draw] {};

        \draw [<->] (-75pt,-120pt) -- (-75pt,135pt);
        \draw [-] (-75pt,-105pt) -- (75pt,-105pt);
        \node (h) at (-75pt,140pt) [black] {$h$};
        \node (t) at (80pt,-105pt) [black] {$t$};
    \end{tikzpicture}
    
    \caption{An illustration of $\Lambda_5$ with the set $B_{\vec t, \vec a, \vec b}$ marked with crosses, where $\vec t = (5,3,2,2)$, $\vec a=(0, 0, 1, -1)$, and $\vec b = (3, 2, 3, 0)$.}
    \label{fig:lattice}
\end{figure}

\medskip

If the void probability $\mathbb{P}\left(\mathcal{X} \cap B_{\vec t, \vec a, \vec b}=\emptyset\right)$ is non-zero, then we can form a new point process $\mathbb{P}_{\vec t, \vec a, \vec b}$ on $\Lambda_N \setminus B_{\vec t, \vec a, \vec b}$ by conditioning on the event $\mathcal{X} \cap B_{\vec t, \vec a, \vec b} = \emptyset$. 
It is known that the point process $\mathbb{P}_{\vec t, \vec a, \vec b}$ is determinantal \cite[Theorem 1.1 with $\theta=1_B$]{{Claeys2021DeterminantalPP}} and that its correlation kernel $K_{\vec t, \vec a, \vec b}\left((t,h), (t',h')\right)$ for $(t,h), (t',h')\in\Lambda_N\setminus B_{\vec t, \vec a, \vec b}$ 
is equal to the integral kernel of the  $\ell^2(\Lambda_N)$-operator
\begin{equation} \label{def: K1}
    K_{\vec t, \vec a, \vec b} := K\left(1-1_{B_{\vec t, \vec a, \vec b}}K\right)^{-1}.
\end{equation}
We emphasize that the kernel $K_{\vec t, \vec a, \vec b}$ of this operator is defined on $\Lambda_N \times \Lambda_N$, but that it is the restriction of $K_{\vec t, \vec a, \vec b}$ to $\left(\Lambda_N \setminus B_{\vec t, \vec a, \vec b}\right)\times\left(\Lambda_N \setminus B_{\vec t, \vec a, \vec b}\right)$
which defines the determinantal point process $\mathbb{P}_{\vec t, \vec a, \vec b}$.

\medskip

In extension, we consider a random thinning of $\mathbb{P}$; for fixed $\lambda \in (0,1)$, given a configuration $\mathcal{X}$, we keep each point $x \in \mathcal{X}$ with probability $\lambda$, otherwise we remove it. In doing so, we produce a new thinned point process $\mathbb{P}^{\lambda}$. Void probabilities of $\mathbb{P}^{\lambda}$ are again given by Fredholm determinants of $K$; for $B \subset \Lambda$, we have
\begin{align}
    \mathbb{P}^{\lambda}\left(\mathcal{X} \cap B\right) &= \det\left(1-\lambda 1_BK\right)_{\ell^2\left(\Lambda_N\right)} \\
    &=1+\sum_{m=1}^\infty \frac{(-\lambda)^m}{m!}\sum_{B^m}
\det\left(K\left(t_j,h_j;t_k,h_k\right)\right)_{j,k=1}^m.
\end{align}
Now if the void probability $\mathbb{P}^{\lambda} \left(\mathcal{X} \cap B_{\vec t, \vec a, \vec b}=\emptyset\right)$ is non-zero, then we can form a new point process $\mathbb{P}_{\vec t, \vec a, \vec b,\lambda}$ on $\Lambda_N$ by conditioning on the event that the thinned process $\mathbb{P}^{\lambda}$ has no points in $B_{\vec t, \vec a, \vec b}$. By \cite[Theorem 1.1 with $\theta=\lambda 1_B$]{Claeys2021DeterminantalPP}, $\mathbb{P}_{\vec t, \vec a, \vec b,\lambda}$ defines a determinantal point process; if we define, for $\lambda \in (0,1)$,
\begin{equation} \label{def: Ktransform}
    K_{\vec t, \vec a, \vec b, \lambda} := K \left(1 - \lambda 1_{B_{\vec t, \vec a, \vec b}}K\right)^{-1},
\end{equation}
then the correlation kernel for $\mathbb{P}_{\vec t, \vec a, \vec b,\lambda}$ is
\[\left(1 - \lambda 1_{B_{\vec t, \vec a, \vec b}}\right)K_{\vec t, \vec a, \vec b, \lambda}.\]
We can interpret $\mathbb{P}_{\vec t, \vec a, \vec b,\lambda}$ as the point process $\mathbb P$ under the influence of a {\em soft} reduction of the domain, in which smaller weights are assigned to configurations with points in $B_{\vec t, \vec a, \vec b}$,
while $\mathbb P_{\vec t,\vec a,\vec b}$ is the point process $\mathbb P$ under the influence of a {\em hard} domain reduction, in which zero weight is assigned to configurations with points in $B_{\vec t, \vec a, \vec b}$.

\medskip

Many kernels of the form \eqref{def: kernel} do not define a determinantal point process; this is the case, for instance, if the resulting correlation functions are not positive. However in some such cases, the kernel may still define a signed or complex-valued determinantal measure, see e.g. \cite{cafasso2026biorthogonal}. Our main results, which we will describe next, do not rely on the existence of an underlying determinantal point process and only require the operator $1-1_{B_{\vec t, \vec a, \vec b}}K$ to be invertible. However, there is only a clear probabilistic meaning in terms of a conditional point process if $K$ defines a (positive) determinantal point process.

\begin{theorem}\label{thm:transformation}
Let $K$ belong to the space $\mathcal S_d(\vec W)$. Let $B_{\vec t, \vec a, \vec b}$ be a finite subset of $\Lambda_N$ given by \eqref{def: gap} for some $q\in\mathbb N$. If $1-\lambda 1_{B_{\vec t, \vec a, \vec b}}K$ is invertible as an $\ell^2(\Lambda_N)$-operator, then the kernel $K_{\vec t, \vec a, \vec b, \lambda}$ of the operator \eqref{def: Ktransform} belongs to the space $\mathcal S_{d+2q}(\vec W)$.
\end{theorem}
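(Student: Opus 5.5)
The plan is to mimic the Its--Izergin--Korepin--Slavnov argument, using a \emph{discrete displacement structure} in the space variable $h$ in place of the usual multiplication-operator commutator. Let $S$ be the shift $S[\sigma](t,h)=\sigma(t,h+1)$ on $\ell^2(\Lambda_N)$, and write
\[
K_0(t,h;t',h')=-1_{t>t'}\oint_{\Sigma_1} z^{-h}\frac{W_{t'}(z)}{W_t(z)}z^{h'}\frac{dz}{2\pi i z}.
\]
$K_0$ depends only on $h-h'$, so it commutes with $S$. For the double-integral part, the factor $\frac{v}{u-v}$ is exactly what the displacement removes:
\[
u^{-h-1}v^{h'+1}-u^{-h}v^{h'}=-u^{-h-1}v^{h'}(u-v).
\]
Hence $S^{-1}KS-K$ (the kernel $K(t,h-1;t',h'-1)-K(t,h;t',h')$, or its mirror) is of rank at most $d$. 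Explicitly, it equals $-\vec f_t(h)^T\vec g_{t'}(h')$, with $\vec f_t(h)=\oint_{\Sigma_2}u^{-h}\vec\rho_t(u)W_t(u)^{-1}\frac{du}{2\pi i u}$ and $\vec g_{t'}(h')=\oint_{\Sigma_1}\vec\sigma_{t'}(v)W_{t'}(v)v^{h'}\frac{dv}{2\pi i}$, up to a harmless index shift.

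The first step is a characterization lemma, which is the converse of this computation. Let $L$ be a kernel on $\Lambda_N$ such that
\begin{itemize}
\item $L-K_0$ has displacement $-\vec F_t(h)^T\vec G_{t'}(h')$ with $\vec F,\vec G$ of length $d'$;
\item each $\vec F_t(\cdot)$ is the sequence of Fourier--Laurent coefficients of $\vec\rho'_t/W_t$ for some $\vec\rho'_t$ with $\vec\rho'_t/W_t$ analytic near $\Sigma_2$;
\item each $\vec G_{t'}(\cdot)$ is the coefficient sequence of $W_{t'}\vec\sigma'_{t'}$ analytic near $\Sigma_1$;
\item $L-K_0$ decays along the diagonal direction, i.e.\ $L(t,h+n;t',h'+n)-K_0(t,h+n;t',h'+n)\to 0$ as $n\to+\infty$.
\end{itemize}
Then $L\in\mathcal S_{d'}(\vec W)$. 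To prove it, I telescope the displacement identity along diagonals, summing $n$ from $0$ to $\infty$. This yields $\sum_n F_t(h+n)G_{t'}(h'+n)$, and inserting the integral representations and summing the geometric series $\sum_n (v/u)^n$ (convergent since $|v|<|u|$ on $\Sigma_1\times\Sigma_2$) reproduces the factor $\frac{v}{u-v}$.

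The second step computes the displacement of $K_\lambda:=K_{\vec t,\vec a,\vec b,\lambda}=K(1-\lambda 1_BK)^{-1}$, with $B=B_{\vec t,\vec a,\vec b}$. Set $D=S^{-1}1_BS-1_B$. For each cluster $B_{t_j,a_j,b_j}$ this is supported on its two endpoints $(t_j,a_j-1)$ and $(t_j,b_j-1)$ (with signs), so $\operatorname{rank}D\le 2q$. Then
\[
S^{-1}K_\lambda S-K_\lambda=(1-\lambda K1_B)^{-1}\Big[(S^{-1}KS-K)+\lambda K D K_\lambda\Big](1-\lambda 1_BK)^{-1}\cdot(\text{suitable arrangement}),
\]
which I would derive by the standard resolvent manipulation $A^{-1}-C^{-1}=A^{-1}(C-A)C^{-1}$, using $(1-\lambda K1_B)^{-1}K=K(1-\lambda1_BK)^{-1}$. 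The outcome is a rank $\le d+2q$ displacement $-\vec F^T\vec G$. Its left vectors are
\[
(1-\lambda K1_B)^{-1}\vec f \quad\text{and}\quad (1-\lambda K1_B)^{-1}K\delta_{(t_j,a_j-1)},\ (1-\lambda K1_B)^{-1}K\delta_{(t_j,b_j-1)},
\]
and the right vectors are the transposed analogues built from $\vec g$ and from the rows of $K_\lambda$ at the endpoints.

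The third step verifies the analyticity hypotheses of the lemma. Since $(1-\lambda K1_B)^{-1}=1+\lambda K1_B(1-\lambda 1_BK1_B)^{-1}1_B$, where the middle inverse is a finite matrix (invertible exactly when $1-\lambda1_BK$ is), every left vector equals the original $\vec f$ or a column of $K$ plus a finite linear combination of columns $K(\cdot;t_j,h'')$ with $(t_j,h'')\in B$. Each column of $K$ has the form $\oint u^{-h}(\ldots)/W_t(u)$. This holds for the double-integral part directly, and for the $K_0$ part after writing $1_{t>t_j}W_{t_j}/W_t$ as analytic in $\mathcal U$ and moving the $z$-contour to $\Sigma_2$. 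The right vectors are handled symmetrically: rows of $K$ are of the form $\oint W_{t'}(v)\,\cdot\,v^{h'}$, and $K_0$ is handled via $W_{t'}/W_{t}$ on $\Sigma_1$.

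The decay condition follows because $K_\lambda-K_0=(K-K_0)+\lambda K1_B(1-\lambda 1_BK1_B)^{-1}1_BK$ is a finite combination of translates of integrals of the above type. Along a diagonal these tend to $0$ by Riemann--Lebesgue-type decay of Laurent coefficients (analyticity in annuli gives exponential decay), with the $u^{-h}$ and $v^{h'}$ growth balanced through $|v|<|u|$.

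I expect the main obstacle to be the bookkeeping in the third step. One has to make sure the $K_0$ contributions, which carry the indicator $1_{t>t_j}$ and the ratio $W_{t_j}/W_t$ rather than a product of a $t$-function and a $t'$-function, can genuinely be absorbed into new $\vec\rho_t$ and $\vec\sigma_{t'}$ satisfying the required analyticity near $\Sigma_2$, respectively $\Sigma_1$. The natural first attempt is to attach the factor $W_{t_j}$ (a constant index, since $t_j$ is fixed) to the new component of $\vec\rho_t$. The question is then whether $W_{t_j}1_{t>t_j}/W_t$ is analytic near $\Sigma_2$, which holds by the standing assumption on $\vec W$.
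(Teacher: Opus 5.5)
Your proposal is correct, and it follows a genuinely different route from the paper.

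\textbf{The paper's route.} The paper obtains Theorem~\ref{thm:transformation} only as a corollary of Theorem~\ref{thm: main}. It proceeds in these steps:
\begin{itemize}
\item it reduces to one cluster, which needs a density/continuity argument in $\lambda$ to avoid non-invertible intermediate operators;
\item it restricts to the three time levels $t,t',t_1$, with a separate remark when they coincide;
\item it conjugates by the Fourier series transform and lifts the operator, in the manner of Bertola--Cafasso, to an integrable operator on $\Sigma_1\cup\Sigma_2$;
\item it then applies IIKS theory, a contour deformation and a dressing by $Y$.
\end{itemize}

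\textbf{Your route.} You use a discrete analogue of the characterization of integrable kernels by finite-rank commutators. Membership in $\mathcal S_{d'}(\vec W)$ is detected by a rank-$d'$ displacement under the diagonal shift, with the right analyticity, plus decay along diagonals. Conditioning changes the displacement only through the $2q$ boundary points of $B$.

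This has several advantages:
\begin{itemize}
\item it handles all clusters and all time levels at once;
\item it needs only finite-rank algebra: every inverse is the identity plus a finite-rank operator through $1_B$, so there are no domain issues even though $\vec f$ need not lie in $\ell^2$;
\item it makes transparent where $2q$ comes from, namely the same endpoints $a_j,b_j$ that generate the paper's vectors $\vec f_j,\vec g_j$.
\end{itemize}

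What it does not deliver is the structure that makes the result useful. Your new $\vec\rho_t,\vec\sigma_{t'}$ involve the inverse of the $|B|\times|B|$ matrix $1-\lambda 1_BK1_B$. The paper instead identifies them as the explicit augmented vectors \eqref{def: phipsi}, dressed by a RH problem of fixed size $d+2q$. That form also yields the ratio identities of Theorem~\ref{thm: ratio} and is amenable to asymptotic analysis.

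\textbf{A slip in step two.} It does not affect your conclusions, but it should be fixed. Write $\widetilde X:=S^{-1}XS$ and $D:=\widetilde{1_B}-1_B$. The correct identity is
\[
\widetilde{K_\lambda}-K_\lambda=\bigl(1-\lambda\widetilde K\,\widetilde{1_B}\bigr)^{-1}\bigl[(\widetilde K-K)+\lambda\widetilde K D K\bigr]\bigl(1-\lambda 1_BK\bigr)^{-1}.
\]
Your displayed version applies the resolvent twice on the right.

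Also, this $D$ is supported on $(t_j,a_j)$ and $(t_j,b_j)$. The points $(t_j,a_j-1)$ and $(t_j,b_j-1)$ belong to $S1_BS^{-1}-1_B$ instead.

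With this correction, the left vectors are $(1-\lambda\widetilde K\widetilde{1_B})^{-1}\vec f$ and the columns of $\widetilde{K_\lambda}$ at these points. The right vectors are $\vec g^{\,T}(1-\lambda 1_BK)^{-1}$ and the rows of $K_\lambda$ there. These are exactly the objects your third step handles.
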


Next we want to describe the transformation $K\in\mathcal S_d(\vec W) \mapsto K_{\vec t,\vec a,\vec b,\lambda} \in S_{d+2q}(\vec W)$ in more detail. We will do this by explaining the effect of the transformation on the functions $\vec\rho_t$ and $\vec\sigma_t$.
The crucial feature of the expression \eqref{def: kernel} is that the function
\begin{equation}\label{def:kernel almost integrable}\frac{v}{u-v} \vec \rho_t(u)^T \vec \sigma_{t'}(v),\end{equation} 
appearing in the double-contour integral is of integrable type in the sense of Its, Izergin, Korepin, and Slavnov \cite{IIKS}. We recall from \cite{IIKS, HarnadIts} that a $p\times p$ matrix-valued kernel $M:\Sigma\times\Sigma\to\mathbb C^{p\times p}$, where $\Sigma$ is a union of smooth curves in the complex plane without intersections, is said to be of $(n\times p)$-integrable type if it can be written in the form 
\begin{equation} \label{def: integrable kernel}
    M(z,w) = \frac{w}{z-w}F(z)^TG(w)\qquad \mbox{with }F(z)^TG(z)\equiv0,
\end{equation}
where $F$ and $G$ are smooth $n\times p$ matrix-valued functions defined on $\Sigma$. Our kernel \eqref{def:kernel almost integrable} is indeed integrable, with $p=1$ and $n=d$: it suffices to set $\Sigma=\Sigma_1\cup\Sigma_2$ and
\begin{equation}
    F(z)=1_{z\in\Sigma_2}\vec\rho_t(z),\qquad G(w)=1_{w\in\Sigma_1}\vec\sigma_{t'}(w).
\end{equation}
Note that we view \eqref{def:kernel almost integrable} as the kernel of an integral operator acting on the space $L^2\left(\Sigma_1 \cup \Sigma_2;\frac{dz}{2\pi iz}\right)$.

\medskip

The results in \cite{IIKS, DIZ} imply that for an integrable kernel $M$ given by \eqref{def: integrable kernel}, the kernel of the resolvent operator
\begin{equation} \label{def: resolvent}
    R := M\left(1-M\right)^{-1},
\end{equation}
if it exists, is again integrable and can be expressed 
in terms of an $n \times n$ Riemann-Hilbert (RH) problem. Specialised to our setting where $\Sigma = \Sigma_1 \cup \Sigma_2$, we write $F,G$ in the form
\begin{align}
    F(z) &= 1_{z \in \Sigma_1}F_1(z) + 1_{z \in \Sigma_2}F_2(z), \label{def: F12} \\
    G(w) &= 1_{w \in \Sigma_1}G_1(w) + 1_{w \in \Sigma_2}G_2(w). \label{def: G12}
\end{align}
The precise RH problem connected to the resolvent operator \eqref{def: resolvent} is the following.
\begin{problem}\label{RHPUpsilonIIKS} 
    \begin{enumerate}
        \item $\Upsilon: \mathbb{C} \setminus (\Sigma_1 \cup \Sigma_2) \rightarrow \mathbb{C}^{n \times n}$ is analytic.
        \item 
        For $z \in \Sigma_1 \cup \Sigma_2$, $\Upsilon$ has continuous boundary values, $\Upsilon_+$ and $\Upsilon_-$, and they satisfy the jump relations
        \begin{align}
            \Upsilon_+(z) &= \Upsilon_-(z)\left(I_n + G_1(z)F_1(z)^T\right), \quad \text{for $z \in \Sigma_1$,} \\
            \Upsilon_+(z) &= \Upsilon_-(z)\left(I_n + G_2(z)F_2(z)^T\right), \quad \text{for $z \in \Sigma_2$.}
        \end{align}
        \item $\Upsilon(z) = I_{n} + \mathcal{O}(1/z)$ as $z \rightarrow \infty$.
    \end{enumerate}
    \end{problem}
A unique solution $\Upsilon$ of RH problem \ref{RHPUpsilonIIKS} exists if and only if $1-M$ is invertible. We have that, for $z,w \in \Sigma_1 \cup \Sigma_2$,
\begin{equation} \label{def: Rintegrable}
    R(z,w) = \frac{w}{z-w} \tilde{F}(z)^T \tilde{G}(w),
\end{equation}
where
\begin{align}
    \tilde F(z) &:= 1_{z \in \Sigma_1} \Upsilon_\pm(z) ^{-T} F_1(z) + 1_{z \in \Sigma_2} \Upsilon_\pm(z) ^{-T} F_2(z), \label{def: Ftilde} \\
    \tilde G(w) &:= 1_{w \in \Sigma_1} \Upsilon_\pm(w) G_1(w) + 1_{w \in \Sigma_2} \Upsilon_\pm(w) G_2(w). \label{def: Gtilde}
\end{align}
It is readily checked that \eqref{def: Ftilde}--\eqref{def: Gtilde} are independent of the choice of $+$ or $-$ boundary values: indeed, we have that for $i,j = 1,2$ and $z \in \Sigma_i$ and $w \in \Sigma_j$,
\begin{align}
    \Upsilon_+^{-T}(z)F_i(z)-\Upsilon_-^{-T}(w)F_i(z) &= -\Upsilon_-^{-T}(w)F_i(z)G_i(z)^TF_i(z) = 0, \label{def: Upsilonjump1} \\
    \Upsilon_+(w)G_j(w) - \Upsilon_-(w)G_j(w) &= \Upsilon_-(w)G_j(w)F_j(w)^TG_j(w) = 0. \label{def: Upsilonjump2}
\end{align}
The theory of integrable operators and their resolvents was developed in \cite{IIKS, DIZ}. A detailed description of the theory for matrix-valued kernels is given in \cite[Section 1b]{HarnadIts}. 

\medskip

To present our next result, we express kernels belonging to $\mathcal{S}_d(\vec W)$ in a slightly different, at first sight more complicated than \eqref{def: kernel}, form, by setting, for $u \in \Sigma_2$ and $v \in \Sigma_1$, (notice the analogy with $\tilde F,\tilde G$ in \eqref{def: Ftilde}--\eqref{def: Gtilde})
\begin{equation}\label{def:phipsi}
\vec\rho_t(u)=Y_\pm(u)^{-T}\vec\phi_t(u),\qquad \vec\sigma_{t'}(v)=Y_\pm(v)\vec\psi_{t'}(v),
\end{equation}
where $Y$ is the solution of a $d \times d$ RH problem satisfying jump relations on the contours $\Sigma_1$ and $\Sigma_2$, and where $\vec\phi_t = (\phi_{t,j})_{j=1}^d$ and $\vec\psi_{t} = (\psi_{t,j})_{j=1}^d$ are column vectors of complex-valued functions. In our definition of $\mathcal{S}_d(\vec W)$, we imposed that for $t = 1 \dots, N$, the functions $\frac{1}{W_t}\vec\rho_{t}$ are analytic in a neighbourhood of $\Sigma_2$ and the functions $W_t\vec\sigma_{t}$ are analytic in a neighbourhood of $\Sigma_1$. For this reason, we impose here that the functions
$\frac{1}{W_t}\vec\phi_{t}$ are analytic in a neighbourhood of $\Sigma_2$ and that the functions $W_t\vec\psi_{t}$ are analytic in a neighbourhood of $\Sigma_1$, and moreover that
for $u \in \Sigma_2$ and $v \in \Sigma_1$,
\begin{equation} \label{expr: no jump}
    Y_+^{-T}(u) \vec \phi_t(u) = Y_-^{-T}(u) \vec \phi_t(u), \quad Y_+(v) \vec \psi_t(v) = Y_-(v) \vec \psi_t(v).
\end{equation}
Then it follows easily that 
$\frac{1}{W_t}\vec\rho_{t}$ and $W_t\vec\sigma_{t}$ can be extended analytically to neighbourhoods of $\Sigma_2$ and $\Sigma_1$.

For any kernel belonging to $\mathcal{S}_d(\vec W)$, there always exists a canonical choice consisting of setting
\begin{equation} \label{def:canonical}
Y\equiv I_d,\qquad \vec\phi_t=\vec\rho_t,\qquad \vec\psi_{t'}=\vec\sigma_{t'},
\end{equation} 
but in some cases there may be more convenient choices of $Y$, as we will see later (see in particular Remark \ref{remark:choiceY}). Thus, we can write any kernel $K$ belonging to $\mathcal{S}_d(\vec W)$ in the form
\begin{multline} \label{def: kernelRHP}
        K\big(t,h;t',h'\big)= -1_{t>t'} \oint_{\Sigma_1}  z^{-h} \frac{W_{t'}(z)}{W_t(z)} z^{h'} \frac{dz}{2\pi iz}
         \\
         + \oint_{\Sigma_2} \oint_{\Sigma_1} u^{-h} \frac{1}{W_{t}(u)} \frac{v}{u-v} \vec\phi_t(u)^T Y_\pm(u)^{-1}Y_\pm(v) \vec\psi_{t'}(v)   W_{t'}(v) v^{h'} \frac{dv}{2\pi iv} \frac{du}{2\pi iu},
\end{multline}
where $Y$ solves a RH problem of the following form.
\begin{problem}\label{RHPYmodel}\begin{enumerate}
        \item $Y: \mathbb{C} \setminus (\Sigma_1 \cup \Sigma_2) \rightarrow \mathbb{C}^{d \times d}$ is analytic.
        \item 
        For $z \in \Sigma_1 \cup \Sigma_2$, $Y$ has continuous boundary values, $Y_+$ and $Y_-$, and they satisfy jump relations of the form
        \begin{align}
            Y_+(z) &= Y_-(z)J^{(1)}(z), \quad \text{for $z \in \Sigma_1$,} \\
            Y_+(z) &= Y_-(z)J^{(2)}(z), \quad \text{for $z \in \Sigma_2$,}
        \end{align}
        where $J^{(1)}$ and $J^{(2)}$ satisfy the relations
        \begin{align}\label{jump rel1}
    J^{(2)}(u)^{-T} \vec \phi_t(u) &=  \vec \phi_t(u), && u\in\Sigma_2, \\
\label{jump rel2}    J^{(1)}(v) \vec \psi_{t'}(v) &=  \vec \psi_{t'}(v),&& v\in\Sigma_1.
\end{align}
        \item $Y(z) = \left(I_{d} + \mathcal{O}(1/z)\right)Y^{(\infty)}(z)$ as $z \rightarrow \infty$,
        for some $d\times d$ matrix-valued function $Y^{(\infty)}(z)$ which is defined for sufficiently large $z$ and satisfies $\det Y^{(\infty)}\equiv 1$.
    \end{enumerate}
    \end{problem}
Note that the canonical choice \eqref{def:canonical} corresponds to setting $J^{(1)} = J^{(2)} = Y^{(\infty)} = I_d$. One verifies that the relations \eqref{jump rel1}--\eqref{jump rel2} imply that \eqref{expr: no jump} is satisfied.

\medskip

Under the map $K \mapsto K_{\vec t, \vec a, \vec b,\lambda}$, only the expression $\vec\phi_t(u)^T Y_\pm(u)^{-1}Y_\pm(v) \vec\psi_{t'}(v)$ in \eqref{def: kernelRHP} changes. We describe this transformation with four maps
\begin{align}
    &\vec \psi_t(v) \mapsto \vec \psi_t(v;\vec t, \vec a, \vec b, \lambda),  & \vec \phi_t(u) \mapsto \vec \phi_t(u;\vec t, \vec a, \vec b, \lambda), \\
    & J^{(1)}(v) \mapsto J^{(1)}(v;\vec t, \vec a, \vec b, \lambda),  & J^{(2)}(u) \mapsto J^{(2)}(u;\vec t, \vec a, \vec b, \lambda),
\end{align}
defined as follows:
We first define,
for $j = 1, \dots, q$, two column vectors
\begin{equation} \label{def: fg}
    \vec f_j(z) := \lambda \begin{pmatrix}
        z^{b_j}  \\
        -z^{a_j}
    \end{pmatrix}, \quad \vec g_j(z) := \begin{pmatrix}
        z^{-b_j} \\ z^{-a_j}
    \end{pmatrix}.
\end{equation}
Then we define the transformed jump matrices by
\begin{multline} \label{def: J1}
    J^{(1)}(z;\vec t, \vec a, \vec b, \lambda) := \\
    \left[\begin{array}{c|cccc}
        J^{(1)}(z) & \vec \psi_{t_1}(z)W_{t_1}(z) \vec f_{1}(z)^T & \vec \psi_{t_2}(z)W_{t_2}(z)\vec f_2(z)^T & \cdots & \vec \psi_{t_q}(z)W_{t_q}(z)\vec f_q(z)^T \\ \hline
        0_{2 \times d} & I_2 & -1_{t_1>t_2} \vec g_1(z) \frac{W_{t_2}(z)}{W_{t_1}(z)}\vec f_2(z)^T & \cdots & -1_{t_1>t_q} \vec g_1(z) \frac{W_{t_q}(z)}{W_{t_1}(z)}\vec f_q(z)^T \\
         0_{2 \times d} &  0_{2 \times 2} & I_2 & \cdots & -1_{t_2>t_q} \vec g_2(z) \frac{W_{t_q}(z)}{W_{t_2}(z)}\vec f_q(z)^T \\
        \vdots & \vdots & \vdots & \ddots & \vdots \\
        0_{2 \times d} &  0 _{2\times 2} & 0_{2 \times 2} & \cdots & I_2
    \end{array}\right],
\end{multline}
where we notice that the indicators $1_{t_j>t_k}$ vanish only if $t_j=t_k$, and
\begin{equation} \label{def: J2}
    J^{(2)}(z;\vec t, \vec a, \vec b, \lambda) :=
    \left[\begin{array}{c|cccc}
        J^{(2)}(z) &  0_{d \times 2} & 0_{d \times 2} & \cdots &  0_{d \times 2} \\ \hline
        \vec g_1(z) \frac{1}{W_{t_1}(z)} \vec \phi_{t_1}(z)^T & I_2 &  0_{2\times 2} & \cdots &  0_{2\times 2} \\
        \vec g_2(z) \frac{1}{W_{t_2}(z)} \vec \phi_{t_2}(z)^T &  0_{2 \times 2} & I_2 & \cdots &  0_{2\times 2} \\
        \vdots & \vdots & \vdots & \ddots & \vdots \\
        \vec g_q(z) \frac{1}{W_{t_q}(z)} \vec \phi_{t_q}(z)^T &  0_{2\times 2} &  0_{2\times 2} & \cdots & I_2 \\
    \end{array}\right].
\end{equation}
Finally, we define the transformed column vectors by
\begin{equation} \label{def: phipsi}
    \vec \phi_t(u;\vec t, \vec a, \vec b, \lambda) := \begin{bmatrix}
        \vec \phi_t(u) \\ -1_{t>t_1} \vec f_1(u) W_{t_1}(u) \\ \vdots \\ -1_{t>t_q} \vec f_q(u) W_{t_q}(u)
    \end{bmatrix}, \quad \vec \psi_{t'}(v;\vec t, \vec a, \vec b, \lambda) := \begin{bmatrix}
        \vec \psi_{t'}(v) \\ -1_{t_1>t'} \vec g_1(v) \frac{1}{W_{t_1}(v)} \\
        \vdots \\
        -1_{t_q>t'} \vec g_q(v) \frac{1}{W_{t_q}(v)}
    \end{bmatrix}.
\end{equation}
These transformations simply append entries to the vectors and matrices depending on the choice of the set $B$ such that row and column dimensions increase from $d$ to $d+2q$. The new jump matrices $J^{(1)}$ and $J^{(2)}$ define a new RH problem for $Y(\cdot;\vec t, \vec a, \vec b, \lambda)$ of size $(d+2q)\times (d+2q)$.

\begin{problem} \label{prbl: main}
    \begin{enumerate}
            \item $Y(\cdot;\vec t, \vec a, \vec b, \lambda): \mathbb{C} \setminus (\Sigma_1 \cup \Sigma_2) \rightarrow \mathbb{C}^{(d+2q) \times (d+2q)}$ is analytic.
            \item 
            For $z \in \Sigma_1 \cup \Sigma_2$, $Y(\cdot;\vec t, \vec a, \vec b, \lambda)$ has continuous boundary values, $Y_+(z;\vec t, \vec a, \vec b, \lambda)$ and $Y_-(z;\vec t, \vec a, \vec b, \lambda)$, and they satisfy the jump relations
            \begin{align}
            Y_+(z;\vec t, \vec a, \vec b, \lambda)&=Y_-(z;\vec t, \vec a, \vec b, \lambda)J^{(1)}(z;\vec t, \vec a, \vec b, \lambda), \qquad & \text{for $z\in\Sigma_1$},\\
            Y_+(z;\vec t, \vec a, \vec b, \lambda)&=Y_-(z;\vec t, \vec a, \vec b, \lambda)J^{(2)}(z;\vec t, \vec a, \vec b, \lambda), \qquad & \text{for $z\in\Sigma_2$}.
            \end{align}
            \item $Y(z;\vec t, \vec a, \vec b, \lambda) = \left(I_{d+2q} + \mathcal{O}(1/z)\right) \begin{bmatrix}
                Y^{(\infty)}(z) & \vec0_{d \times 2q} \\
                \vec 0_{2q \times d} & I_{2q}
            \end{bmatrix}$ as $z \rightarrow \infty$.
        \end{enumerate}
\end{problem}

We are now ready to state our general result describing in full detail the transformation implied by Theorem \ref{thm:transformation}.
\begin{theorem} \label{thm: main}
    Let K belong to the space $\mathcal{S}_d(\vec W)$, given in the form \eqref{def: kernelRHP}.  Let $B_{\vec t, \vec a, \vec b}$ be a finite subset of $\Lambda_N$, given by \eqref{def: gap} for some $q\in\mathbb N$. If $1-\lambda 1_{B_{\vec t, \vec a, \vec b}}K$ is invertible as an $\ell^2(\Lambda_N)$-operator, then the operator $K_{\vec t, \vec a, \vec b, \lambda}$ given by \eqref{def: Ktransform} has integral kernel given by 
    \begin{multline}\label{def:kernelresult}
    K_{\vec t, \vec a, \vec b, \lambda} (t,h;t',h') = - 1_{t>t'} \oint_{\Sigma_1}z^{-h} \frac{W_{t'}(z)}{W_{t}(z)} z^{h'} \frac{dz}{2\pi iz} \\
    +\oint_{\Sigma_2}\oint_{\Sigma_1} u^{-h} \frac{1}{W_{t}(u)} \frac{v}{u-v} \vec \phi_t(u;\vec t, \vec a, \vec b, \lambda)^T Y_\pm(u;\vec t, \vec a, \vec b, \lambda)^{-1} \\
    \times Y_\pm(v;\vec t, \vec a, \vec b, \lambda) \vec \psi_{t'}(v;\vec t, \vec a, \vec b, \lambda) W_{t'}(v) v^{h'} \frac{dv}{2\pi iv} \frac{du}{2\pi iu},
\end{multline}
where $\vec \phi_t(u;\vec t, \vec a, \vec b, \lambda)$ and $\vec \psi_{t'}(v;\vec t, \vec a, \vec b, \lambda)$ are given by \eqref{def: phipsi} and where $Y(\cdot;\vec t, \vec a, \vec b, \lambda)$ is the unique solution of RH problem \ref{prbl: main} with jump matrices $J^{(1)}$ and $J^{(2)}$ given by \eqref{def: J1}--\eqref{def: J2}.
\end{theorem}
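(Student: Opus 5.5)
The plan is to factor $\lambda 1_{B}K$, with $B=B_{\vec t,\vec a,\vec b}$, through a function space on $\Sigma_1\cup\Sigma_2$. Then $(1-\lambda 1_BK)^{-1}$ reduces to the resolvent of an operator of integrable type, which is handled by the IIKS theory recalled around RH problem \ref{RHPUpsilonIIKS}. Finally the resulting IIKS solution $\Upsilon$ is glued to $Y$ to produce the solution of RH problem \ref{prbl: main}.

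\emph{Step 1: the geometric sums.} The only place the cluster structure enters is through sums of the form
\[
\sum_{h=a_j}^{b_j-1}(w/z)^h=\frac{z}{z-w}\Bigl(z^{-a_j}w^{a_j}-z^{-b_j}w^{b_j}\Bigr),\qquad |w|\neq|z|.
\]
Up to the sign conventions and the factor $\lambda$, this is $\frac{z}{z-w}\,\vec g_j(z)^T\vec f_j(w)$ or its transpose. This is exactly why the vectors $\vec f_j,\vec g_j$ of \eqref{def: fg} appear.

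\emph{Step 2: the factorization.} Write $\lambda 1_BK=\mathcal A\mathcal C$, where:
\begin{itemize}
\item $\mathcal C:\ell^2(\Lambda_N)\to L^2(\Sigma_1\cup\Sigma_2)^{\oplus m}$ takes the sums over $(t',h')$ inside the contour integrals, i.e.\ it produces the functions $v\mapsto\sum_{t',h'}\vec\sigma_{t'}(v)W_{t'}(v)v^{h'}\sigma(t',h')$ together with the analogous term coming from the single integral in \eqref{def: kernel};
\item $\mathcal A:L^2(\Sigma_1\cup\Sigma_2)^{\oplus m}\to\ell^2(B)$ integrates against $u^{-h}W_{t_j}(u)^{-1}\vec\rho_{t_j}(u)$ on $\Sigma_2$ and against $z^{-h}W_{t_j}(z)^{-1}$ on $\Sigma_1$ for $h$ in the $j$-th cluster.
\end{itemize}
For the single-integral term $-1_{t>t'}\oint_{\Sigma_1}$, use that $1_{t>t'}W_{t'}/W_t$ is analytic in $\mathcal U$ to deform contours wherever this is convenient.

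By the push-through identity,
\[
K(1-\mathcal A\mathcal C)^{-1}=K+K\mathcal A(1-\mathcal C\mathcal A)^{-1}\mathcal C,
\]
and $1-\lambda 1_BK$ is invertible if and only if $1-\mathcal C\mathcal A$ is invertible. Computing $\mathcal C\mathcal A$, the sum over $h'\in[a_j,b_j)$ is done by Step 1. After conjugating by $Y_\pm$ via \eqref{def:phipsi}, $\mathcal C\mathcal A$ becomes an operator with kernel $\frac{w}{z-w}F(z)^TG(w)$ on $\Sigma_1\cup\Sigma_2$. Here $F,G$ are built from $\vec\phi,\vec\psi$, from $\vec f_j W_{t_j}$ and $\vec g_jW_{t_j}^{-1}$, and carry indicators $1_{t_j>t_k}$. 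The indicators come from composing the single-integral term of $K$ between clusters at different times.

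\emph{Step 3: the RH problem.} Apply \eqref{def: Rintegrable} to obtain $\Upsilon$, of size $(d+2q)\times(d+2q)$ if the sizes are arranged correctly. Then set
\[
Y(z;\vec t,\vec a,\vec b,\lambda)=\Upsilon(z)\begin{bmatrix}Y(z)&0\\0&I_{2q}\end{bmatrix}
\]
(possibly after a constant conjugation or triangular correction). Check directly that this has the jumps \eqref{def: J1}--\eqref{def: J2} and the asymptotics of RH problem \ref{prbl: main}, using \eqref{jump rel1}--\eqref{jump rel2}. Uniqueness follows by the standard Liouville argument, since $\det J\equiv1$. Substituting $\tilde F,\tilde G$ into $K+K\mathcal A(1-\mathcal C\mathcal A)^{-1}\mathcal C$ and resumming the $h$-sums once more with Step 1 gives \eqref{def:kernelresult}. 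The extra entries $-1_{t>t_j}\vec f_jW_{t_j}$ and $-1_{t_j>t'}\vec g_jW_{t_j}^{-1}$ in \eqref{def: phipsi} arise from the single-integral parts of the outer factors $K\mathcal A$ and $\mathcal C$.

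\emph{Main obstacle.} I expect the hardest part to be the bookkeeping of the time-ordered single integrals. Several things must be done simultaneously:
\begin{itemize}
\item the geometric sums must converge, which fixes on which contour ($\Sigma_1$ or $\Sigma_2$) each variable lives, and requires contour deformations within $\mathcal U$;
\item the off-diagonal blocks $-1_{t_j>t_k}\vec g_j\frac{W_{t_k}}{W_{t_j}}\vec f_k^T$ must land in $J^{(1)}$ and not in $J^{(2)}$;
\item the upper-triangular structure with respect to $t_1\ge\cdots\ge t_q$ must be respected.
\end{itemize}
I would first carry out the case $q=1$ with the canonical choice $Y\equiv I_d$ to fix all signs. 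I would then treat general $q$ by induction, conditioning on one cluster at a time. This is consistent with the fact that each cluster appends exactly a $2\times2$ block.
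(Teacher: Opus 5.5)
Your outline has the right ingredients: geometric sums producing $\vec f_j,\vec g_j$, IIKS, the dressing $\Upsilon\,\mathrm{diag}(Y,I)$, and one cluster at a time. But Step~2 fails as stated, and it is exactly the step where the paper needs its key idea.

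\textbf{Why Step 2 fails.} In any splitting $\lambda 1_BK=\mathcal A\mathcal C$, the Cauchy kernel $\frac{v}{u-v}$ joining $\Sigma_1$ to $\Sigma_2$ must sit inside one of the two factors. As a result, $\mathcal C\mathcal A$ contains the \emph{composition} of the geometric-sum kernel with that Cauchy kernel. Take $q=1$ and let $\mathcal C$ produce the Cauchy transform on $\Sigma_2$, so that $\mathcal A$ is a single integral as you describe. Then the $\Sigma_2\to\Sigma_2$ block of $\mathcal C\mathcal A$ has kernel
\[
\lambda\oint_{\Sigma_1}\frac{v}{u'-v}\,\vec\sigma_{t_1}(v)W_{t_1}(v)\Bigl(\sum_{h=a_1}^{b_1-1}v^{h}u^{-h}\Bigr)\frac{dv}{2\pi i v}\;\frac{\vec\rho_{t_1}(u)^T}{W_{t_1}(u)},\qquad u,u'\in\Sigma_2 .
\]
This kernel is either of finite rank $b_1-a_1$, which gives an RH problem whose size grows with the cluster. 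Or, after partial fractions in $v$, it is of integrable type with the Cauchy transforms $\oint_{\Sigma_1}\frac{v^{k}\vec\sigma_{t_1}(v)W_{t_1}(v)}{u'-v}\frac{dv}{2\pi i}$, $k\in\{a_1,b_1\}$, among the entries of $F$. It is not of integrable type with $F,G$ built from $\vec\phi,\vec\psi,\vec f_jW_{t_j},\vec g_j/W_{t_j}$ as you claim. IIKS would then give a problem with a jump on $\Sigma_2$ only, involving these Cauchy transforms, not the jumps \eqref{def: J1}--\eqref{def: J2}. Putting the Cauchy kernel into $\mathcal A$ instead produces the same composition on $\Sigma_1\times\Sigma_1$. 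No push-through can fix this, because $\mathcal A\mathcal C$ and $\mathcal C\mathcal A$ have the same nonzero spectrum.

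\textbf{The missing idea.} The paper restricts to three time levels and conjugates by the Fourier series, so that $\lambda\widehat 1_B\widehat K$ becomes $L=-P^{11}V+P^{12}Q$. It then uses the Bertola--Cafasso linearization: $1-L$ is the Schur complement of the block operator $1-M$ on $L^2(\Sigma_1)\oplus L^2(\Sigma_2)$,
\[
(1-L)^{-1}=\begin{bmatrix}1&0\end{bmatrix}(1-M)^{-1}\begin{bmatrix}1\\0\end{bmatrix},\qquad M=\begin{bmatrix}-P^{11}V&P^{12}\\ Q&0\end{bmatrix},
\]
as in \eqref{eq:resolvent1}. Each block of $M$ is separately a Cauchy-type kernel with simple data, so $M$ is integrable with exactly the $F_i,G_i$ of \eqref{def: F1}--\eqref{def: G2}. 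The composition $P^{12}Q$ is unfolded into two off-diagonal blocks and never computed.

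This is not a push-through. If $V=0$, then $M^2=P^{12}Q\oplus QP^{12}$, so the nonzero eigenvalues of $M$ are the numbers $\pm\sqrt{\nu}$, with $\nu$ an eigenvalue of $L$. Hence $M$ is not $\mathcal C\mathcal A$ for any $\mathcal A\mathcal C=L$; only $\det(1-M)=\det(1-L)$ and the corner identity survive. Your route could be rescued only by an additional RH transformation that trades those Cauchy transforms for a jump on $\Sigma_1$, which is the same unfolding done at the RH level, and your plan does not contain it.

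\textbf{Two smaller points.} First, the $\Sigma_1\times\Sigma_1$ term $-\oint\oint z^{-h}V(z)R(z,w)w^{h'}\cdots$ must be moved to $\Sigma_2\times\Sigma_1$. This uses the fact that $F_1^T\Upsilon^{-1}$ has no jumps, and it is where the entries $-1_{t>t_1}\vec f_1W_{t_1}$ of $\vec\phi_t(\cdot;\vec t,\vec a,\vec b,\lambda)$ come from. Second, your induction on clusters needs each intermediate operator $1-\lambda\sum_{j\le p}1_{B_{t_j,a_j,b_j}}K$ to be invertible. The paper secures this by continuity in $\lambda$: these determinants are polynomials in $\lambda$, so there are only finitely many bad values.
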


\begin{remark}
    Theorem \ref{thm:transformation} follows directly from Theorem \ref{thm: main}. Indeed, it is left to verify that $K_{\vec t, \vec a, \vec b,\lambda}$ belongs to the space $\mathcal{S}_{d+2q}(\vec W)$. This follows from observing that the new functions $\frac{1}{W_t}\vec \phi_t(\cdot; \vec t, \vec a, \vec b, \lambda)$ and $W_t\vec \psi_t(\cdot; \vec t, \vec a, \vec b, \lambda)$  remain analytic in neighbourhoods of $\Sigma_2$ and $\Sigma_1$ respectively and furthermore, for $u \in \Sigma_2$ and $v \in \Sigma_1$,
    \begin{align}
        J^{(2)}(u; \vec t, \vec a, \vec b, \lambda)^{-T} \vec \phi_t(u; \vec t, \vec a, \vec b, \lambda) &= \phi_t(u; \vec t, \vec a, \vec b, \lambda), \\
        J^{(1)}(v; \vec t, \vec a, \vec b, \lambda) \vec \psi_t(v; \vec t, \vec a, \vec b, \lambda) &= \psi_t(v; \vec t, \vec a, \vec b, \lambda),
    \end{align}
    such that $\frac{1}{W_t}\vec \rho_t(\cdot; \vec t, \vec a, \vec b, \lambda)$ and $W_t\vec \sigma_t(\cdot; \vec t, \vec a, \vec b, \lambda)$ remain analytic in neighbourhoods of $\Sigma_2$ and $\Sigma_1$ as well.
\end{remark}

\begin{remark}
We recall that if $K$ is the correlation kernel of a determinantal point process on $\Lambda_N$, then, for $\lambda \in (0,1)$, the kernel $\left(1 - \lambda 1_{B_{\vec t, \vec a, \vec b}} \right) K_{\vec t,\vec a,\vec b,\lambda}$ is the correlation kernel of the conditional point process $\mathbb P_{\vec t,\vec a,\vec b,\lambda}$, whereas $K_{\vec t,\vec a,\vec b}$ restricted to $\left( \Lambda \setminus B_{\vec t, \vec a, \vec b} \right) \times \left( \Lambda \setminus B_{\vec t, \vec a, \vec b} \right)$ is the correlation kernel of the conditional point process $\mathbb P_{\vec t,\vec a,\vec b}$.
\end{remark}

We can also use the solution of RH problem \ref{prbl: main} to obtain an exact expression of the Fredholm determinant $\det \left(1 - \lambda 1_{B_{\vec t, \vec a, \vec b}}K\right)_{\ell^2(\Lambda_N)}$. To describe this we first introduce a certain raising and lowering of the subset $B_{\vec t, \vec a, \vec b}$. Firstly, for each cluster $B_{t,a,b}$ given by \eqref{def: pregap}, we define the raised cluster by
\begin{equation} \label{def: pregap+}
    B_{t,a,b}^+ := B_{t,a,b+1},
\end{equation}
and the lowered cluster by
\begin{equation} \label{def: pregap-}
    B_{t,a,b}^- := B_{t,a+1,b}.
\end{equation}
For $B_{\vec t, \vec a, \vec b}$ given by \eqref{def: gap}, we define, for every $j = 1, \dots, q$, the $j$-th raised subset by
\begin{equation} \label{def: gap+}
    B_{\vec t, \vec a, \vec b}^{j+} := B_{t_j,a_j,b_j}^+ \cup B_{\vec t_{\setminus j}, \vec a_{\setminus j}, \vec b_{\setminus j}},
\end{equation}
and the $j$-th lowered subset by
\begin{equation} \label{def: gap-}
    B_{\vec t, \vec a, \vec b}^{j-} := B_{t_j,a_j,b_j}^- \cup B_{\vec t_{\setminus j}, \vec a_{\setminus j}, \vec b_{\setminus j}},
\end{equation}
where $\vec t_{\setminus j}$ is the vector $\vec t$ without its $j$-th component and likewise for $\vec a_{\setminus j}$ and $\vec b_{\setminus j}$. 

In other words, to obtain the $j$-th raised or lowered subset of $B_{\vec t, \vec a, \vec b}$, we either raise or lower the $j$-th cluster $B_{t_j,a_j,b_j}$, see Figure \ref{fig:latticepm}.

\begin{figure}
    \begin{subfigure}{0.5\textwidth}
    \begin{center}
    \begin{tikzpicture}
        \tikzset{
            dots/.pic={
                \node at (0pt,5pt) [circle, inner sep = 0.5pt, fill=black, draw] {};
                \node at (0pt,0pt) [circle, inner sep = 0.5pt, fill=black, draw] {};
                \node at (0pt,-5pt) [circle, inner sep = 0.5pt, fill=black, draw] {};
            }
        }

        \tikzset{
            cross/.pic={
                \draw (-3pt,-3pt) edge [thick] (3pt,3pt);
                \draw (-3pt,3pt) edge [thick] (3pt,-3pt);
            }
        }

        \node (1) at (-60pt,-120pt)     [black]   {$1$};
        \node (2) at (-30pt,-120pt)     [black]   {$2$};
        \node (3) at (0pt,-120pt)     [black]   {$3$};
        \node (4) at (30pt,-120pt)     [black]   {$4$};
        \node (5) at (60pt,-120pt)     [black]   {$5$};

        \pic at (-90pt, 120pt) {dots};
        \pic at (-60pt, 120pt) {dots};
        \pic at (-30pt, 120pt) {dots};
        \pic at (0pt, 120pt) {dots};
        \pic at (30pt, 120pt) {dots};
        \pic at (60pt, 120pt) {dots};

        \pic at (-90pt, -90pt) {dots};
        \pic at (-60pt, -90pt) {dots};
        \pic at (-30pt, -90pt) {dots};
        \pic at (0pt, -90pt) {dots};
        \pic at (30pt, -90pt) {dots};
        \pic at (60pt, -90pt) {dots};

        \node (3) at (-90pt,90pt)     [black]   {$3$};
        \node at (-60pt,90pt) [circle, inner sep = 3pt, fill=black, draw] {};
        \node at (-30pt,90pt) [circle, inner sep = 3pt, fill=black, draw] {};
        \node at (0pt,90pt) [circle, inner sep = 3pt, fill=black, draw] {};
        \node at (30pt,90pt) [circle, inner sep = 3pt, fill=black, draw] {};
        \node at (60pt,90pt) [circle, inner sep = 3pt, fill=black, draw] {};
        
        \node (2) at (-90pt,60pt)     [black]   {$2$};
        \node at (-60pt,60pt) [circle, inner sep = 3pt, fill=black, draw] {};
        \pic at (-30pt,60pt) {cross};
        \pic at (0pt,60pt) {cross};
        \node at (30pt,60pt) [circle, inner sep = 3pt, fill=black, draw] {};
        \pic at (60pt,60pt) {cross};

        \node (1) at (-90pt,30pt)     [black]   {$1$};
        \node at (-60pt,30pt) [circle, inner sep = 3pt, fill=black, draw] {};
        \pic at (-30pt,30pt) {cross} ;
        \pic at (0pt,30pt) {cross} ;
        \node at (30pt,30pt) [circle, inner sep = 3pt, fill=black, draw] {};
        \pic at (60pt,30pt) {cross} ;

        \node (0) at (-90pt,0pt)     [black]   {$0$};
        \node at (-60pt,0pt) [circle, inner sep = 3pt, fill=black, draw] {};
        \node at (-30pt,0pt) [circle, inner sep = 3pt, fill=black, draw] {};
        \pic at (0pt,0pt) {cross} ;
        \node at (30pt,0pt) [circle, inner sep = 3pt, fill=black, draw] {};
        \pic at (60pt,0pt) {cross} ;

        \node (-1) at (-90pt,-30pt)     [black]   {$-1$};
        \node at (-60pt,-30pt) [circle, inner sep = 3pt, fill=black, draw] {};
        \pic at (-30pt,-30pt) {cross} ;
        \node at (0pt,-30pt) [circle, inner sep = 3pt, fill=black, draw] {};
        \node at (30pt,-30pt) [circle, inner sep = 3pt, fill=black, draw] {};
        \node at (60pt,-30pt) [circle, inner sep = 3pt, fill=black, draw] {};

        \node (-2) at (-90pt,-60pt)     [black]   {$-2$};
        \node at (-60pt,-60pt) [circle, inner sep = 3pt, fill=black, draw] {};
        \node at (-30pt,-60pt) [circle, inner sep = 3pt, fill=black, draw] {};
        \node at (0pt,-60pt) [circle, inner sep = 3pt, fill=black, draw] {};
        \node at (30pt,-60pt) [circle, inner sep = 3pt, fill=black, draw] {};
        \node at (60pt,-60pt) [circle, inner sep = 3pt, fill=black, draw] {};

        \draw [<->] (-75pt,-120pt) -- (-75pt,135pt);
        \draw [-] (-75pt,-105pt) -- (75pt,-105pt);
        \node (h) at (-75pt,140pt) [black] {$h$};
        \node (t) at (80pt,-105pt) [black] {$t$};
    \end{tikzpicture} 
    \end{center}
    \end{subfigure}
    \begin{subfigure}{0.5\textwidth}
    \begin{center}
    \begin{tikzpicture}
        \tikzset{
            dots/.pic={
                \node at (0pt,5pt) [circle, inner sep = 0.5pt, fill=black, draw] {};
                \node at (0pt,0pt) [circle, inner sep = 0.5pt, fill=black, draw] {};
                \node at (0pt,-5pt) [circle, inner sep = 0.5pt, fill=black, draw] {};
            }
        }

        \tikzset{
            cross/.pic={
                \draw (-3pt,-3pt) edge [thick] (3pt,3pt);
                \draw (-3pt,3pt) edge [thick] (3pt,-3pt);
            }
        }

        \node (1) at (-60pt,-120pt)     [black]   {$1$};
        \node (2) at (-30pt,-120pt)     [black]   {$2$};
        \node (3) at (0pt,-120pt)     [black]   {$3$};
        \node (4) at (30pt,-120pt)     [black]   {$4$};
        \node (5) at (60pt,-120pt)     [black]   {$5$};

        \pic at (-90pt, 120pt) {dots};
        \pic at (-60pt, 120pt) {dots};
        \pic at (-30pt, 120pt) {dots};
        \pic at (0pt, 120pt) {dots};
        \pic at (30pt, 120pt) {dots};
        \pic at (60pt, 120pt) {dots};

        \pic at (-90pt, -90pt) {dots};
        \pic at (-60pt, -90pt) {dots};
        \pic at (-30pt, -90pt) {dots};
        \pic at (0pt, -90pt) {dots};
        \pic at (30pt, -90pt) {dots};
        \pic at (60pt, -90pt) {dots};

        \node (3) at (-90pt,90pt)     [black]   {$3$};
        \node at (-60pt,90pt) [circle, inner sep = 3pt, fill=black, draw] {};
        \node at (-30pt,90pt) [circle, inner sep = 3pt, fill=black, draw] {};
        \node at (0pt,90pt) [circle, inner sep = 3pt, fill=black, draw] {};
        \node at (30pt,90pt) [circle, inner sep = 3pt, fill=black, draw] {};
        \node at (60pt,90pt) [circle, inner sep = 3pt, fill=black, draw] {};
        
        \node (2) at (-90pt,60pt)     [black]   {$2$};
        \node at (-60pt,60pt) [circle, inner sep = 3pt, fill=black, draw] {};
        \pic at (-30pt,60pt) {cross};
        \node at (0pt,60pt) [circle, inner sep = 3pt, fill=black, draw] {};
        \node at (30pt,60pt) [circle, inner sep = 3pt, fill=black, draw] {};
        \pic at (60pt,60pt) {cross};

        \node (1) at (-90pt,30pt)     [black]   {$1$};
        \node at (-60pt,30pt) [circle, inner sep = 3pt, fill=black, draw] {};
        \pic at (-30pt,30pt) {cross} ;
        \pic at (0pt,30pt) {cross} ;
        \node at (30pt,30pt) [circle, inner sep = 3pt, fill=black, draw] {};
        \pic at (60pt,30pt) {cross} ;

        \node (0) at (-90pt,0pt)     [black]   {$0$};
        \node at (-60pt,0pt) [circle, inner sep = 3pt, fill=black, draw] {};
        \node at (-30pt,0pt) [circle, inner sep = 3pt, fill=black, draw] {};
        \node at (-0pt,0pt) [circle, inner sep = 3pt, fill=black, draw] {};
        \node at (30pt,0pt) [circle, inner sep = 3pt, fill=black, draw] {};
        \pic at (60pt,0pt) {cross} ;

        \node (-1) at (-90pt,-30pt)     [black]   {$-1$};
        \node at (-60pt,-30pt) [circle, inner sep = 3pt, fill=black, draw] {};
        \pic at (-30pt,-30pt) {cross} ;
        \node at (0pt,-30pt) [circle, inner sep = 3pt, fill=black, draw] {};
        \node at (30pt,-30pt) [circle, inner sep = 3pt, fill=black, draw] {};
        \node at (60pt,-30pt) [circle, inner sep = 3pt, fill=black, draw] {};

        \node (-2) at (-90pt,-60pt)     [black]   {$-2$};
        \node at (-60pt,-60pt) [circle, inner sep = 3pt, fill=black, draw] {};
        \node at (-30pt,-60pt) [circle, inner sep = 3pt, fill=black, draw] {};
        \node at (0pt,-60pt) [circle, inner sep = 3pt, fill=black, draw] {};
        \node at (30pt,-60pt) [circle, inner sep = 3pt, fill=black, draw] {};
        \node at (60pt,-60pt) [circle, inner sep = 3pt, fill=black, draw] {};

        \draw [<->] (-75pt,-120pt) -- (-75pt,135pt);
        \draw [-] (-75pt,-105pt) -- (75pt,-105pt);
        \node (h) at (-75pt,140pt) [black] {$h$};
        \node (t) at (80pt,-105pt) [black] {$t$};
    \end{tikzpicture}
    \end{center}
    \end{subfigure}

    \caption{Two illustrations of $\Lambda_5$ with the sets $B_{\vec t, \vec a, \vec b}^{2+}$ (left) and $B_{\vec t, \vec a, \vec b}^{2-}$ (right) marked with crosses, where $\vec t = (5,3,2,2)$, $\vec a=(0, 0, 1, -1)$, and $\vec b = (3, 2, 3, 0)$. Compare with Figure \ref{fig:lattice} and note the extra cross in the position $(3,2)$ on the left and the missing cross in the position $(3,0)$ on the right.}
    \label{fig:latticepm}
\end{figure}

\begin{theorem} \label{thm: ratio}
    Let K belong to the space $\mathcal{S}_d(\vec W)$, given in the form \eqref{def: kernelRHP}.  Let $B_{\vec t, \vec a, \vec b}$ be a finite subset of $\Lambda_N$, given by \eqref{def: gap} for some $q\in\mathbb N$. If $1-\lambda 1_{B_{\vec t, \vec a, \vec b}}K$ is invertible as an $\ell^2(\Lambda_N)$-operator, then the following identities hold:
    \begin{align}
    &\label{eq:ratioid1}
        \frac{\det\left(1 - \lambda 1_{B_{\vec t, \vec a, \vec b}^{j+}}K\right)}{\det\left(1 - \lambda 1_{B_{\vec t, \vec a, \vec b}}K\right)} = Y^{-T}(0;\vec t, \vec a, \vec b, \lambda)_{d+2j-1,d+2j-1}, \\ 
        &\label{eq:ratioid2}\frac{\det\left(1 - \lambda 1_{B_{\vec t, \vec a, \vec b}^{j-}}K\right)}{\det\left(1 - \lambda 1_{B_{\vec t, \vec a, \vec b}}K\right)} = Y^{-T}(0;\vec t, \vec a, \vec b, \lambda)_{d+2j,d+2j},
    \end{align}where $Y(\cdot;\vec t, \vec a, \vec b, \lambda)$ is the unique solution of RH problem \ref{prbl: main} with jump matrices $J^{(1)}$ and $J^{(2)}$ given by \eqref{def: J1}--\eqref{def: J2} and $Y^{-T}$ is the inverse transpose of $Y$.
\end{theorem}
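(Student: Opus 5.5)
The plan is to reduce each determinant ratio to a rank-one perturbation and then to a resolvent evaluated at a single point, which the RH problem encodes at $z=0$. Raising the $j$-th cluster adds the single site $x_+=(t_j,b_j)$ to $B:=B_{\vec t,\vec a,\vec b}$. Lowering removes the single site $x_-=(t_j,a_j)$. Hence $1_{B^{j+}}=1_B+\delta_{x_+}$ and $1_{B^{j-}}=1_B-\delta_{x_-}$, where $\delta_x$ is the projection onto the site $x$. Factoring out $1-\lambda 1_BK$ gives
\[
\frac{\det(1-\lambda 1_{B^{j\pm}}K)}{\det(1-\lambda 1_BK)}=\det\bigl(1\mp\lambda\,\delta_{x_\pm}K(1-\lambda 1_BK)^{-1}\bigr)=1\mp\lambda K_{\vec t,\vec a,\vec b,\lambda}(x_\pm;x_\pm),
\]
because the perturbation has rank one. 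So the statement reduces to evaluating the diagonal of the transformed kernel \eqref{def:kernelresult} at the two boundary sites $(t_j,b_j)$ and $(t_j,a_j)$.

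Next I will express $\lambda K_{\vec t,\vec a,\vec b,\lambda}(t_j,b_j;t',h')$ through the RH solution. I will use Theorem \ref{thm: main} and the structure of the appended entries in \eqref{def: phipsi}, in which $\vec f_j$ carries $\lambda z^{b_j}$ and $-\lambda z^{a_j}$. The key identity I aim for is
\[
\oint_{\Sigma_2}u^{-h}\frac{1}{W_t(u)}\vec\phi_t(u;\cdot)^TY_\pm(u;\cdot)^{-1}\frac{du}{2\pi i u}
\]
paired against $\frac{v}{u-v}$, taken at $h=b_j$ and $t=t_j$. This $u$-integral should be computed from the jump of $Y^{-T}$ on $\Sigma_2$. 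By \eqref{def: J2}, the jump on $\Sigma_2$ in rows $d+2j-1,d+2j$ is driven exactly by $\vec g_j(u)\frac{1}{W_{t_j}(u)}\vec\phi_{t_j}(u)^T$, and $\vec g_j$ contains $u^{-b_j}$ and $u^{-a_j}$. By the Sokhotski--Plemelj formula, the Cauchy integral of $u^{-b_j}W_{t_j}^{-1}\vec\phi_{t_j}^TY_-^{-1}$ over $\Sigma_2$ equals a column of $Y^{-T}(v)-$(its behaviour at infinity). Concretely, I will show that $Y^{-T}(z)e_{d+2j-1}$, and $Y^{-T}(z)e_{d+2j}$, equals $e_{d+2j-1}$ (respectively $e_{d+2j}$) minus a Cauchy transform whose density is essentially the integrand defining $K_{\vec t,\vec a,\vec b,\lambda}((t_j,b_j);\cdot)$ (respectively $((t_j,a_j);\cdot)$), with $v$ taken inside $\Sigma_1$. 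After this step the dependence on $v$ collapses.

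Then I will evaluate at the diagonal. Summing $v^{h'}$ at $h'=b_j$ and $t'=t_j$ leaves a $v$-integral over $\Sigma_1$ of the form $\oint v^{b_j}(\cdots)\frac{dv}{2\pi i v}$. Combined with $\vec g_j(v)$ containing $v^{-b_j}$, this is a residue at $v=0$, which gives $Y^{-T}(0)$ in position $(d+2j-1,d+2j-1)$. The same computation with $a_j$, where the sign of $\vec f_j$ is reversed, gives the $(d+2j,d+2j)$ entry. I also have to handle the term $-1_{t>t'}$: on the diagonal $t=t'$ it vanishes, but I will still need to check that the indicators $1_{t_k>t_j}$ in the blocks of \eqref{def: J1} behave consistently here.

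The main obstacle is step two: identifying exactly the entries of $Y^{-T}(0)$ with $1\mp\lambda K_{\vec t,\vec a,\vec b,\lambda}(x_\pm;x_\pm)$. This requires careful bookkeeping of the triangular block structure of \eqref{def: J1}--\eqref{def: J2}, the normalization at infinity in RH problem \ref{prbl: main} (the lower $2q$ block is $I_{2q}$, so only those diagonal entries are clean), and the analyticity of $Y$ near $0$ inside $\Sigma_1$. My first attempt would be to write $Y^{-T}(0)_{d+2j-1,d+2j-1}$ as a sum of Cauchy integrals over $\Sigma_1\cup\Sigma_2$ via the jumps, and then match the result term by term with \eqref{def:kernelresult}.
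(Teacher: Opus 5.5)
Your first step, the rank-one reduction of both ratios to $1\mp\lambda K_{\vec t,\vec a,\vec b,\lambda}(x_\pm;x_\pm)$, is exactly the paper's. Computing the $u$-integral as a Cauchy transform of the row $r(z):=e^TY(z)^{-1}$ is a workable alternative to the paper's contour swapping; here $e:=e_{d+2j-1}$, $Y:=Y(\cdot;\vec t,\vec a,\vec b,\lambda)$, and $\vec\phi_{t_j},\vec\psi_{t_j}$ denote the transformed vectors. The identity you aim for is $r(z)=e^T-\oint_{\Sigma_2}\frac{D(u)}{u-z}\frac{du}{2\pi i}$, with $D(u)=u^{-b_j}W_{t_j}(u)^{-1}\vec\phi_{t_j}(u)^TY(u)^{-1}$. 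It does hold for $z$ inside $\Sigma_1$, but not for the reason you give:
\begin{itemize}
\item The jump of $r$ on $\Sigma_2$ involves only the original $d$-vector $\vec\phi_{t_j}$, not the appended entries $-1_{t_j>t_k}\vec f_kW_{t_k}$, $k>j$, of the transformed one.
\item $r$ also jumps on $\Sigma_1$, since block row $j$ of $J^{(1)}(\cdot;\vec t,\vec a,\vec b,\lambda)$ contains $-1_{t_j>t_k}\vec g_j\frac{W_{t_k}}{W_{t_j}}\vec f_k^T$.
\end{itemize}
In the multi-time case these two discrepancies cancel only after deforming through the annulus and using $\vec f_k^T\vec g_k\equiv 0$. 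The paper packages this as the transformation $\tilde Y=Y\tilde J$ between $\Sigma_1$ and $\Sigma_2$. That transformation removes the $\Sigma_1$-jump of row $d+2j-1$ and makes its $\Sigma_2$-jump exactly $D$. This ingredient is absent from your plan.

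More seriously, your last step would not work as described. Block $j$ of $\vec\psi_{t_j}(v;\vec t,\vec a,\vec b,\lambda)$ is zero, so no $\vec g_j(v)$ or $v^{-b_j}$ appears in the $v$-integrand. Also, $Y_-(v)$ is the exterior boundary value on $\Sigma_1$, so there is no residue at $v=0$ to take. The point $0$ enters instead through the $u$-measure: $\frac{v}{u(u-v)}=\frac{1}{u-v}-\frac{1}{u}$, so the $u$-integral equals $r(0)-r_+(v)$. The $r_+(v)$-term drops out because $r_+(v)Y_-(v)\vec\psi_{t_j}(v)=e^TJ^{(1)}(v;\vec t,\vec a,\vec b,\lambda)^{-1}\vec\psi_{t_j}(v)=e^T\vec\psi_{t_j}(v)=0$.

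What remains is $\lambda\oint_{\Sigma_1}Y_-(v)\vec\psi_{t_j}(v)W_{t_j}(v)v^{b_j}\frac{dv}{2\pi iv}$. This is the integral of the jump $c_+-c_-$ of the column $c:=Ye$. Indeed, column $d+2j-1$ of $J^{(1)}(\cdot;\vec t,\vec a,\vec b,\lambda)$ equals $e+\lambda z^{b_j}W_{t_j}(z)\vec\psi_{t_j}(z)$, while $c$ has no jump on $\Sigma_2$. Taking the residue at $0$ for $c_+$ and at $\infty$ for $c_-$ gives $Y(0)e-e$. Hence
\[
1-\lambda K_{\vec t,\vec a,\vec b,\lambda}(x_+;x_+)=1-e^TY(0)^{-1}\bigl(Y(0)e-e\bigr)=(Y(0)^{-1})_{d+2j-1,d+2j-1}.
\]
Both $Y(0)^{-1}$ and $Y(0)$ are needed, so the $v$-integral by itself cannot produce $Y^{-T}(0)$.

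The paper reaches the same result by a different route. It replaces both factors by jump differences and discards terms by analyticity. It then swaps $\Sigma_1$ and $\Sigma_2$, picking up the residue at $u=v$, and takes residues at $u=0$ and $v=\infty$. Either way, the actual proof consists of three ingredients: the $\tilde Y$/orthogonality step, the vanishing of $e^T\vec\psi_{t_j}$, and the column-jump evaluation. Your proposal leaves all three out.
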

\begin{remark}
By lowering the set $B_{t, a, b}$ repeatedly until it is empty, we can take a telescopic product of \eqref{eq:ratioid2} in order to obtain an expression for
$\det\left(1-\lambda K\right)_{\ell^2(B_{t,a,b})}$ as a product of entries of a sequence of RH problems, each corresponding to a subset of $B_{t,a,b}$.
\end{remark}
\begin{remark}\label{remark:inversetranspose}
Note that the ratio identities \eqref{eq:ratioid1}--\eqref{eq:ratioid2} provide local expressions in terms of the RH solution $Y$, evaluated at a single point $z=0$; in contrast to our expression \eqref{def:kernelresult} for the conditional correlation kernel, they do not involve any integration. It is also worth noting that there is no need to compute the inverse transpose of $Y$ to use these ratio identities. Indeed, $Y^{-T}$ satisfies the same RH problem \ref{prbl: main}, but with the jump matrices $J^{(1)}, J^{(2)}$ and the asymptotic matrix $Y^{(\infty)}$ replaced by their inverse transposes. Thus, one can use the ratio identities directly after analysing this modified RH problem.
\end{remark}
\begin{remark}
    If $B_{\vec t, \vec a, \vec b}$ is supported in a single time-slice of $\Lambda_N$, i.e. $t_1 = \dots = t_q$, Theorem \ref{thm: ratio} was already obtained in \cite[Theorem 4.2 with $\gamma_j = \lambda$]{charlier2025countingdominolozengetilings}.
    %It would be possible to combine their result with Theorem \ref{thm: main} in order to prove Theorem \ref{thm: ratio} as a corollary. However, since we choose to take different notation as what is used in \cite{charlier2025countingdominolozengetilings}, and we also wish to demonstrate the close connection between Theorems \ref{thm: main} and \ref{thm: ratio}, we have chosen to provide the proof of Theorem \ref{thm: ratio} in full.
\end{remark}

Our methodology to prove Theorems \ref{thm: main} and \ref{thm: ratio} is inspired by a method developed by Bertola and Cafasso in \cite{Bertola_2012} which allows us to characterise Fredholm determinants of double-contour integral kernels with RH problems. Their method is based on a clever decomposition of the associated operators involving the direct and inverse Fourier transforms. By conjugational invariance and other properties of the Fredholm determinant, this decomposition implies that the Fredholm determinant is equal to the Fredholm determinant of a simpler integrable kernel. The latter can then be related to a RH problem by the Its-Izergin-Korepin-Slavnov method \cite{IIKS}.
The method of \cite{Bertola_2012} was designed for continuous determinantal point processes and it has recently been adapted to a discrete setting in \cite{charlier2025countingdominolozengetilings}.
We apply the same techniques here, but generalise the results from \cite{charlier2025countingdominolozengetilings} in two ways: firstly, we extend to a multi-time setting; and secondly, we lift the whole method from the study of Fredholm determinants (or void probabilities) to the study of resolvent kernels (or conditional kernels of the point processes) which carry much more information about the underlying point processes.

\medskip

The single-time result from \cite{charlier2025countingdominolozengetilings}
was applied to the case of domino tilings of reduced Aztec diamonds in \cite{CharlierClaeys2025asymptotics}, and an asymptotic analysis of the RH problem led to strong asymptotics for the (weighted) number of domino tilings on such reduced domains. Our Theorem \ref{thm: main} gives access to more involved probabilistic quantities than only numbers of tilings, through the conditional correlation kernel. We intend to study asymptotics for finer microscopic correlations in this model in future work.

\medskip

More generally, our expression for the conditional correlation kernel provides a starting point for asymptotic analysis of general Schur processes on restricted domains. This situation is very similar to that of the Its-Izergin-Korepin-Slavnov method \cite{IIKS}, which provides a starting point to study asymptotics for conditional probabilities in determinantal point processes with integrable correlation kernels, and to that of the Bertola-Cafasso method \cite{Bertola_2012}, which provides a starting point to study void probabilities in determinantal point processes with specific double contour integral kernels.

\paragraph{Outline.}
We will prove Theorem \ref{thm: main} in Section \ref{section:proof1}. We will first prove that the conditional kernels admit a double-contour integral form, by using techniques similar to those of \cite{Bertola_2012, charlier2025countingdominolozengetilings}. Then we will use the theory of integrable operators developed in \cite{IIKS, DIZ, HarnadIts} to characterise the integrands by RH problems.
In Section \ref{section: ratio}, we will prove Theorem \ref{thm: ratio} as a consequence of Theorem \ref{thm: main}. 
Afterwards in Sections \ref{section:4} and \ref{section:5}, we will specialize our results to the case of domino tilings of reduced Aztec diamonds and lozenge tilings of hexagons with holes.

\section{The conditioned kernel}
\label{section:proof1}

\subsection{Reduction to $q=1$} \label{section 2.1}

We begin this section with the assumption that $K$ belongs to the space $\mathcal{S}_d(\vec W)$, given either in the form \eqref{def: kernel} or in the form \eqref{def: kernelRHP}. We consider $B_{\vec t, \vec a, \vec b}$ given by \eqref{def: gap}, for some $q\in\mathbb N$, and $\lambda \in \C$, such that $1-\lambda 1_{B_{\vec t, \vec a, \vec b}}K$ is invertible as an $\ell^2(\Lambda_N)$-operator. We make the observation that, for the proof of Theorem \ref{thm: main}, we can reduce to the case $q=1$. This follows from the fact that 
\begin{align}
    \left(K_{t_1,a_1,b_1, \lambda}\right)_{t_2,a_2,b_2,\lambda} &= K_{t_1, a_1, b_1,\lambda} \left(1 - \lambda 1_{B_{t_2, a_2, b_2}}K_{t_1, a_1, b_1}\right)^{-1} \nonumber \\ 
    & = K \left(1 - \lambda 1_{B_{t_1, a_1, b_1}}K\right)^{-1} \left[1 - \lambda 1_{B_{t_2, a_2, b_2}}K \left(1 - \lambda 1_{B_{t_1, a_1, b_1}}K\right)^{-1} \right]^{-1} \nonumber \\
    & = K \left(1 - \lambda 1_{B_{t_1, a_1, b_1}}K - \lambda 1_{B_{t_2, a_2, b_2}}K \right)^{-1} \nonumber \\
    &= K \left(1 - \lambda 1_{B_{t_1,t_2;a_1,a_2;b_1,b_2}}K\right)^{-1} \nonumber \\
    &= K_{t_1,t_2;a_1,a_2;b_1,b_2;\lambda}.
\end{align}
We therefore see that we can construct $K_{\vec t,\vec a,\vec b,\lambda}$ iteratively by adding one cluster $B_{t_j,a_j,b_j}$ at a time in the following manner:
\[K \mapsto K_{t_1,a_1,b_1,\lambda} \mapsto K_{t_1,t_2;a_1,a_2;b_1,b_2;\lambda} \mapsto \dots \mapsto K_{\vec t,\vec a,\vec b,\lambda}.\]

However, this procedure relies on the assumption that not only $1-\lambda 1_{B_{\vec t, \vec a, \vec b}}K$ is invertible, but also that, for any $p = 1, \dots, q$,
\[1 - \lambda \sum_{j=1}^{p} 1_{B_{t_j,a_j,b_j}}K\]
is invertible. This is true if $K$ defines a determinantal point process on $\Lambda_N$; the Fredholm determinants of each operator can be shown to be non-zero via the corresponding gap probabilities, but it is not true for general integral operators. To avoid the need for these additional assumptions, we define $A \subset \C$ as the set of $\mu \in \C$ such that, for each $p = 1, \dots, q$, 
\[1 - \mu \sum_{j=1}^{p} 1_{B_{t_j,a_j,b_j}}K\]
is invertible. We observe that $A$ is a dense (even finite) subset of $\C$. Then, supposing Theorem \ref{thm: main} is true for all $\mu \in A$, we observe that both sides of \eqref{def:kernelresult} are continuous in $\lambda$, for $\lambda$ such that $1 - \lambda 1_{B_{\vec t, \vec a, \vec b}}K$ is invertible, and we conclude that \eqref{def:kernelresult} extends to all these values of $\lambda$, even if they do not belong to $A$.
Thus, we can proceed with the proof of Theorem \ref{thm: main} assuming that $\lambda \in A$, and moreover, it suffices to let $q = 1$.

\medskip

Indeed, to complete the proof of Theorem \ref{thm: main}, given it holds for $q=1$, it is simply necessary to verify the formulas \eqref{def: J1}--\eqref{def: phipsi}. For $q=1$, these translate to
\begin{equation} \label{def: J1b}
    J^{(1)}(z;t_1,a_1,b_1, \lambda) = 
    \left[\begin{array}{c|c}
        J^{(1)}(z) & \vec \psi_{t_1}(z)W_{t_1}(z) \vec f_{1}(z)^T  \\ \hline 0_{2 \times d} & I_2
    \end{array}\right],
\end{equation}
\begin{equation} \label{def: J2n-b}
    J^{(2)}(z;t_1,a_1,b_1, \lambda) =
    \left[\begin{array}{c|c}
        J^{(2)}(z) & 0_{d \times 2} \\ \hline
        \vec g_1(z) \frac{1}{W_{t_1}(z)} \vec \phi_{t_1}(z)^T & I_2
    \end{array}\right],
\end{equation}
\begin{equation} \label{def: prephipsi}
    \vec \phi_t(u;t_1,a_1,b_1, \lambda) := \begin{bmatrix}
        \vec \phi_t(u) \\ -1_{t>t_1} \vec f_1(u) W_{t_1}(u)
    \end{bmatrix}, \quad \vec \psi_{t'}(v;t_1,a_1,b_1, \lambda) := \begin{bmatrix}
        \vec \psi_{t'}(v) \\ -1_{t_1>t'} \vec g_1(v) \frac{1}{W_{t_1}(v)}
    \end{bmatrix},
\end{equation}
and it follows that, by iterating \eqref{def: J1b}--\eqref{def: prephipsi} for each additional cluster $B_{t_j,a_j,b_j}$, we do indeed retain the formulas \eqref{def: J1}--\eqref{def: phipsi}. The remainder of Section \ref{section:proof1} is concerned with the computation of $K_{t_1,a_1,b_1,\lambda}$.

\medskip

\subsection{Matrix representation of operators}\label{subsec:matrix}
For the computation of $K_{t_1,a_1,b_1,\lambda}(t,h;t',h')$, we first assume that $t$, $t'$, and $t_1$ are all distinct; an assumption which we will be able to remove later. Consider the smaller lattice $\Lambda_{t,t',t_1} := \{t,t',t_1\} \times \mathbb{Z}$. From the elementary identity
\begin{equation}
    (1-BA)^{-1}B=B(1-AB)^{-1},
\end{equation}
applied to $A=\lambda 1_{B_{t_1,a_1,b_1}}K$ and $B=1_{\Lambda_{t,t',t_1}}$, we obtain 
\begin{equation}
1_{\Lambda_{t,t',t_1}}K\big(1-\lambda 1_{B_{t_1, a_1, b_1}}K\big)^{-1}1_{\Lambda_{t,t',t_1}} = 1_{\Lambda_{t,t',t_1}}K1_{\Lambda_{t,t',t_1}}\big(1-\lambda 1_{B_{t_1, a_1, b_1}}1_{\Lambda_{t,t',t_1}}K1_{\Lambda_{t,t',t_1}}\big)^{-1}.
\end{equation}
Hence it is sufficient to compute the kernel of $K_{t_1,a_1,b_1,\lambda}$ as an operator acting on $\ell^2\left(\Lambda_{t,t',t_1}\right)$.

\medskip

There is a natural isomorphism between the spaces $\ell^2(\Lambda_{t,t',t_1})$ and $\ell^2\left(\Z;\C^3\right)$, the space of $3\times 1$ vectors whose entries are sequences in $\ell^2(\mathbb Z)$. This gives rise to a $3\times 3$ matrix representation of operators which will be convenient in what follows. We denote this isomorphism by $\Phi: \ell^2\big(\Z;\C^3\big) \rightarrow \ell^2\big(\Lambda_{t,t',t_1}\big)$, defined for $\sigma \in \ell^2\big(\Z;\C^3\big)$ by
\begin{equation}
  \label{def:Phi}   
  \Phi[\sigma](t,h) = \sigma(h)_1, \quad
    \Phi[\sigma](t',h) = \sigma(h)_2, \quad
    \Phi[\sigma](t_1,h) = \sigma(h)_3.
\end{equation}
Under this transformation, $\widehat{K} := \Phi^{-1} K \Phi$ has matrix-valued kernel given by
\begin{equation}\label{def:hatK}
    \widehat{K}(h,h') = \begin{pmatrix}
        K(t,h;t,h') & K(t,h;t',h') & K(t,h;t_1,h') \\ 
        K(t',h;t,h') & K(t',h;t',h') & K(t',h;t_1,h') \\ 
        K(t_1,h;t,h') & K(t_1,h;t',h') & K(t_1,h;t_1,h')
    \end{pmatrix}.
\end{equation}
By \eqref{def: kernel}, we can write
    \begin{equation}\label{def:hatKdoubleintegral}
        \widehat{K}(h,h') = -\oint_{\Sigma_1} z^{-h}V(z)z^{h'} \frac{dz}{2\pi iz} + \oint_{\Sigma_2} \oint_{\Sigma_1} u^{-h} Q(u,v) v^{h'} \frac{dv}{2\pi iv} \frac{du}{2\pi iu},
    \end{equation}
where
\begin{equation}
    V(z) := \begin{pmatrix} \label{def: V}
        0 & 1_{t>t'}\frac{W_{t'}(z)}{W_{t}(z)} & 1_{t>t_1}\frac{W_{t_1}(z)}{W_{t}(z)} \\
        1_{t'>t}\frac{W_{t}(z)}{W_{t'}(z)} & 0 & 1_{t'>t_1}\frac{W_{t_1}(z)}{W_{t'}(z)} \\
        1_{t_1>t}\frac{W_{t}(z)}{W_{t_1}(z)} & 1_{t_1>t'}\frac{W_{t'}(z)}{W_{t_1}(z)} & 0 
    \end{pmatrix},
\end{equation}
and where $Q$ is the following $(d\times 3)$-integrable kernel,
\begin{equation} \label{def: Q}
    Q(u,v) := \frac{v}{u-v}\begin{pmatrix}
        \frac{1}{W_{t}(u)} \vec \rho_{t}(u)^T \\
        \frac{1}{W_{t'}(u)} \vec \rho_{t'}(u)^T \\
        \frac{1}{W_{t_1}(u)} \vec \rho_{t_1}(u)^T
    \end{pmatrix}  \begin{pmatrix}
        \vec \sigma_{t}(v)W_{t}(v) & \vec \sigma_{t'}(v)W_{t'}(v) & \vec \sigma_{t_1}(v)W_{t_1}(v)
    \end{pmatrix}.
\end{equation}
Likewise, $\widehat{K}_{t_1,a_1,b_1,\lambda} := \Phi^{-1} K_{t_1,a_1,b_1,\lambda} \Phi$ has matrix-valued kernel given by
\begin{equation} \label{expr: matrix KB hat}
    \widehat{K}_{t_1,a_1,b_1,\lambda}(h,h') = \begin{pmatrix}
        K_{t_1,a_1,b_1,\lambda}(t,h;t,h') & K_{t_1,a_1,b_1,\lambda}(t,h;t',h') & K_{t_1,a_1,b_1,\lambda}(t,h;t_1,h') \\ 
        K_{t_1,a_1,b_1,\lambda}(t',h;t,h') & K_{t_1,a_1,b_1,\lambda}(t',h;t',h') & K_{t_1,a_1,b_1,\lambda}(t',h;t_1,h') \\ 
        K_{t_1,a_1,b_1,\lambda}(t_1,h;t,h') & K_{t_1,a_1,b_1,\lambda}(t_1,h;t',h') & K_{t_1,a_1,b_1,\lambda}(t_1,h;t_1,h')
    \end{pmatrix}.
\end{equation}
The definition \eqref{def: Ktransform} then translates to
\begin{equation}
    \widehat{K}_{t_1,a_1,b_1,\lambda} = \widehat{K} \left(1-\lambda\widehat{1}_{B_{t_1, a_1, b_1}} \widehat{K} \right)^{-1},
\end{equation}
where $\widehat{1}_{B_{t_1, a_1, b_1}} := \Phi^{-1} 1_{B_{t_1, a_1, b_1}} \Phi$ acts by left multiplication:
\begin{equation}  \label{def:1B}\widehat{1}_{B_{t_1, a_1, b_1}}[\sigma](h) = \begin{pmatrix}
        0 & 0 & 0 \\
        0 & 0 & 0 \\
        0 & 0 & 1_{[a_1,b_1)}(h)
    \end{pmatrix}\sigma(h), \quad \text{for $\sigma \in \ell^2\left(\Z;\C^3\right)$.}
\end{equation}

\subsection{Fourier conjugation}\label{subsec:Fourier}
    In our analysis, we make use of Fourier series transforms. We write $\mathcal{F}: L^2(\Sigma_1) \rightarrow \ell^2(\mathbb{Z})$ for the Fourier series transform defined by
\begin{equation}\label{def:Fourier}
    \mathcal{F}[f](h) := \int_{\Sigma_1} z^{-h} f(z) \frac{dz}{2\pi i z}, \quad h \in \mathbb{Z},
\end{equation}
and $\mathcal{F}^{-1} : \ell^2(\mathbb{Z}) \rightarrow L^2(\Sigma_1)$ for the inverse given by
\begin{equation}\label{def:Fourierinverse}
    \mathcal{F}^{-1}[\sigma](z) := \sum_{h\in\mathbb{Z}} z^h\sigma(h), \quad z\in \Sigma_1.
\end{equation}
We will let $\mathcal F$ and $\mathcal{F}^{-1}$ act component-wise on vector-valued and matrix-valued functions. In this way we can express
\begin{equation} \label{expr: fourierK}
    \widehat{K}(h,h') = - \F \left[z^{-h}V(z)\right](-h') + \F \left[ \oint_{\Sigma_2} u^{-h} Q(u,z) \frac{du}{2\pi iu}\right](-h').
\end{equation}

\medskip

We define $L$, acting on $L^2(\Sigma_1;\C^3)$, by
\begin{equation}\label{def:L}
    L := \F^{-1} \lambda \widehat{1}_{B_{t_1, a_1, b_1}} \widehat{K}\F.
\end{equation}

\begin{lemma}
\label{lemma:L}The operator $L:L^2(\Sigma_1,\C^3)\to L^2(\Sigma_1,\C^3)$ can be written as
\begin{equation}
    L = - P^{11}V + P^{12}Q,
\end{equation}
where, for
\begin{equation} \label{def: P}
        P(z,w) := \lambda \frac{w}{z-w}\begin{pmatrix}
        0 & 0 & 0 \\
        0 & 0 & 0 \\
        0 & 0 & (z/w)^{b_1} - (z/w)^{a_1}
    \end{pmatrix},
    \end{equation}
and for $f_1 \in L^2(\Sigma_1;\C^3)$ and $f_2 \in L^2(\Sigma_2;\C^3)$,
\begin{align}
    &P^{11}V : L^2(\Sigma_1;\C^3) \rightarrow L^2(\Sigma_1;\C^3),  &&P^{11}V[f_1](z) = \oint_{\Sigma_1} P(z,w)V(w)f_1(w) \frac{dw}{2\pi iw}, \\
    &P^{12} : L^2(\Sigma_2;\C^3) \rightarrow L^2(\Sigma_1;\C^3),  &&P^{12}[f_2](z) = \oint_{\Sigma_2} P(z,w)f_2(w) \frac{dw}{2\pi iw}, \\
    &Q : L^2(\Sigma_1;\C^3) \rightarrow L^2(\Sigma_2;\C^3), &&Q[f_1](z) = \oint_{\Sigma_2} Q(z,w)f_1(w) \frac{dw}{2\pi iw}.
\end{align}
\end{lemma}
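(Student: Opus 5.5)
The plan is to compute $L$ directly. I apply $\F$, then $\widehat K$ written in the form \eqref{expr: fourierK}, then the multiplication operator \eqref{def:1B}, then $\F^{-1}$, and at each stage collapse sums over the lattice variable using Fourier inversion (Parseval) on $\Sigma_1$.

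\textbf{Step 1.} Fix $f\in L^2(\Sigma_1;\C^3)$ and compute $\widehat K\F[f](h)=\sum_{h'}\widehat K(h,h')\F[f](h')$. By \eqref{expr: fourierK}, $\widehat K(h,h')=\F[a_h](-h')$ with
\[a_h(z)=-z^{-h}V(z)+\oint_{\Sigma_2}u^{-h}Q(u,z)\frac{du}{2\pi iu}.\]
This $a_h$ is a bounded (indeed analytic) matrix function on $\Sigma_1$: $V$ is analytic near $\Sigma_1$ by the assumption on $1_{t>t'}W_{t'}/W_t$. Also $Q(u,z)$ is smooth for $u\in\Sigma_2$, $z\in\Sigma_1$, since the contours are disjoint and $W\vec\sigma$, $\vec\rho/W$ are analytic there.

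The Parseval identity $\sum_{h'}\F[a](-h')\F[f](h')=\oint_{\Sigma_1}a(w)f(w)\frac{dw}{2\pi iw}$ holds for $a,f\in L^2(\Sigma_1)$. It follows from the fact that $\F$ is unitary up to normalisation and that $\F[a](-h')$ are the coefficients of $a$ against $w^{h'}$. This sidesteps any interchange of the infinite sum with the integrals.

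Applying it, together with Fubini for the continuous $Q$ on the compact contours, gives
\[
\widehat K\F[f](h)=-\oint_{\Sigma_1}w^{-h}V(w)f(w)\frac{dw}{2\pi iw}+\oint_{\Sigma_2}w^{-h}\,Q[f](w)\frac{dw}{2\pi iw},
\]
where $Q[f]$ is the operator of the statement.

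\textbf{Step 2.} Multiply by $\lambda\widehat 1_{B_{t_1,a_1,b_1}}$. This kills the first two rows and restricts $h$ to $a_1\le h<b_1$ in the third row. Then apply $\F^{-1}$, which is now a finite sum $\sum_{h=a_1}^{b_1-1}z^h(\cdot)$, so it commutes freely with the integrals. The resulting kernel is
\[
\lambda\sum_{h=a_1}^{b_1-1}(z/w)^h=\lambda\frac{(z/w)^{b_1}-(z/w)^{a_1}}{z/w-1}=\lambda\frac{w}{z-w}\big((z/w)^{b_1}-(z/w)^{a_1}\big),
\]
placed in the $(3,3)$ entry. This is exactly $P(z,w)$ from \eqref{def: P}. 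Hence $L[f](z)=-P^{11}V[f](z)+P^{12}Q[f](z)$.

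\textbf{Main obstacle.} The only delicate point is justifying Step 1, namely the convergence of the $h'$-sum. I handle it through Parseval as above rather than by swapping a pointwise sum of $v^{h'}$ with the integral. The remaining points are routine:
\begin{itemize}
\item the kernel $P$ on $\Sigma_1\times\Sigma_1$ has only a removable singularity at $z=w$, since it is a Laurent polynomial in $z/w$, so $P^{11}V$ is a bona fide bounded operator;
\item on $\Sigma_1\times\Sigma_2$ the kernel $P$ is smooth because the contours are disjoint.
\end{itemize}
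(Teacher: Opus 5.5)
Your proposal is correct and follows the same route as the paper. You collapse the $h'$-sum in $\widehat K\F[f](h)$ to contour integrals (over $\Sigma_1$, then over $\Sigma_2$ via Fubini), and then sum the finite geometric series $\lambda\sum_{h=a_1}^{b_1-1}(z/w)^h$ to produce $P(z,w)$. Your appeal to Parseval only makes rigorous the interchange of sum and integral that the paper performs without comment. You also correctly, if tacitly, take the operator $Q$ to integrate over $\Sigma_1$, which is what the $\oint_{\Sigma_2}$ in the statement should read.
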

\begin{proof}
We compute the kernel of $L$ by letting it act on an arbitrary function $f \in L^2(\Sigma_1,\C^3)$. We do this in steps: Firstly,
\begin{align}
    &\widehat{K}\F[f](h) = \sum_{h' \in \Z} \widehat{K}(h,h') \F[f](h') \nonumber \\
    &= \sum_{h' \in \Z} \left( - \F \left[z^{-h}V(z)\right](-h') + \F \left[ \oint_{\Sigma_2} u^{-h} Q(u,z) \frac{du}{2\pi iu}\right](-h') \right) \F[f](h') \nonumber \\
    &= -\oint_{\Sigma_1} w^{-h} V(w) f(w) \frac{dw}{2\pi iw} + \oint_{\Sigma_2} \oint_{\Sigma_1} u^{-h} Q(u,w) f(w) \frac{dw}{2\pi iw} \frac{du}{2\pi iu}.
\end{align}
Now,
\begin{equation}
    L[f](z) = \lambda \sum_{h \in \mathbb{Z}} z^h \begin{pmatrix}
        0 & 0 & 0 \\
        0 & 0 & 0 \\
        0 & 0 & 1_{[a_1,b_1)}(h) 
    \end{pmatrix} \widehat{K}\F[f](h).
\end{equation}
We note that there is a finite number of non-zero terms in this sum and that
\begin{equation}
    \lambda \sum_{h \in \mathbb{Z}} z^h \begin{pmatrix}
        0 & 0 & 0 \\
        0 & 0 & 0 \\
        0 & 0 & 1_{[a_1,b_1)}(h) 
    \end{pmatrix} w^{-h} = \lambda \frac{w}{z-w}\begin{pmatrix}
        0 & 0 & 0 \\
        0 & 0 & 0 \\
        0 & 0 & (z/w)^{b_1} - (z/w)^{a_1}
    \end{pmatrix}= P(z,w).
\end{equation}
Therefore,
\begin{equation}
    L[f](z) = -\oint_{\Sigma_1} P(z,w)V(w)f(w) \frac{dw}{2\pi iw} + \oint_{\Sigma_2} \oint_{\Sigma_1} P(z,u) Q(u,w) f(w) \frac{dw}{2\pi iw} \frac{du}{2\pi iu},
\end{equation}
and the result follows.
\end{proof}

\subsection{Towards an integrable kernel}\label{subsec:toint}
Using \eqref{def:L}, we obtain
\begin{equation}
\left(1-\lambda \widehat 1_{B_{t_1,a_1,b_1}}\widehat K\right)^{-1}=\mathcal F\left(1-L\right)^{-1}\mathcal F^{-1}.\label{eq:Lresolvent}
\end{equation}
We calculate $(1-L)^{-1}$ following a method developed in \cite{Bertola_2012}. By considering $(1-L)^{-1}$ in the larger space $L^2(\Sigma_1;\C^3) \oplus L^2(\Sigma_2;\C^3)$ we have
\begin{align}
        \left(1 - L\right)^{-1} &= \begin{bmatrix}
        1 & 0
    \end{bmatrix} \begin{bmatrix}
        1 + P^{11}V- P^{12}Q & 0 \\ 0 & 1
    \end{bmatrix}^{-1} \begin{bmatrix}
        1 \\ 0
    \end{bmatrix} \nonumber\\
    &= \begin{bmatrix}
        1 & 0
    \end{bmatrix} \begin{bmatrix}
        1 & 0 \\ -Q & 1
    \end{bmatrix} \begin{bmatrix}
        1 + P^{11}V & -P^{12} \\ -Q & 1
    \end{bmatrix}^{-1} \begin{bmatrix}
        1 & -P^{12} \\ 0 & 1
    \end{bmatrix} \begin{bmatrix}
        1 \\ 0
    \end{bmatrix} \nonumber\\
    &= \begin{bmatrix}
        1 & 0
    \end{bmatrix} \begin{bmatrix}
        1 + P^{11}V & -P^{12} \\ -Q & 1
    \end{bmatrix}^{-1} \begin{bmatrix}
        1 \\ 0
    \end{bmatrix},\label{eq:resolvent1}
    \end{align}
    where, for $f_1 \in L^2(\Sigma_1;\C^3)$ and $f_2 \in L^2(\Sigma_2;\C^3)$,
    \begin{align}
        \begin{bmatrix}
        1 & 0
    \end{bmatrix} &: L^2(\Sigma_1;\C^3) \oplus L^2(\Sigma_2;\C^3) \rightarrow L^2(\Sigma_1;\C^3), \quad & \begin{bmatrix}
        1 & 0
    \end{bmatrix} \left(f_1 \oplus f_2\right) = f_1, \\
    \begin{bmatrix}
        1 \\ 0
    \end{bmatrix} &: L^2(\Sigma_1;\C^3) \rightarrow L^2(\Sigma_1;\C^3) \oplus L^2(\Sigma_2;\C^3), \quad & \begin{bmatrix}
        1 \\ 0
    \end{bmatrix} \left(f_1\right) = f_1 \oplus 0.
    \end{align}
    Since the contours $\Sigma_1$ and $\Sigma_2$ are disjoint, there is a natural isomorphism between $L^2 \left(\Sigma_1;\C^3\right) \oplus L^2 \left(\Sigma_2;\C^3\right)$ and $L^2\left(\Sigma_1 \cup \Sigma_2;\C^3\right)$. We denote this isomorphism by $$\Psi: L^2\left(\Sigma_1;\C^3\right) \oplus L^2 \left(\Sigma_2;\C^3\right) \rightarrow L^2\left(\Sigma_1 \cup \Sigma_2;\C^3\right),$$ given, for $f_1 \oplus f_2 \in L^2\left(\Sigma_1;\C^3\right) \oplus L^2 \left(\Sigma_2;\C^3\right)$, by
    \begin{equation} \label{def:Psi}
        \Psi [f_1 \oplus f_2](z) = \begin{cases}
            f_1(z), \quad \text{for $z \in \Sigma_1$,} \\
            f_2(z), \quad \text{for $z \in \Sigma_2$.}
        \end{cases}
    \end{equation}
    We define
    \begin{equation}\label{def:M}
        M := \Psi \begin{bmatrix}
        - P^{11}V & P^{12} \\ Q & 0
    \end{bmatrix} \Psi^{-1}.
    \end{equation}
This operator acts on $L^2\left(\Sigma_1 \cup \Sigma_2;\C^3\right)$ with matrix-valued kernel
    \begin{multline} \label{def: M}
        M(z,w) = -1_{\Sigma_1}(z)1_{\Sigma_1}(w)P(z,w)V(w) \\ + 1_{\Sigma_1}(z)1_{\Sigma_2}(w)P(z,w) + 1_{\Sigma_2}(z)1_{\Sigma_1}(w)Q(z,w).
    \end{multline}
Crucially, the kernel $M$ is of $(d+2)\times 3$-integrable type, given by \eqref{def: integrable kernel}, in the setting $\Sigma = \Sigma_1 \cup \Sigma_2$. The associated matrix-valued functions $F$ and $G$ are given in the form \eqref{def: F12}--\eqref{def: G12} with
\begin{align}
    F_1(z) &= \left[\begin{array}{c c c}
                \vec 0_d & \vec 0_d & \vec 0_d \\ \hline  
                \vec 0_2 & \vec 0_2 & \vec f_{1}(z)
            \end{array}\right],  \label{def: F1} \\
    F_2(z) &= \left[\begin{array}{c c c}
                \vec \rho_{t}(z) \frac{1}{W_{t}(z)} & \vec \rho_{t'}(z) \frac{1}{W_{t'}(z)} & \vec \rho_{t_1}(z) \frac{1}{W_{t_1}(z)} \\ \hline
                \vec 0_2 & \vec 0_2 & \vec 0_2
            \end{array}\right], \label{def: F2} \\
    G_1(w) &= \left[\begin{array}{c c c}
               \vec\sigma_{t}(w)W_{t}(w) & \vec\sigma_{t'}(w)W_{t'}(w) & \vec\sigma_{t_1}(w)W_{t_1}(w) \\ \hline
              -1_{t_1>t} \vec g_{1}(w) \frac{W_{t}(w)}{W_{t_1}(w)} & -1_{t_1>t'} \vec g_{1}(w) \frac{W_{t'}(w)}{W_{t_1}(w)} & \vec 0_2  
            \end{array}\right], \label{def: G1} \\
    G_2(w) &= \left[\begin{array}{ccc}
                \vec 0_d & \vec 0_d & \vec 0_d \\ \hline
                \vec 0_2 & \vec 0_2 & \vec g_{1}(w)
            \end{array}\right]. \label{def: G2}
\end{align}

\subsection{Its-Izergin-Korepin-Slavnov theory}\label{subsec:IIKS}

We introduce the resolvent operator
\begin{equation} \label{def: R}
    R := M\left(1-M\right)^{-1},
\end{equation}
for $M$, given by \eqref{def:M}, and apply the theory of IIKS summarised in Section \ref{intro}. Since $R$ is of the form \eqref{def: resolvent}, we see that $R$ is also of $(d+2) \times 3$-integrable type, given by \eqref{def: Rintegrable}--\eqref{def: Gtilde}. In this case, RH problem \ref{RHPUpsilonIIKS} becomes the following:
\begin{problem} \label{prbl: IIKS}
    \begin{enumerate}
        \item $\Upsilon: \mathbb{C} \setminus (\Sigma_1 \cup \Sigma_2) \rightarrow \mathbb{C}^(d+2)\times (d+2)$ is analytic.
        \item 
        For $z \in \Sigma_1 \cup \Sigma_2$, $\Upsilon$ has continuous boundary values, $\Upsilon_+$ and $\Upsilon_-$, and they satisfy the jump relations
        \begin{align}
            \Upsilon_+(z) &= \Upsilon_-(z)\left[\begin{array}{c|c}
                I_d & \vec \sigma_{t_1}(z)W_{t_1}(z) \vec f_1(z)^T \\ \hline
                \vec 0_{2 \times d} & I_2
            \end{array}\right], \quad \text{for $z \in \Sigma_1$,} \\
            \Upsilon_+(z) &= \Upsilon_-(z)\left[\begin{array}{c|c}
                I_d & \vec 0_{d \times 2} \\ \hline
                \vec g_1(z)^T \frac{1}{W_{t_1}(z)}\vec \rho_{t_1}(z) & I_2
            \end{array}\right], \quad \text{for $z \in \Sigma_2$.}
        \end{align}
        \item $\Upsilon(z) = I_{d+2} + \mathcal{O}(1/z)$ as $z \rightarrow \infty$.
    \end{enumerate}
    \end{problem}

\medskip

Combining \eqref{eq:resolvent1}, \eqref{def:M} and \eqref{def: R}, we have
    \begin{align}
        \left(1 - L\right)^{-1} &= \begin{bmatrix}
        1 & 0
    \end{bmatrix} \Psi^{-1} (1-M)^{-1} \Psi \begin{bmatrix}
        1 \\ 0
    \end{bmatrix} \nonumber\\
    &= \begin{bmatrix}
        1 & 0
    \end{bmatrix} \Psi^{-1} (1+R) \Psi \begin{bmatrix}
        1 \\ 0
    \end{bmatrix} \nonumber\\
    &= 1 + \begin{bmatrix}
        1 & 0
    \end{bmatrix} \Psi^{-1} R \Psi \begin{bmatrix}
        1 \\ 0
    \end{bmatrix},
    \end{align}
    allowing us to describe the action of $\left(1 - L\right)^{-1}$. For $f \in L^2(\Sigma_1)$ and $z \in \Sigma_1$,
\begin{equation}\label{eq:Linv}
        \left(1 - L\right)^{-1}[f](z) = f(z) + \oint_{\Sigma_1} R(z,w)f(w) \frac{dw}{2\pi iw}.
    \end{equation}
    This prepares us to calculate the kernel $\widehat K_{t_1,a_1,b_1,\lambda}$. 
    \begin{lemma}
    We have
        \begin{multline} \label{eq:Khatidentity} 
        \widehat{K}_{t_1,a_1,b_1, \lambda}(h,h') = - \oint_{\Sigma_1} z^{-h}V(z)z^{h'}\frac{dz}{2\pi iz} \\
        + \oint_{\Sigma_2} \oint_{\Sigma_1} z^{-h} \frac{w}{z-w}\left(F_2(z)^T-V(z)F_1(z)^T\right)\Upsilon_\pm(z)^{-1}\Upsilon_\pm(w)G_1(w)^Tw^{h'} \frac{dw}{2\pi iw} \frac{dz}{2\pi iz}.
    \end{multline}
    \end{lemma}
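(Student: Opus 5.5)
Starting from \eqref{eq:Lresolvent}, we have $\widehat K_{t_1,a_1,b_1,\lambda}=\widehat K\,\mathcal F(1-L)^{-1}\mathcal F^{-1}$. The plan is to expand this with \eqref{eq:Linv} and then simplify the correction term using the IIKS form of $R$ together with the jump relations of $\Upsilon$.

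\textbf{Step 1: split the kernel.} Insert \eqref{eq:Linv}, so that
\[\widehat K_{t_1,a_1,b_1,\lambda}=\widehat K+\widehat K\,\mathcal F\,[1\ 0]\Psi^{-1}R\Psi[1\ 0]^T\mathcal F^{-1}.\]
The identity part reproduces $\widehat K$, which by \eqref{def:hatKdoubleintegral} contributes $-\oint z^{-h}Vz^{h'}$ together with the original double integral $\oint_{\Sigma_2}\oint_{\Sigma_1} u^{-h}Q(u,v)v^{h'}$. For the correction term, the computation in the proof of Lemma \ref{lemma:L} shows that $\widehat K\mathcal F[g](h)=-\oint_{\Sigma_1}w^{-h}V(w)g(w)+\oint_{\Sigma_2}\oint_{\Sigma_1}u^{-h}Q(u,w)g(w)$. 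Apply this to $g=\oint_{\Sigma_1}R(\cdot,w)\,\mathcal F^{-1}[\delta_{h'}](w)$, where $\mathcal F^{-1}[\delta_{h'}](w)=w^{h'}$. This yields kernel terms of the form $-\oint_{\Sigma_1}\oint_{\Sigma_1} z^{-h}V(z)R(z,w)w^{h'}$ and $\oint_{\Sigma_2}\oint_{\Sigma_1}\oint_{\Sigma_1}u^{-h}Q(u,z)R(z,w)w^{h'}$.

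\textbf{Step 2: rewrite everything as $(1-M)^{-1}$ on the combined contour.} It is cleaner to avoid splitting off the identity. Use the resolvent relation $R=M+MR$, equivalently $(1+R)=(1-M)^{-1}$, on $\Sigma_1\cup\Sigma_2$. Then $\oint_{\Sigma_2}Q(u,\cdot)(1+R)(\cdot,w)$, restricted to $u\in\Sigma_2$ and $w\in\Sigma_1$, equals the $(\Sigma_2,\Sigma_1)$-block of $M(1-M)^{-1}=R$, i.e.\ $R(u,w)$. Similarly, $-\oint_{\Sigma_1}V(z)(1+R)(z,w)$ should be rewritten via the $(\Sigma_1,\Sigma_1)$-block. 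There, $M=-PV$ on $\Sigma_1\times\Sigma_1$, and $P$ is polynomial in $z/w$ times $\frac{w}{z-w}$. The key observation is that the $\Sigma_1$-row of $M$ factors as $F_1(z)^T\times(\dots)$. Thus $V(z)R(z,w)$ should combine into $\frac{w}{z-w}(-V(z)F_1(z)^T)\tilde G(w)$ plus the pieces coming from $R$ restricted to $\Sigma_1\times\Sigma_1$.

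\textbf{Step 3: assemble via IIKS.} Write $R(z,w)=\frac{w}{z-w}\tilde F(z)^T\tilde G(w)$ with $\tilde F=\Upsilon_\pm^{-T}F_i$ and $\tilde G=\Upsilon_\pm G_j$, as in \eqref{def: Ftilde}--\eqref{def: Gtilde}. The $u$-integral over $\Sigma_2$ gives the term $\frac{w}{z-w}F_2(z)^T\Upsilon^{-1}(z)\Upsilon(w)G_1(w)^T$ with $z\in\Sigma_2$. The $V$-terms carry $z$ on $\Sigma_1$, with integrand $V(z)F_1(z)^T\Upsilon_\pm^{-1}(z)\Upsilon_\pm(w)G_1(w)^T\frac{w}{z-w}$. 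Here we need to deform the $z$-contour from $\Sigma_1$ to $\Sigma_2$. On the annulus between the two contours, $\Upsilon$ is analytic, $z^{-h}V(z)F_1(z)^T$ is analytic in $\mathcal U$ (each entry is a ratio $W_{t'}/W_t$ with $t>t'$, times monomials), and the pole at $z=w$ is handled by the $+/-$ choice. Since $F_1(z)^T\Upsilon_\pm^{-1}(z)$ has no jump on $\Sigma_1$ by \eqref{def: Upsilonjump1}, the deformation is legitimate. The residue at $z=w$ produces $-w^{-h}V(w)F_1(w)^TG_1(w)^T w^{h'}$. This should cancel the single-integral leftovers $-\oint z^{-h}V(z)\oint P V$ arising from the identity part and the $M$-block, leaving exactly the single integral $-\oint z^{-h}Vz^{h'}$.

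\textbf{Main obstacle.} The main obstacle is this bookkeeping of residues and cancellations when moving integrals from $\Sigma_1$ to $\Sigma_2$. I would first verify it at the level of operators, checking that the $\Sigma_1\times\Sigma_1$ part of $(1-M)^{-1}$ combined with $V$ telescopes. Only then would I convert to contour deformation, tracking the $\pm$ boundary-value independence from \eqref{def: Upsilonjump1}--\eqref{def: Upsilonjump2} at each step.
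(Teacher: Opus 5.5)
Your Step 1 and the first half of Step 2 follow the paper's route. You expand with \eqref{eq:Linv}, split the correction into a $V$-part and a $Q$-part, and use the block identity $Q+QR=R$ on $\Sigma_2\times\Sigma_1$ (the paper's \eqref{expr: QR=-Q+R}) to absorb the $Q$-double integral of $\widehat K$. The gap is in the last step, which you flag as the main obstacle and then resolve incorrectly.

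Once the $Q$-terms are combined, there are no single-integral leftovers to cancel. The term $-\oint_{\Sigma_1} z^{-h}V(z)z^{h'}\frac{dz}{2\pi iz}$ comes from $\widehat K$ and passes through untouched. What remains is exactly $-\oint_{\Sigma_1}\oint_{\Sigma_1} z^{-h}V(z)R(z,w)w^{h'}$ plus $\oint_{\Sigma_2}\oint_{\Sigma_1} z^{-h}R(z,w)w^{h'}$, with measures $\frac{dw}{2\pi i w}\frac{dz}{2\pi i z}$. Rewriting $-V(1+R)$ through the $(\Sigma_1,\Sigma_1)$-block of $(1-M)^{-1}$, as in the second half of your Step 2, is a detour. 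It only reintroduces $P$-terms that you would then have to remove, and nothing in your proposal removes them.

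More seriously, the residue you rely on is zero. The term you write is a multiple of $F_1(w)^TG_1(w)$. The only possibly nonzero row of this matrix consists of multiples of $\vec f_1(w)^T\vec g_1(w)=\lambda(1-1)=0$, which is precisely the integrability condition $F^TG\equiv 0$ for $M$.

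Fix $w\in\Sigma_1$ and consider $z\mapsto \frac{w}{z-w}V(z)F_1(z)^T\Upsilon(z)^{-1}\Upsilon_\pm(w)G_1(w)$. Here $F_1^T\Upsilon^{-1}$ denotes the jump-free continuation guaranteed by \eqref{def: Upsilonjump1}, and $\Upsilon_\pm G_1$ is $\pm$-independent by \eqref{def: Upsilonjump2}. By the vanishing above, this function has a removable singularity at $z=w$. It is analytic in a neighbourhood of the closed annulus between $\Sigma_1$ and $\Sigma_2$. So the $z$-contour can be moved to $\Sigma_2$ with no residue, and adding the $F_2^T$-term gives \eqref{eq:Khatidentity} at once.

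As written, your argument is inconsistent. The $\pm$ choice of boundary values does not ``handle'' the pole at $z=w$; only the vanishing of $F_1^TG_1$ on the diagonal does. And if the residue were nonzero, it would leave an extra term that nothing in the correct bookkeeping cancels.
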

    \begin{proof}
    By \eqref{eq:Lresolvent} and \eqref{eq:Linv}, we have, for $\sigma \in L^2(\Z;\C^3)$, 
    \begin{align}
        \left(1-\lambda \widehat{1}_{B_{t_1, a_1, b_1}} \widehat{K} \right)^{-1}[\sigma](h) &= \sigma(h) + \F \left[\oint_{\Sigma_1} R(z,w) \F^{-1}[\sigma](w) \frac{dw}{2\pi iw} \right](h),
    \end{align}
    and therefore
    \begin{align}
        &\widehat K_{t_1,a_1,b_1,\lambda}[\sigma](h) = \widehat{K}[\sigma](h) + \sum_{h' \in \Z} \widehat{K}(h,h') \F \left[\oint_{\Sigma_1} R(z,w) \F^{-1}[\sigma](w) \frac{dw}{2\pi iw} \right](h'). \label{2.47}
    \end{align}
    We compute the second term in two parts:
    \begin{equation}\label{eq:A12}
        \sum_{h' \in \Z} \widehat{K}(h,h') \F \left[\oint_{\Sigma_1} R(z,w) \F^{-1}[\sigma](w) \frac{dw}{2\pi iw} \right](h') = A_1(h)+A_2(h),
       \end{equation}
        where
        \begin{align}
        &A_1(h)=
        - \sum_{h' \in \Z} \F \left[z^{-h}V(z)\right](-h') \F \left[\oint_{\Sigma_1} R(z,w) \F^{-1}[\sigma](w) \frac{dw}{2\pi iw} \right](h'),\\
        &A_2(h)= \sum_{h' \in \Z} \F \left[ \oint_{\Sigma_2} u^{-h} Q(u,z) \frac{du}{2\pi iu}\right](-h') \F \left[\oint_{\Sigma_1} R(z,w) \F^{-1}[\sigma](w) \frac{dw}{2\pi iw} \right] (h').
    \end{align}
For the first part,
    \begin{align}
        A_1(h)&= -\oint_{\Sigma_1} \oint_{\Sigma_1} u^{-h}V(u)R(u,v) \F^{-1}[\sigma](v) \frac{dv}{2\pi iv} \frac{du}{2\pi iu} \nonumber\\
        &= -\sum_{h' \in \Z} \oint_{\Sigma_1} \oint_{\Sigma_1} u^{-h}V(u)R(u,v) v^{h'} \frac{dv}{2\pi iv} \frac{du}{2\pi iu} \sigma(h'), \label{eq:A1}
    \end{align}
    and for the second part
    \begin{align}
        A_2(h)&= \oint_{\Sigma_2} \oint_{\Sigma_1} \oint_{\Sigma_1} u^{-h}Q(u,v)R(v,w) \F^{-1}[\sigma](w) \frac{dw}{2\pi iw}\frac{dv}{2\pi iv}\frac{du}{2\pi iu} \nonumber\\
        &= \sum_{h' \in \Z} \oint_{\Sigma_2} \oint_{\Sigma_1} \oint_{\Sigma_1} u^{-h}Q(u,v)R(v,w) w^{h'} \frac{dw}{2\pi iw}\frac{dv}{2\pi iv}\frac{du}{2\pi iu} \sigma(h').\label{eq:A2}
    \end{align}
    To simplify \eqref{eq:A2}, we note that, for $u \in \Sigma_2$ and $w \in \Sigma_1$, $$\oint_{\Sigma_1} Q(u,z)R(z,w) \frac{dz}{2\pi iz}$$ is the integral kernel of the operator
    \[Q \begin{bmatrix}
        1 & 0
    \end{bmatrix} \Psi^{-1} R \Psi \begin{bmatrix}
        1 \\ 0
    \end{bmatrix} : L^2\left(\Sigma_1;\C^3 \right) \rightarrow L^2\left(\Sigma_2;\C^3 \right), \]
    and we have that
    \begin{align}
        Q \begin{bmatrix}
        1 & 0
    \end{bmatrix} \Psi^{-1} R \Psi \begin{bmatrix}
        1 \\ 0
    \end{bmatrix} &= -Q + Q \begin{bmatrix}
        1 & 0
    \end{bmatrix} \Psi^{-1} (1+R) \Psi \begin{bmatrix}
        1 \\ 0
    \end{bmatrix} \nonumber \\
    &= -Q + Q \begin{bmatrix}
        1 & 0
    \end{bmatrix} \Psi^{-1} (1-M)^{-1} \Psi \begin{bmatrix}
        1 \\ 0
    \end{bmatrix}\nonumber\\
    &= -Q + Q \begin{bmatrix}
        1 & 0
    \end{bmatrix} \begin{bmatrix}
        1 + P^{11}V & -P^{12} \\ -Q & 1
    \end{bmatrix}^{-1} \begin{bmatrix}
        1 \\ 0
    \end{bmatrix} \nonumber\\
    &= -Q + \begin{bmatrix}
        0 & 1
    \end{bmatrix} \begin{bmatrix}
        -P^{11}V & P^{12} \\ Q & 0
    \end{bmatrix} \begin{bmatrix}
        1 + P^{11}V & -P^{12} \\ -Q & 1
    \end{bmatrix}^{-1} \begin{bmatrix}
        1 \\ 0
    \end{bmatrix} \nonumber\\
    &= -Q + \begin{bmatrix}
        0 & 1
    \end{bmatrix} \Psi^{-1} M(1-M)^{-1} \Psi \begin{bmatrix}
        1 \\ 0
    \end{bmatrix} \nonumber\\
    &= -Q + \begin{bmatrix}
        0 & 1
    \end{bmatrix} \Psi^{-1} R \Psi \begin{bmatrix}
        1 \\ 0
    \end{bmatrix}.
    \end{align}
    Therefore, for $u \in \Sigma_2$ and $w \in \Sigma_1$,
    \begin{equation} \label{expr: QR=-Q+R}
        \oint_{\Sigma_1} Q(u,v)R(v,w) \frac{dv}{2\pi iv} = -Q(u,w) + R(u,w).
    \end{equation}
    Substituting \eqref{expr: QR=-Q+R} into \eqref{eq:A2}, we obtain
    \begin{multline} \label{eq:A22}
        A_2(h)=
        -\sum_{h' \in \Z} \oint_{\Sigma_2} \oint_{\Sigma_1} u^{-h}Q(u,v) v^{h'} \frac{dv}{2\pi iv} \frac{du}{2\pi iu} \sigma(h') \\
        +\sum_{h' \in \Z} \oint_{\Sigma_2} \oint_{\Sigma_1} u^{-h}R(u,v) v^{h'} \frac{dv}{2\pi iv} \frac{du}{2\pi iu} \sigma(h'),
    \end{multline}
    and substituting \eqref{eq:A1} and \eqref{eq:A22} into \eqref{eq:A12}, and then into \eqref{2.47}, we obtain
    \begin{multline}
        \widehat{K}_{t_1,a_1,b_1,\lambda}[\sigma](h) = \widehat{K}[\sigma](h) - \sum_{h' \in \Z} \oint_{\Sigma_1} \oint_{\Sigma_1} u^{-h}V(u)R(u,v) v^{h'} \frac{dv}{2\pi iv} \frac{du}{2\pi iu} \sigma(h') \\
        -\sum_{h' \in \Z} \oint_{\Sigma_2} \oint_{\Sigma_1} u^{-h}Q(u,v) v^{h'} \frac{dv}{2\pi iv} \frac{du}{2\pi iu} \sigma(h') + \sum_{h' \in \Z} \oint_{\Sigma_2} \oint_{\Sigma_1} u^{-h}R(u,v) v^{h'} \frac{dv}{2\pi iv} \frac{du}{2\pi iu} \sigma(h'),
    \end{multline}
    which simplifies to
    \begin{multline}
        \widehat{K}_{t_1,a_1,b_1,\lambda}[\sigma](h) = - \sum_{h' \in \Z}  \oint_{\Sigma_1} z^{-h}V(z)z^{h'} \frac{dz}{2\pi iz} \sigma(h')
        \\ -\sum_{h' \in \Z} \oint_{\Sigma_1} \oint_{\Sigma_1} u^{-h}V(u)R(u,v) v^{h'} \frac{dv}{2\pi iv} \frac{du}{2\pi iu} \sigma(h') \\ + \sum_{h' \in \Z} \oint_{\Sigma_2} \oint_{\Sigma_1} u^{-h}R(u,v) v^{h'} \frac{dv}{2\pi iv} \frac{du}{2\pi iu} \sigma(h').
    \end{multline}
    Therefore,
    \begin{multline}
        \widehat{K}_{t_1,a_1,b_1,\lambda}(h,h') = - \oint_{\Sigma_1} z^{-h}V(z)z^{h'}\frac{dz}{2\pi iz} 
        -\oint_{\Sigma_1} \oint_{\Sigma_1} u^{-h}V(u)R(u,v) v^{h'} \frac{dv}{2\pi iv} \frac{du}{2\pi iu} \\
        + \oint_{\Sigma_2} \oint_{\Sigma_1} u^{-h}R(u,v) v^{h'} \frac{dv}{2\pi iv} \frac{du}{2\pi iu}.
    \end{multline}
    Now by \eqref{def: Rintegrable}--\eqref{def: Gtilde}, we have
        \begin{multline}
        \widehat{K}_{t_1,a_1,b_1,\lambda}(h,h') = - \oint_{\Sigma_1} z^{-h}V(z)z^{h'}\frac{dz}{2\pi iz} \\ 
        -\oint_{\Sigma_1} \oint_{\Sigma_1} z^{-h}\frac{w}{z-w} V(z)F_1(z)^T\Upsilon_\pm(z)^{-1}\Upsilon_\pm(w)G_1(w)^Tw^{h'} \frac{dw}{2\pi iw} \frac{dz}{2\pi iz} \\
         + \oint_{\Sigma_2} \oint_{\Sigma_1} z^{-h}\frac{w}{z-w}F_2(z)^T\Upsilon_\pm(z)^{-1}\Upsilon_\pm(w)G_1(w)^Tw^{h'} \frac{dw}{2\pi iw} \frac{dz}{2\pi iz},
    \end{multline}
    where $F_1$, $F_2$, $G_1$ and $G_2$ are given by \eqref{def: F1}--\eqref{def: G2}, and where $\Upsilon$ is the unique solution of RH problem \ref{prbl: IIKS}. We observe that the function $F_1^T\Upsilon_\pm^{-1}$, given on $\Sigma_1$, has an analytic continuation in the whole of $\mathcal{U}$, given by $F_1^T \Upsilon^{-1}$. Indeed, $F_1^T \Upsilon^{-1}$ is analytic in $\mathcal{U} \setminus \left(\Sigma_1 \cup \Sigma_2 \right)$, and along both $\Sigma_1$ and $\Sigma_2$, $F_1(z)^T \Upsilon_-(z)^{-1} = F_1(z)^T \Upsilon_+(z)^{-1}$. Therefore, in the second term, we can deform the $z$-integration contour from $\Sigma_1$ to $\Sigma_2$ and obtain
        \begin{multline}
        \widehat{K}_{t_1,a_1,b_1,\lambda}(h,h') = - \oint_{\Sigma_1} z^{-h}V(z)z^{h'}\frac{dz}{2\pi iz} \\
        + \oint_{\Sigma_2} \oint_{\Sigma_1} z^{-h}\frac{w}{z-w}\left(F_2(z)^T-V(z)F_1(z)^T\right)\Upsilon_\pm(z)^{-1}\Upsilon_\pm(w)G_1(w)^Tw^{h'} \frac{dw}{2\pi iw} \frac{dz}{2\pi iz},
    \end{multline}
    which yields the result. 
\end{proof}

We now assume that $K$ is of the form \eqref{def: kernelRHP}, and we perform a certain redressing procedure in order to incorporate the two RH problems \ref{prbl: IIKS} and \ref{RHPYmodel}.

\begin{lemma}
We have
\begin{equation}\label{eq:RHidentity}
     \Upsilon(z) \begin{bmatrix}
            Y(z) & \vec 0_{d \times 2} \\
            \vec 0_{2 \times d} & I_2
        \end{bmatrix} = Y(z;t_1,a_1,b_1,\lambda),
\end{equation}
where $\Upsilon$, $Y$ and $Y(\cdot;t_1,a_1,b_1,\lambda)$ are the respective unique solutions of RH problems \ref{prbl: IIKS}, \ref{RHPYmodel} and \ref{prbl: main}.
\end{lemma}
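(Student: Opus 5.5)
The plan is to verify directly that the left-hand side of \eqref{eq:RHidentity},
\[
X(z):=\Upsilon(z)\begin{bmatrix} Y(z) & \vec 0_{d\times 2}\\ \vec 0_{2\times d} & I_2\end{bmatrix},
\]
satisfies all three conditions of RH problem \ref{prbl: main} for $q=1$, with the jump matrices \eqref{def: J1b}--\eqref{def: J2n-b}. The identity then follows from uniqueness.

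Existence of $\Upsilon$ comes from the IIKS theory recalled in Section \ref{intro}, since $1-M$ is invertible: by \eqref{eq:resolvent1}, this is equivalent to the invertibility of $1-L$, hence of $1-\lambda\widehat 1_{B_{t_1,a_1,b_1}}\widehat K$. Analyticity of $X$ off $\Sigma_1\cup\Sigma_2$ is then immediate.

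For the jumps, write $D=\mathrm{diag}(Y,I_2)$. Then
\[
X_+=\Upsilon_-J_\Upsilon D_-\,\mathrm{diag}(J^{(i)},I_2),
\]
so the new jump is $D_-^{-1}J_\Upsilon D_-\,\mathrm{diag}(J^{(i)},I_2)$.

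On $\Sigma_1$, the conjugation gives
\[
D_-^{-1}J_\Upsilon D_-=\begin{bmatrix} I_d & Y_-^{-1}\vec\sigma_{t_1}W_{t_1}\vec f_1^T\\ 0 & I_2\end{bmatrix}.
\]
By \eqref{def:phipsi} and \eqref{expr: no jump} we have $Y_-^{-1}\vec\sigma_{t_1}=\vec\psi_{t_1}$. Multiplying on the right by $\mathrm{diag}(J^{(1)},I_2)$ then gives exactly \eqref{def: J1b}.

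On $\Sigma_2$, the jump of $\Upsilon$ has lower-left block $\vec g_1\frac{1}{W_{t_1}}\vec\rho_{t_1}^T$; the transpose placement in RH problem \ref{prbl: IIKS} should be read this way, as dictated by $G_2F_2^T$ in \eqref{def: F2}, \eqref{def: G2}. Conjugation produces the block $\vec g_1\frac{1}{W_{t_1}}\vec\rho_{t_1}^TY_-$. Since $\vec\rho_{t_1}^T=\vec\phi_{t_1}^TY_\pm^{-1}$, this block equals $\vec g_1\frac{1}{W_{t_1}}\vec\phi_{t_1}^T$. Multiplying by $\mathrm{diag}(J^{(2)},I_2)$ then yields \eqref{def: J2n-b}, because the lower-left block gets multiplied only by the identity in the $(2,2)$ position.

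At infinity, $\Upsilon=I+\mathcal O(1/z)$ and $Y=(I+\mathcal O(1/z))Y^{(\infty)}$ combine to
\[
X=\bigl(I+\mathcal O(1/z)\bigr)\,\mathrm{diag}(Y^{(\infty)},I_2).
\]

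The main obstacle is uniqueness. If $Z$ is another solution, I would show that $ZX^{-1}$ has no jumps, is bounded at infinity, and hence equals $I$ by Liouville (the singularities on the contours are removable by continuity of boundary values). This requires $X$ to be invertible. We have $\det\Upsilon\equiv1$, since its jumps are unipotent and $\det\Upsilon\to1$. The factor $Y$ is invertible, since its inverse already appears in \eqref{def: kernelRHP}; more concretely, $\det Y$ has no zeros because it is determined by the scalar RH problem with jump $\det J^{(i)}$ and normalization $\det Y^{(\infty)}\equiv1$. The same Liouville argument, applied with $X$ in place of $Y$, gives the uniqueness claimed in Theorem \ref{thm: main}.
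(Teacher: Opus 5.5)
Your strategy is the same as the paper's: check that $X=\Upsilon\,\mathrm{diag}(Y,I_2)$ has the jumps \eqref{def: J1b}--\eqref{def: J2n-b} and the right behaviour at infinity, then invoke uniqueness. Your $\Sigma_1$ computation and the asymptotics are correct. The $\Sigma_2$ step, however, rests on a false justification. With your factorization $X_-^{-1}X_+=D_-^{-1}J_\Upsilon D_-\,\mathrm{diag}(J^{(2)},I_2)$ you get
\[
\begin{bmatrix} I_d & 0\\ C & I_2\end{bmatrix}\begin{bmatrix} J^{(2)} & 0\\ 0 & I_2\end{bmatrix}=\begin{bmatrix} J^{(2)} & 0\\ C\,J^{(2)} & I_2\end{bmatrix},\qquad C=\vec g_1\frac{1}{W_{t_1}}\vec\phi_{t_1}^T .
\]
So the lower-left block is right-multiplied by $J^{(2)}$, not by $I_2$. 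It is the upper-right block that only sees $I_2$, which is why the same reasoning worked on $\Sigma_1$.

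To reach \eqref{def: J2n-b} you need $\vec\phi_{t_1}^TJ^{(2)}=\vec\phi_{t_1}^T$ on $\Sigma_2$. This is exactly the hypothesis \eqref{jump rel1} of RH problem \ref{RHPYmodel}, rewritten by transposing. As written, your argument never uses this structural assumption on $J^{(2)}$, and it is precisely what makes the lemma work on $\Sigma_2$.

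Equivalently, you can compute as the paper does, $X_-^{-1}X_+=D_-^{-1}J_\Upsilon D_+$. Its lower-left block is $\vec g_1W_{t_1}^{-1}\vec\rho_{t_1}^T Y_+$, which equals $C$ by taking the $+$ boundary value in \eqref{def:phipsi}. That choice is legitimate only because of \eqref{expr: no jump}, i.e.\ again \eqref{jump rel1}. With this repair your proof coincides with the paper's.

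Your extra uniqueness argument goes beyond the paper, which simply takes unique solvability of RH problem \ref{prbl: main} for granted; the statement presupposes it. Be aware of two weak points in that argument:
\begin{itemize}
\item Appearance of $Y_\pm^{-1}$ in \eqref{def: kernelRHP} only gives invertibility on the contours, not off them.
\item The claim that $\det Y$ is zero-free because it solves the scalar RH problem with jumps $\det J^{(i)}$ fails in general. If $\det J^{(i)}$ has nonzero winding number, that scalar problem can have solutions with zeros, or several solutions. The argument is fine when, e.g., $\det J^{(i)}\equiv 1$, as in all of the paper's applications. Otherwise, invertibility of $Y$ should be treated as a standing assumption.
\end{itemize}
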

\begin{proof}
The proof follows from showing that both sides of \eqref{eq:RHidentity} solve RH problem \ref{prbl: main}. We start by showing that both sides satisfy the same jump conditions on $\Sigma_1$ and $\Sigma_2$. We have that, for $z\in\Sigma_1$,
\begin{align}
    &\begin{bmatrix}
            Y_-(z)^{-1} & \vec 0_{d \times 2} \\
            \vec 0_{2 \times d} & I_2\end{bmatrix}\Upsilon_-(z)^{-1} \Upsilon_+(z)\begin{bmatrix}
            Y_+(z) & \vec 0_{d \times 2} \\
            \vec 0_{2 \times d} & I_2
        \end{bmatrix} \nonumber \\ 
        & \qquad = \begin{bmatrix}
            Y_-(z)^{-1} & \vec 0_{d \times 2} \\
            \vec 0_{2 \times d} & I_2\end{bmatrix} \begin{bmatrix}
                I_d & \vec \sigma_{t_1}(z)W_{t_1}(z) \vec f_1(z)^T \\
                \vec 0_{2 \times d} & I_2
            \end{bmatrix}\begin{bmatrix}
            Y_+(z) & \vec 0_{d \times 2} \nonumber \\ 
            \vec 0_{2 \times d} & I_2
        \end{bmatrix} \\
        & \qquad = \begin{bmatrix}
                J^{(1)}(z) & \vec \psi_{t_1}(z)W_{t_1}(z) \vec f_1(z)^T \\
                \vec 0_{2 \times d} & I_2
            \end{bmatrix} = J^{(1)}(z;t_1,a_1,b_1,\lambda),
\end{align}
where, for the second equality, we recall the relation \eqref{def:phipsi}. We similarly check that, for $z \in \Sigma_2$,
\begin{equation}
    \begin{bmatrix}
            Y_-(z)^{-1} & \vec 0_{d \times 2} \\
            \vec 0_{2 \times d} & I_2\end{bmatrix}\Upsilon_-(z)^{-1} \Upsilon_+(z)\begin{bmatrix}
            Y_+(z) & \vec 0_{d \times 2} \\
            \vec 0_{2 \times d} & I_2
        \end{bmatrix} = J^{(2)}(z;\vec t, \vec a, \vec b,\lambda).
\end{equation}
It is easily verified that both sides satisfy the same asymptotic condition; as $z \rightarrow \infty$,
\begin{equation}
    \Upsilon(z) \begin{bmatrix}
            Y(z) & \vec 0_{d \times 2} \\
            \vec 0_{2 \times d} & I_2
        \end{bmatrix} = \left(I_{d + 2} + \mathcal{O}(1/z)\right) \begin{bmatrix}
            Y_{\infty}(z) & \vec 0_{d \times 2} \\
            \vec 0_{2 \times d} & I_2
        \end{bmatrix}.
\end{equation}
Therefore, both sides solve RH problem \ref{prbl: main}, and since the solution is unique the result follows.
\end{proof}

Substituting \eqref{eq:RHidentity} into \eqref{eq:Khatidentity}, we obtain
\begin{multline} 
        \widehat{K}_{t_1,a_1,b_1,\lambda}(h,h') = - \oint_{\Sigma_1} z^{-h}V(z)z^{h'}\frac{dz}{2\pi iz} 
        \\ + \oint_{\Sigma_2} \oint_{\Sigma_1} z^{-h} \frac{w}{z-w}\left(F_2(z)^T-V(z)F_1(z)^T\right)\begin{bmatrix}
            Y(z) & \vec 0_{d \times 2} \\
            \vec 0_{2 \times d} & I_2\end{bmatrix}Y_\pm(z;t_1,a_1,b_1)^{-1} \\
            \times\ Y_\pm(w;t_1,a_1,b_1)\begin{bmatrix}
            Y(w)^{-1} & \vec 0_{d \times 2} \\
            \vec 0_{2 \times d} & I_2
        \end{bmatrix} G_1(w)^Tw^{h'} \frac{dw}{2\pi iw} \frac{dz}{2\pi iz}.
    \end{multline}
Then, by \eqref{def: F1}--\eqref{def: G2} and \eqref{def:phipsi}, we obtain  
\begin{multline}
        \widehat{K}_{t_1,a_1,b_1,\lambda}(h,h') = - \oint_{\Sigma_1} z^{-h}V(z)z^{h'}\frac{dz}{2\pi iz} \\ 
        + \oint_{\Sigma_2} \oint_{\Sigma_1} z^{-h} \frac{w}{z-w}\begin{pmatrix}
        \frac{1}{W_{t}(z)} \vec \phi_{t}(z)^T & -1_{t > t_1} \frac{W_{t_1}(z)}{W_{t}(z)}\vec f_1(z)^T \\
        \frac{1}{W_{t'}(z)} \vec \phi_{t'}(z)^T & -1_{t' > t_1} \frac{W_{t_1}(z)}{W_{t'}(z)}\vec f_1(z)^T \\
        \frac{1}{W_{t_1}(z)} \vec \phi_{t_1}(z)^T & 0
    \end{pmatrix}  Y_\pm(z;t_1,a_1,b_1)^{-1}\\
    \times Y_\pm(w;t_1,a_1,b_1)
    \begin{pmatrix}
        \vec \psi_{t}(w)W_{t}(w) & \vec \psi_{t'}(w)W_{t'}(w) & \vec \psi_{t_1}(w)W_{t_1}(w) \\
        -1_{t_1>t} \vec g_{1}(w) \frac{W_{t}(w)}{W_{t_1}(w)} & -1_{t_1>t'} \vec g_{1}(w) \frac{W_{t'}(w)}{W_{t_1}(w)} & \vec 0_2
    \end{pmatrix}w^{h'} \frac{dw}{2\pi iw} \frac{dz}{2\pi iz}.
    \end{multline}
By \eqref{expr: matrix KB hat} we have that
\begin{equation}
    K_{t_1,a_1,b_1,\lambda}(t,h;t,h') = \widehat{K}_{t_1,a_1,b_1,\lambda}(h,h')_{1,2},
\end{equation}
and one can verify that $\widehat{K}_{t_1,a_1,b_1,\lambda}(h,h')_{1,2}$ is the same as $K_{t_1,a_1,b_1,\lambda}(t,h;t',h')$ given in Theorem \ref{thm: main}. This completes the proof of Theorem \ref{thm: main} in the $q=1$ case, and for $t$, $t'$, and $t_1$ distinct.

\subsection{Extension to non-distinct $t,t',t_1$}

The above expression was derived for $t$, $t'$, and $t_1$ all distinct.
If they are not distinct, $\Lambda_{t,t',t_1}$ consists of only one or two copies of the integers, and
$\Phi$ defined by \eqref{def:Phi} is no longer a bijection from $\ell^2(\mathbb Z;\mathbb C^3)$ to $\ell^2(\Lambda_{t,t',t_1})$. Instead it is a bijection from the subspace of $\ell^2(\mathbb Z;\mathbb C^3)$ with either two or three identical entries
to $\ell^2(\Lambda_{t,t',t_1})$. This has no further consequences in the analysis in Section \ref{subsec:matrix}, except for the fact that the operators $\widehat K$ in \ref{def:hatK}, $\widehat K_{t_1,a_1,b_1,\lambda}$ in \eqref{expr: matrix KB hat}, and $\widehat{1}_{B_{t_1, a_1, b_1}}$ in \eqref{def:1B} are only defined on the relevant subspace of $\ell^2(\mathbb Z;\mathbb C^3)$. Note also that some of the indicator functions in \eqref{def: V} vanish.
This modification has no further consequences in the analysis and computations in Sections \ref{subsec:Fourier}, \ref{subsec:toint}, and \ref{subsec:IIKS}.

\section{Ratios of Fredholm determinants} \label{section: ratio}
In this section we suppose that $K$ belongs to the space $\mathcal{S}_d(\vec W)$ and that it is given in the form \eqref{def: kernelRHP}. We consider $B_{\vec t, \vec a, \vec b}$ given by \eqref{def: gap}, for some $q\in\mathbb N$, and $\lambda \in \C$, such that $1-\lambda 1_{B_{\vec t, \vec a, \vec b}}K$ is invertible as an $\ell^2(\Lambda_N)$-operator. For notational simplicity, we will focus on the proof of \eqref{eq:ratioid1}. The proof of \eqref{eq:ratioid2} requires only straightforward adaptations of the arguments and computations. 

\medskip

Recall the definition \eqref{def: gap+} of the $j$-th raised subset $B_{\vec t, \vec a, \vec b}^{j+}$  associated to $B_{\vec t, \vec a, \vec b}$. Observe that
\begin{align}
    \frac{\det\left(1 - \lambda 1_{B_{\vec t, \vec a, \vec b}^{j+}}K\right)_{\ell^2(\Lambda_N)}}{\det\left(1 - \lambda 1_{B_{\vec t, \vec a, \vec b}}K\right)_{\ell^2(\Lambda_N)}} &= \det\left(1 - \lambda 1_{B_{\vec t, \vec a, \vec b}}K - \lambda 1_{\{(t_j,b_j)\}}K\right)_{\ell^2(\Lambda_N)}\det\left(1 - \lambda 1_{B_{\vec t, \vec a, \vec b}}K\right)_{\ell^2(\Lambda_N)}^{-1} \nonumber \\
    &= \det \left(1 - \lambda 1_{\{(t_j,b_j)\}}K\left(1 - \lambda 1_{B_{\vec t, \vec a, \vec b}}K\right)^{-1}\right)_{\ell^2(\Lambda_N)} \nonumber \\
    &= \det \left(1 - \lambda 1_{\{(t_j,b_j)\}}K_{\vec t, \vec a, \vec b,\lambda }\right)_{\ell^2(\Lambda_N)} \nonumber \\
    &= 1 - \lambda K_{\vec t, \vec a, \vec b, \lambda}(t_j,b_j;t_j,b_j).
\end{align}
Therefore, by Theorem \ref{thm: main},
\begin{multline} \label{expr: ratio1}
    \frac{\det\left(1 - \lambda 1_{B_{\vec t, \vec a, \vec b}^{j+}}K\right)_{\ell^2(\Lambda_N)}}{\det\left(1 - \lambda 1_{B_{\vec t, \vec a, \vec b}}K\right)_{\ell^2(\Lambda_N)}} = 1 - \lambda \oint_{\Sigma_2} \oint_{\Sigma_1} u^{-b_j} \frac{1}{W_{t_j}(u)} \frac{v}{u-v} \vec \phi_{t_j}(u;\vec t, \vec a, \vec b, \lambda)^T Y_+(u;\vec t, \vec a, \vec b, \lambda)^{-1} \\
    \times Y_-(v;\vec t, \vec a, \vec b, \lambda) \vec \psi_{t_j}(v;\vec t, \vec a, \vec b, \lambda) W_{t_j}(v) v^{b_j} \frac{dv}{2\pi iv} \frac{du}{2\pi iu},
\end{multline}
where $Y(\cdot;\vec t, \vec a, \vec b, \lambda)$ solves RH problem \ref{prbl: main}. Note that we make specific choices for the boundary values of $Y$. From \eqref{def: phipsi},
\begin{equation}
    \vec \phi_{t_j}(u;\vec t, \vec a, \vec b, \lambda) := \begin{bmatrix}
        \vec \phi_{t_j}(u) \\ \vec 0_{2j} \\ -1_{t_j>t_{j+1}} \vec f_{j+1}(u) W_{t_{j+1}}(u) \\ \vdots \\ -1_{t_j>t_q} \vec f_q(u) W_{t_q}(u)
    \end{bmatrix}, \quad \vec \psi_{t_j}(v;\vec t, \vec a, \vec b, \lambda) := \begin{bmatrix}
        \vec \psi_{t_j}(v) \\ -1_{t_1>t_j} \vec g_1(v) \frac{1}{W_{t_1}(v)} \\
        \vdots \\
        -1_{t_{j-1}>t_j} \vec g_{j-1}(v) \frac{1}{W_{t_{j-1}}(v)} \\ \vec 0_{2(q-j+1)}
    \end{bmatrix}.
\end{equation}
We introduce new notation. For $k,k'=1, \dots, q$, we define
\begin{align}
    S_k(z) &:= \vec \psi_{t_k}(z)W_{t_k}(z)f_k(z)^T, \\
    T_k(z) &:= g_k(z)\frac{1}{W_{t_k}(z)}\vec \phi_{t_k}(z)^T, \\
    U_{k,k'}(z) &:= 1_{t_k > t_k'} g_k(z) \frac{W_{t_{k'}}(z)}{W_{t_k}(z)}f_{k'}(z)^T.
\end{align}
Using this we can rewrite the jump matrices of RH problem \ref{prbl: main} given by \eqref{def: J1} and \eqref{def: J2} in the following way.
\begin{multline}
    J^{(1)}(z;\vec t, \vec a, \vec b, \lambda) = \\ \left[\begin{array}{c|c|c|c}
        J^{(1)}(z) & \begin{array}{ccc}
            S_1(z) & \cdots & S_{j-1}(z)
        \end{array} & S_j(z) & \begin{array}{ccc}
            S_{j+1}(z) & \cdots & S_{q}(z)
        \end{array} \\ \hline
        \vec 0_{2(j-1) \times d} & \begin{array}{ccc}
             I_2 & \cdots & -U_{1,j-1}(z) \\
             \vdots & \ddots & \vdots \\
             \vec 0_{2\times 2} & \cdots & I_2
         \end{array} & \begin{array}{c}
            -U_{1,j}(z) \\ \vdots \\ -U_{j-1,j}(z)
         \end{array} & \begin{array}{ccc}
             -U_{1,j+1}(z) & \cdots & -U_{1,q}(z) \\
             \vdots & \ddots & \vdots \\
             -U_{j-1,j+1}(z) & \cdots & -U_{j-1,q}(z)
         \end{array} \\ \hline
         \vec 0_{2 \times d} & \vec 0_{2 \times 2(j-1)} & I_2 & \begin{array}{ccc}
            -U_{j,j+1}(z) & \cdots & -U_{j,q}(z)
         \end{array} \\ \hline 
         \vec 0_{2(q-j+1)\times d} & \vec 0_{2(q-j+1)\times 2(j-1)} & \vec 0_{2(q-j+1)\times 2} & \begin{array}{ccc}
             I_2 & \cdots & -U_{j+1,q}(z) \\
             \vdots & \ddots & \vdots \\
             \vec 0_{2\times 2} & \cdots & I_2
         \end{array}
    \end{array}\right],
\end{multline}
and
\begin{multline}
    J^{(2)}(z;\vec t, \vec a, \vec b, \lambda) = \left[\begin{array}{c|c|c|c}
        J^{(2)}(z) & \vec 0_{d \times 2(j-1)} & \vec 0_{d \times 2} & \vec 0_{d \times 2(q-j+1)} \\ \hline
        \begin{array}{c}
            T_1(z) \\ \vdots \\ T_{j-1}(z)
        \end{array} & I_{2(j-1)} & \vec 0_{2(j-1) \times 2} & \vec 0_{2(j-1) \times 2(q-j+1)} \\ \hline 
        T_j(z) & \vec 0_{2 \times 2(j-1)} & I_2 & \vec 0_{2 \times 2(q-j+1)} \\ \hline
        \begin{array}{c}
            T_{j+1}(z) \\ \vdots \\ T_q(z)
        \end{array} & \vec 0_{2(q-j+1) \times 2(j-1)} & \vec 0_{2(q-j+1) \times 2} & I_{2(q-j+1)}
    \end{array}\right].
\end{multline}
We now introduce a transformation of RH problem \ref{prbl: main} by defining
\begin{equation}\label{def:Ytilde}
    \tilde{Y}(z;\vec t, \vec a, \vec b, \lambda) := \begin{cases}
        Y(z;\vec t, \vec a, \vec b, \lambda), \quad & \text{inside $\Sigma_1$,} \\
        Y(z;\vec t, \vec a, \vec b, \lambda) \tilde{J}(z), \quad & \text{between $\Sigma_1$ and $\Sigma_2$,} \\
        Y(z;\vec t, \vec a, \vec b, \lambda), \quad & \text{outside $\Sigma_2$,}
    \end{cases}
\end{equation}
where
\begin{equation}
    \tilde{J}(z) :=
    \left[\begin{array}{c|c|c}
        I_{d + 2(j-1)} & \vec 0_{(d+2(j-1)) \times 2} & \vec 0_{(d+2(j-1)) \times 2(q-j+1)} \\ \hline
        \vec 0_{2 \times (d+2(j-1))} & I_2 & \begin{array}{ccc}
            -U_{j,j+1}(z) & \cdots & -U_{j,q}(z)
         \end{array} \\ \hline 
         \vec 0_{2(q-j+1)\times (d+2(j-1))} & \vec 0_{2(q-j+1)\times 2} & \begin{array}{ccc}
             I_2 & \cdots & -U_{j+1,q}(z) \\
             \vdots & \ddots & \vdots \\
             \vec 0_{2\times 2} & \cdots & I_2
         \end{array}
    \end{array}\right].
\end{equation}
The effect of this transformation is that a multiplicative factor $\tilde J$ of the jump matrix $J^{(2)}$ on $\Sigma_2$ is transferred to the jump matrix $J^{(1)}$ on $\Sigma_1$. More precisely, $\tilde Y$ will have modified jump matrices
\begin{equation}\label{def:J12tilde}
\tilde J^{(2)}=J^{(2)}\tilde J,\qquad \tilde J^{(1)}=\tilde J^{-1}J^{(1)}.
\end{equation}
It follows that $\tilde{Y}(z;\vec t, \vec a, \vec b, \lambda)$ solves the following RH problem:
\begin{problem}
    \begin{enumerate}
            \item $\tilde Y(\cdot;\vec t, \vec a, \vec b, \lambda): \mathbb{C} \setminus (\Sigma_1 \cup \Sigma_2) \rightarrow \mathbb{C}^{(d+2q) \times (d+2q)}$ is analytic.
            \item 
            For $z \in \Sigma_1 \cup \Sigma_2$, $\tilde Y(.;\vec t, \vec a, \vec b, \lambda)$ has continuous boundary values, $\tilde Y_+(z;\vec t, \vec a, \vec b, \lambda)$ and $\tilde Y_-(z;\vec t, \vec a, \vec b, \lambda)$, and they satisfy the jump relations
            \begin{align}
            \tilde Y_+(z;\vec t, \vec a, \vec b, \lambda)&=\tilde Y_-(z;\vec t, \vec a, \vec b, \lambda) \tilde J^{(1)}(z;\vec t, \vec a, \vec b, \lambda), \qquad & \text{for $z\in\Sigma_1$},\\
            \tilde Y_+(z;\vec t, \vec a, \vec b, \lambda)&=\tilde Y_-(z;\vec t, \vec a, \vec b, \lambda)\tilde J^{(2)}(z;\vec t, \vec a, \vec b, \lambda), \qquad & \text{for $z\in\Sigma_2$}.
            \end{align}
            \item $\tilde Y(z;\vec t, \vec a, \vec b, \lambda) = \left(I_{d+2q} + \mathcal{O}(1/z)\right) \begin{bmatrix}
                Y^{(\infty)}(z) & \vec0_{d \times 2q} \\
                \vec 0_{2q \times d} & I_{2q}
            \end{bmatrix}$ as $z \rightarrow \infty$.
        \end{enumerate}
\end{problem}
The new jump matrices are given explicitly by
\begin{multline}\label{def:tildeJ1}
    \tilde J^{(1)}(z;\vec t, \vec a, \vec b, \lambda) = \\ \left[\begin{array}{c|c|c|c}
        J^{(1)}(z) & \begin{array}{ccc}
            S_1(z) & \cdots & S_{j-1}(z)
        \end{array} & S_j(z) & \begin{array}{ccc}
            S_{j+1}(z) & \cdots & S_{q}(z)
        \end{array} \\ \hline
        \vec 0_{2(j-1) \times d} & \begin{array}{ccc}
             I_2 & \cdots & -U_{1,j-1}(z) \\
             \vdots & \ddots & \vdots \\
             \vec 0_{2\times 2} & \cdots & I_2
         \end{array} & \begin{array}{c}
            -U_{1,j}(z) \\ \vdots \\ -U_{j-1,j}(z)
         \end{array} & \begin{array}{ccc}
             -U_{1,j+1}(z) & \cdots & -U_{1,q}(z) \\
             \vdots & \ddots & \vdots \\
             -U_{j-1,j+1}(z) & \cdots & -U_{j-1,q}(z)
         \end{array} \\ \hline
         \vec 0_{2 \times d} & \vec 0_{2 \times 2(j-1)} & I_2 & \vec 0_{2 \times 2(q-j+1)} \\ \hline 
         \vec 0_{2(q-j+1)\times d} & \vec 0_{2(q-j+1)\times 2(j-1)} & \vec 0_{2(q-j+1)\times 2} & I_{2(q-j+1)}
    \end{array}\right],
\end{multline}
and
\begin{multline}\label{def:tildeJ2}
    \tilde J^{(2)}(z;\vec t, \vec a, \vec b, \lambda) = \left[\begin{array}{c|c|c|c}
        J^{(2)}(z) & \vec 0_{d \times 2(j-1)} & \vec 0_{d \times 2} & \vec 0_{d \times 2(q-j+1)} \\ \hline
        \begin{array}{c}
            T_1(z) \\ \vdots \\ T_{j-1}(z)
        \end{array} & I_{2(j-1)} & \vec 0_{2(j-1) \times 2} & \vec 0_{2(j-1) \times 2(q-j+1)} \\ \hline 
        T_j(z) & \vec 0_{2 \times 2(j-1)} & I_2 & \begin{array}{ccc}
            -U_{j,j+1}(z) & \cdots & -U_{j,q}(z)
         \end{array} \\ \hline
        \begin{array}{c}
            T_{j+1}(z) \\ \vdots \\ T_q(z)
        \end{array} & \vec 0_{2(q-j+1) \times 2(j-1)} & \vec 0_{2(q-j+1) \times 2} & \begin{array}{ccc}
             I_2 & \cdots & -U_{j+1,q}(z) \\
             \vdots & \ddots & \vdots \\
             \vec 0_{2\times 2} & \cdots & I_2
         \end{array}
    \end{array}\right].
\end{multline}
One verifies that
\begin{equation}
    \vec \phi_{t_j}(z;\vec t, \vec a, \vec b, \lambda)^T \tilde J(z) = \vec \phi_{t_j}(z;\vec t, \vec a, \vec b, \lambda)^T, \quad \tilde J(z)^{-1} \vec \psi_{t_j}(z;\vec t, \vec a, \vec b, \lambda) = \vec \psi_{t_j}(z;\vec t, \vec a, \vec b, \lambda),
\end{equation}
such that we can substitute $Y(\cdot;\vec t, \vec a, \vec b)$ for $\tilde Y(\cdot;\vec t, \vec a, \vec b)$ in \eqref{expr: ratio1} to obtain
\begin{multline}\label{eq:ratioidproof1}
    \frac{\det\left(1 - \lambda 1_{B_{\vec t, \vec a, \vec b}^{j+}}K\right)_{\ell^2(\Lambda_N)}}{\det\left(1 - \lambda 1_{B_{\vec t, \vec a, \vec b}}K\right)_{\ell^2(\Lambda_N)}} = 1 - \lambda \oint_{\Sigma_2} \oint_{\Sigma_1} u^{-b_j} \frac{1}{W_{t_j}(u)} \frac{v}{u-v} \vec \phi_{t_j}(u;\vec t, \vec a, \vec b)^T \tilde Y_+(u;\vec t, \vec a, \vec b)^{-1} \\
    \times \tilde Y_-(v;\vec t, \vec a, \vec b) \vec \psi_{t_j}(v;\vec t, \vec a, \vec b) W_{t_j}(v) v^{b_j} \frac{dv}{2\pi iv} \frac{du}{2\pi iu}.
\end{multline}

\medskip

Let $e_{d+2j-1}$ denote the standard unit vector of size $d+2q$ with a $1$ in the $(d+2j-1)$-position. We observe that, for $z \in \Sigma_1$,
\begin{align}
    \tilde Y_+(z;\vec t, \vec a, \vec b, \lambda) e_{d+2j-1} &= \tilde Y_-(z;\vec t, \vec a, \vec b, \lambda) \tilde J^{(1)}(z;\vec t, \vec a, \vec b, \lambda)e_{d+2j-1} \nonumber \\
    &= \tilde Y_-(z;\vec t, \vec a, \vec b, \lambda) \left(\lambda \vec \psi_{t_j}(z;\vec t, \vec a, \vec b, \lambda)W_{t_j}(z) z^{b_j} + e_{d+2j-1}\right).
\end{align}
This implies that
\begin{multline}
    \oint_{\Sigma_2} \oint_{\Sigma_1} u^{-b_j} \frac{1}{W_{t_j}(u)} \frac{v}{u-v} \vec \phi_{t_j}(u;\vec t, \vec a, \vec b, \lambda)^T \tilde Y_\pm(u;\vec t, \vec a, \vec b, \lambda)^{-1} \\
    \times \tilde Y_-(v;\vec t, \vec a, \vec b, \lambda) \left(\lambda \vec \psi_{t_j}(v;\vec t, \vec a, \vec b, \lambda)W_{t_j}(v) v^{b_j} + e_{d+2j-1}\right) \frac{dv}{2\pi iv} \frac{du}{2\pi iu} \\ = \oint_{\Sigma_2} \oint_{\Sigma_1} u^{-b_j} \frac{1}{W_{t_j}(u)} \frac{v}{u-v} \vec \phi_{t_j}(u;\vec t, \vec a, \vec b, \lambda)^T \tilde Y_\pm(u;\vec t, \vec a, \vec b, \lambda)^{-1} \\ \times \tilde Y_+(v;\vec t, \vec a, \vec b, \lambda) e_{d+2j-1} \frac{dv}{2\pi iv} \frac{du}{2\pi iu} = 0,
\end{multline}
since the integrand extends to an analytic function of $v$ inside $\Sigma_1$. We therefore obtain from \eqref{eq:ratioidproof1} that
\begin{multline} \label{expr: H1}
\frac{\det\left(1 - \lambda 1_{B_{\vec t, \vec a, \vec b}^{j+}}K\right)_{\ell^2(\Lambda_N)}}{\det\left(1 - \lambda 1_{B_{\vec t, \vec a, \vec b}}K\right)_{\ell^2(\Lambda_N)}} = 1 + \oint_{\Sigma_2} \oint_{\Sigma_1} u^{-b_j} \frac{1}{W_{t_j}(u)} \frac{v}{u-v} \vec \phi_{t_j}(u;\vec t, \vec a, \vec b, \lambda)^T \tilde Y_\pm(u;\vec t, \vec a, \vec b, \lambda)^{-1} \\ \times \tilde Y_-(v;\vec t, \vec a, \vec b, \lambda) e_{d+2j-1} \frac{dv}{2\pi iv} \frac{du}{2\pi iu}. 
\end{multline}
Similarly, for $z \in \Sigma_2$,
\begin{align*}
    e_{d+2j-1}^T \tilde Y_-(z;\vec t, \vec a, \vec b, \lambda)^{-1} &= e_{d+2j-1}^T J^{(2)}(z;\vec t, \vec a, \vec b, \lambda) \tilde Y_+(z;\vec t, \vec a, \vec b, \lambda)^{-1}, \nonumber \\
    &= \left(z^{-b_j} \frac{1}{W_{t_j}(z)} \vec \phi_{t_j}(z;\vec t, \vec a, \vec b, \lambda)^T + e_{d+2j-1}^T\right) \tilde Y_+(z;\vec t, \vec a, \vec b, \lambda)^{-1},
\end{align*}
such that
\begin{multline*}
    \oint_{\Sigma_2} \oint_{\Sigma_1} \frac{v}{u-v} \left(u^{-b_j} \frac{1}{W_{t_j}(u)} \vec \phi_{t_j}(u;\vec t, \vec a, \vec b, \lambda)^T + e_{d+2j-1}^T\right) \tilde Y_+(u;\vec t, \vec a, \vec b, \lambda)^{-1} \\ \times \tilde Y_-(v;\vec t, \vec a, \vec b, \lambda) e_{d+2j-1} \frac{dv}{2\pi iv} \frac{du}{2\pi iu} \\ = \oint_{\Sigma_2} \oint_{\Sigma_1} \frac{v}{u-v} e_{d+2j-1}^T \tilde Y_-(u;\vec t, \vec a, \vec b, \lambda)^{-1} \\ \times \tilde Y_-(v;\vec t, \vec a, \vec b, \lambda) e_{d+2j-1} \frac{dv}{2\pi iv} \frac{du}{2\pi iu} = 0,
\end{multline*}
since the integrand extends to an analytic function of $u$ outside $\Sigma_2$, and
\[\text{$\frac{v}{u-v}e_{d+2j-1}^T \tilde Y(u;\vec t, \vec a, \vec b, \lambda)^{-1} = \mathcal{O}(1/u)$ as $u \rightarrow \infty$.}\] Hence, by \eqref{expr: H1}, we obtain
\begin{multline} \label{expr: H2}
    \frac{\det\left(1 - \lambda 1_{B_{\vec t, \vec a, \vec b}^{j+}}K\right)_{\ell^2(\Lambda_N)}}{\det\left(1 - \lambda 1_{B_{\vec t, \vec a, \vec b}}K\right)_{\ell^2(\Lambda_N)}} = 1
      - \oint_{\Sigma_2} \oint_{\Sigma_1} \frac{v}{u-v} e_{d+2j-1}^T \tilde Y_+(u;\vec t, \vec a, \vec b, \lambda)^{-1} \\ \times\ \tilde Y_-(v;\vec t, \vec a, \vec b, \lambda) e_{d+2j-1} \frac{dv}{2\pi iv} \frac{du}{2\pi iu}. 
\end{multline}
Next, we observe that
\begin{align*} 
    &\oint_{\Sigma_2} \oint_{\Sigma_1} \frac{v}{u-v}  e_{d+2j-1}^T \tilde Y_+(u;\vec t, \vec a, \vec b, \lambda)^{-1} \tilde Y_-(v;\vec t, \vec a, \vec b, \lambda) e_{d+2j-1} \frac{dv}{2\pi iv} \frac{du}{2\pi iu} \\
    &= 1 + \oint_{\Sigma_1} \oint_{\Sigma_2} \frac{v}{u-v} e_{d+2j-1}^T \tilde Y_-(u;\vec t, \vec a, \vec b, \lambda)^{-1} \tilde Y_+(v;\vec t, \vec a, \vec b, \lambda) e_{d+2j-1} \frac{dv}{2\pi iv} \frac{du}{2\pi iu}.\end{align*} 
    To see this, we deform the $u$-integration contour from $\Sigma_2$ to $\Sigma_1$ and vice-versa for the $v$-integration contour. In doing so we pick up the residue at $u=v$, which yields the above identity.
  By \eqref{expr: H2}, we obtain   
    \begin{align*}
&\frac{\det\left(1 - \lambda 1_{B_{\vec t, \vec a, \vec b}^{j+}}K\right)_{\ell^2(\Lambda_N)}}{\det\left(1 - \lambda 1_{B_{\vec t, \vec a, \vec b}}K\right)_{\ell^2(\Lambda_N)}} \\
&\ = 
- \oint_{\Sigma_1} \oint_{\Sigma_2} \frac{v}{u-v} e_{d+2j-1}^T \tilde Y_-(u;\vec t, \vec a, \vec b, \lambda)^{-1} \tilde Y_+(v;\vec t, \vec a, \vec b, \lambda) e_{d+2j-1} \frac{dv}{2\pi iv} \frac{du}{2\pi iu}
\\&\  = - \oint_{\Sigma_1} \oint_{\Sigma_2} \frac{v}{u-v} e_{d+2j-1}^T \tilde Y_+(u;\vec t, \vec a, \vec b, \lambda)^{-1} \tilde Y_-(v;\vec t, \vec a, \vec b, \lambda) e_{d+2j-1} \frac{dv}{2\pi iv} \frac{du}{2\pi iu}, \end{align*}
where we used the jump relations $\tilde Y_+=\tilde Y_- \tilde J^{(1)}$ on $\Sigma_1$ and $\tilde Y_+=\tilde Y_-\tilde J^{(2)}$ on $\Sigma_2$ as well as the expressions \eqref{def:tildeJ1}--\eqref{def:tildeJ2}.
Then, since the $u$-integrand is analytic inside $\Sigma_1$ except for a simple pole at $u=0$, we compute the $u$-integral using the residue theorem and obtain
\[\frac{\det\left(1 - \lambda 1_{B_{\vec t, \vec a, \vec b}^{j+}}K\right)_{\ell^2(\Lambda_N)}}{\det\left(1 - \lambda 1_{B_{\vec t, \vec a, \vec b}}K\right)_{\ell^2(\Lambda_N)}}=
       \oint_{\Sigma_2} e_{d+2j-1}^T Y(0;\vec t, \vec a, \vec b, \lambda)^{-1} \tilde Y_-(v;\vec t, \vec a, \vec b, \lambda) e_{d+2j-1} \frac{dv}{2\pi iv}.\] 
Finally, since the $v$-integrand is analytic outside $\Sigma_2$, we can evaluate the $v$-integral by computing the residue at infinity. Recalling that $\lim_{z\to\infty}\tilde Y(z;\vec t, \vec a, \vec b, \lambda)=I_{d+2q}$, we obtain 
       \[\frac{\det\left(1 - \lambda 1_{B_{\vec t, \vec a, \vec b}^{j+}}K\right)_{\ell^2(\Lambda_N)}}{\det\left(1 - \lambda 1_{B_{\vec t, \vec a, \vec b}}K\right)_{\ell^2(\Lambda_N)}}= e_{d+2j-1}^T \tilde Y(0;\vec t, \vec a, \vec b, \lambda)^{-1} e_{d+2j-1}.\]
By \eqref{def:Ytilde}, the right hand side is equal to the $(d+2j-1,d+2j-1)$-entry of the matrix $Y(0;\vec t, \vec a, \vec b, \lambda)^{-1}$. Equivalently, it is equal to the
$(d+2j-1,d+2j-1)$-entry of $Y(0;\vec t, \vec a, \vec b, \lambda)^{-T}$, and this completes the proof of \eqref{eq:ratioid1}. We prefer this 
expression in terms of the inverse transpose of $Y$ because $Y^{-T}$ itself solves a convenient RH problem, with right-multiplicative jump matrices, recall Remark \ref{remark:inversetranspose}.

\section{Domino tilings of reduced  Aztec diamonds}\label{section:4}

\paragraph{Domino tilings of the Aztec diamond.} The Aztec diamond $A_N$ of size $N$ is the union of all the squares $[m,m+1]\times[n,n+1] \subset \mathbb{R}^2$, $m,n\in \mathbb{Z}$, which lie within the region $\{(x,y) \in \mathbb{R}^2: |x|+|y| \leq N+1 \}$. A horizontal domino is a $1 \times 2$ rectangle given by $[m,m+2]\times[n,n+1] \subset \mathbb{R}^2$, for $m,n\in \mathbb{Z}$ and a vertical domino is a $2 \times 1$ rectangle given by $[m,m+1]\times[n,n+2] \subset \mathbb{R}^2$, for $m,n\in \mathbb{Z}$. A domino tiling of $A_N$ is a configuration of horizontal and vertical dominoes that cover the whole domain $A_N$ with disjoint interiors, see Figure \ref{fig:AD} (left). The set of all tilings of $A_N$ is denoted by $\mathcal{T}(A_N)$.
By colouring the Aztec diamond in a chequerboard fashion so that the leftmost square of each row in the top half is white, we are able to distinguish four types of dominoes. We call a horizontal domino a north/south domino if its leftmost square is white/black, and we call a vertical domino a west/east domino if its upper square is white/black.

\begin{figure}[t!]

\begin{subfigure}{0.5\textwidth}
\begin{center}
\includegraphics[keepaspectratio, width=0.8\linewidth]{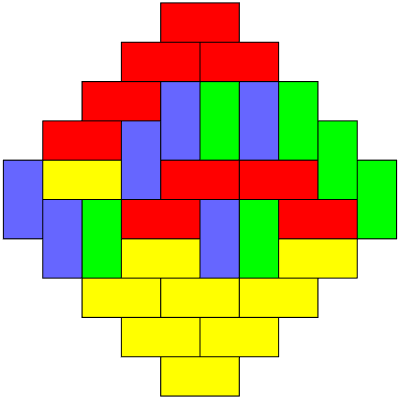} 
\end{center}
\end{subfigure}
\begin{subfigure}{0.5\textwidth}
\begin{center}
\includegraphics[keepaspectratio, width=0.8\linewidth]{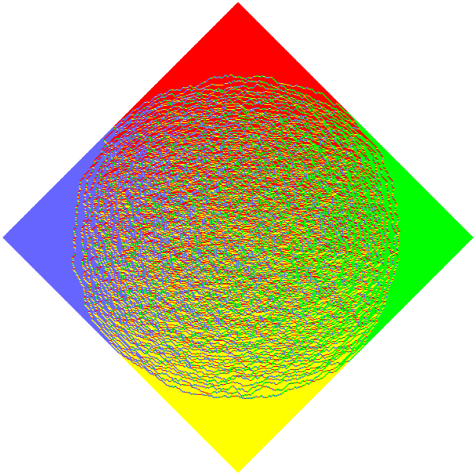} 
\end{center}
\end{subfigure}

\caption{Left: A domino tiling of $A_5$. the north, south, east, and west dominoes are shown in red, yellow, green, and blue respectively. Right: A domino tiling of $A_{300}$. Both tilings are sampled uniformly.}
\label{fig:AD}
\end{figure}

\medskip

Domino tilings of Aztec diamonds were introduced in \cite{elkies1991alternatingsignmatricesdomino, elkies1991alternatingsignmatricesdomino2} where the number of tilings of $A_N$ was calculated to be $2^{N(N+1)/2}$. More generally, it was shown that, for $a \in (0,1),$
\begin{equation}
    F_N(a) := \sum_{T \in \mathcal{T}(A_N)} a^{v(T)} = (1+a^2)^{N(N+1)/2},
\end{equation}
where $v(T)$ is the number of vertical dominoes in a tiling $T$. 
We consider the probability distribution on $\mathcal{T}(A_N)$ given by 
\begin{equation}\label{def:probtiling}
\mathbb P_N\left(T\right):=\frac{a^{v(T)}}{(1+a^2)^{N(N+1)/2}}.
\end{equation}
Here, the uniform distribution corresponds to $a=1$. One observes that under $\mathbb P_N$, a random domino tiling of $A_N$ exhibits for large $N$, with overwhelming probability, frozen regions near the corners of the Aztec diamond where only one type of domino is present, and a rough region where all types of dominoes occur, see Figure \ref{fig:AD} (right). The existence of the asymptotic boundary separating the rough and frozen regions, called the arctic curve, was first established in \cite{jockusch1998randomdominotilingsarctic}; see also \cite{cohn1996localstatistics, cohn2001variationalprincipal}. 
Local correlations between dominoes are described by the Airy line ensemble near the arctic curve \cite{Johansson2003TheAC}, except near the points where the arctic curve meets the boundary of the Aztec diamond, where they are described by the GUE-minors/corners process \cite{Johansson2006GUEminors}.
In the rough region, local correlations and height function fluctuations are connected to Gibbs measures \cite{Chhita2015asymptoticdominostatistics, Kenyon2006amoebae}
and the Gaussian Free Field \cite{Kenyon2000DominosAT, Bufetov2016FluctuationsOP}.

\medskip

Random domino tilings of the Aztec diamond can be translated to point processes. Under the probability measure \eqref{def:probtiling}, this point process has the remarkable feature of being determinantal, with an explicit double-contour integral expression for its correlation kernel \cite{Johansson2003TheAC}.
There are only a few other determinantal domino tiling models in which explicit expressions for the correlation kernel are known. One example consists of domino tilings of the Aztec diamond with probability measures induced by doubly-periodic weights, whose asymptotic behaviour has been studied extensively in recent years \cite{Chhita2016dominostatistics, Chhita2014couplingfunctions, Chhita2015asymptoticdominostatistics, Duits2017TheTA, Berggren2019toeplitz, Berggren2021, Berggren2025geometry, Borodin2023biased}. A key result for this model is an exact expression for the correlation kernel established in \cite{Duits2017TheTA}, which has strong similarities with \eqref{def:kernelresult}. 
Other examples consist of domino tilings on domains consisting of two Aztec diamonds glued together in specific ways, see \cite{ADLER2014518, adler2015, adler2018nonconvexpolygons}.
Here, we will consider domino tilings of reduced Aztec diamonds, introduced and studied in \cite{CP2013, CP2015, CPS2019, KP2016, charlier2025countingdominolozengetilings, CharlierClaeys2025asymptotics}.
While this model is known to be determinantal, more precisely it is a multi-time version of a discrete orthogonal polynomial ensemble, no explicit expression for its multi-time correlation has been obtained so far. We will apply Theorem \ref{thm: main} in order to obtain an exact double-contour integral expression for the multi-time correlation kernel.

\paragraph{Non-intersecting paths.} Before stating our result, we briefly summarise how one obtains the point process underlying domino tilings of the Aztec diamond, arising from an equivalent representation as configurations of non-intersecting paths, as shown in Figure \ref{fig:ADpaths}. We closely follow the explanation given in \cite{Johansson2003TheAC} where more details can be found.

\begin{figure}[t!]

\begin{subfigure}{0.5\textwidth}
\begin{center}
\includegraphics[keepaspectratio, width=0.8\linewidth]{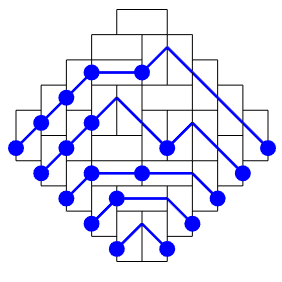} 
\end{center}
\end{subfigure}
\begin{subfigure}{0.5\textwidth}
\begin{center}
\includegraphics[keepaspectratio, width=0.8\linewidth]{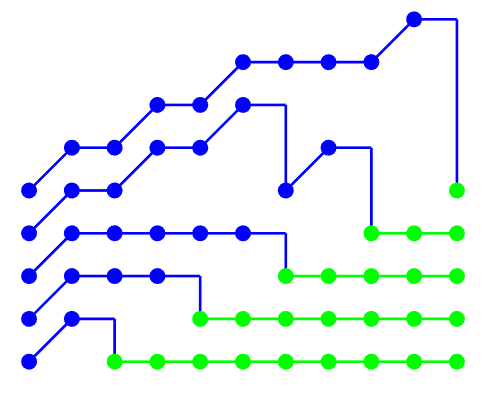} 
\end{center}
\end{subfigure}

\caption{For the same tiling of $A_5$, the paths are shown on the left and the corresponding non-intersecting paths are shown on the right.}
\label{fig:ADpaths}
\end{figure}

\medskip 

We mark a domino tiling of $A_N$ by drawing an ascending diagonal line segment on each west domino, a descending diagonal line segment on each east domino, and a horizontal line segment on each south domino. This results in a collection of $N$ non-intersecting paths, as shown in Figure \ref{fig:ADpaths} (left).

\medskip

We transform this collection of non-intersecting paths into a new collection of non-intersecting paths on the lattice $\Lambda_{2N}$ as follows. Starting with the top path, we draw a new path starting at the point $(0,0)$. 
Reading the original path from left to right, we translate every up-right step (corresponding to a west domino) to the combination of an up-right step $(t,h)\to (t+1,h+1)$ followed by a right step $(t+1,h+1)\to (t+2,h+1)$; we translate every right step (corresponding to a south domino) to a double right step $(t,h)\to (t+1,h)\to (t+2,h)$; and we translate every down-right step (corresponding to an east domino) to a single down step $(t,h)\mapsto (t,h-1)$.
This procedure is repeated for the other paths marking the domino tiling, but now by starting at lattice points $(0,-1),(0,-2),\ldots, (0,-N+1)$. 
This results in $N$ non-intersecting lattice paths with starting points $(0,0),(0,-1),\ldots, (0,-N+1)$ and endpoints $(2N,0), (2N-2,-1),\ldots, (2,-N+1)$, as shown in Figure \ref{fig:ADpaths} (right).
We complement these paths (which depend on the domino tiling) with horizontal lines (which are independent of the chosen domino tiling) connecting $(2N,0), (2N-2,-1),\ldots, (2,-N+1)$ with $(2N,0), (2N,-1),\ldots, (2N,-N+1)$, and with an infinite number of horizontal paths connecting $(0,-N), (0,-N-1), (0,-N-2),\ldots $ with $(2N,-N), (2N,-N-1), (2N,-N-2),\ldots $.

\medskip

We draw a point at each lattice point of $\Lambda_{2N}$ where a path passes, except when it precedes a down step, again see Figure \ref{fig:ADpaths} (right). From this collection of points, one can first reconstruct the non-intersecting paths on the lattice, secondly the non-intersecting paths marking the domino tiling, and finally the domino tiling itself.
In other words, there is a bijection between domino tilings of $A_N$ and such admissible point configurations. Here, admissible means that the points correspond to an infinite configuration of non-intersecting paths
starting at $(0,-1),(0,-2),\ldots$ and ending at $(2N,0), (2N,-1),\ldots$, which take up-right steps $(2s,h)\to (2s+1,h+1)$, right steps $(2s,h)\to (2s+1,h)$ or down steps $(2s,h), (2s,h-1)$ at even time levels $t=2s$, and right steps $(2s-1,h)\to (2s,h)$ at odd time levels $t=2s-1$, and which can only take right steps $(t,h)\to (t+1,h)$ for $2h\leq -2N+t$.

\medskip

Since every vertical domino corresponds to an up-right or a down step of a path, the probability distribution on $\mathcal T(A_N)$ can be transported to a probability distribution on the set of admissible point configurations, by assigning probability 
$\frac{a^{v(T)}}{(1+a^2)^{N(N+1)/2}}$
to each admissible point configuration $T$, where $v(T)$ is now the total number of up-right and down steps in the corresponding configuration of paths.

\medskip
 
The above construction may seem artificial, but it is very powerful because the resulting random configuration of points $\mathcal X$ forms a determinantal point process which we call $\mathbb{P}_{N}$.
The correlation kernel for this determinantal point process is given by  \cite{Johansson2003TheAC}
\begin{multline}\label{def:kernelAD}
    K_N(t,h;t',h') = -1_{t>t'} \oint_{\Sigma_1} z^{-h} \frac{W_{t'}(z)}{W_t(z)} z^{h'} \frac{dz}{2\pi iz} \\
    + \oint_{\Sigma_2} \oint_{\Sigma_1} u^{-h} \frac{1}{W_t(u)} \frac{v}{u-v} W_{t'}(v) v^{h'} \frac{dv}{2\pi iv} \frac{du}{2\pi iu}, 
\end{multline}
where $\Sigma_1$ is a closed contour encircling the points $0$ and $a$ but not $-1/a$, $\Sigma_2$ is a closed contour encircling $\Sigma_1$, and for $t = 2s-\epsilon$, for $s \in \{1, \dots, N\}$ and $\epsilon \in \{0,1\}$,
\begin{equation} \label{def: ADW}
    W_t(z) = \frac{z^{N-s+\epsilon - 1}}{(z-a)^{N-s+\epsilon}(1+az)^{s}}, \quad z \in \mathbb{C}.
\end{equation}
For $0<a<1$, we can take $\Sigma_1$ to be the unit circle and then $K_N$ is precisely of the form \eqref{def: K1} with $d=1$ and $\vec\rho=\vec\sigma=1$. Observe therefore that $K_N$ belongs to the space $\mathcal{S}_1(\vec W)$ with the specific sequence of functions $W_t$ given by \eqref{def: ADW}.

\paragraph{Domino tilings of reduced Aztec diamonds.} A reduced Aztec diamond is the Aztec diamond with an approximate rectangle removed, see Figure \ref{fig:ADreduced}. Such regions were first introduced by Colomo and Pronko in \cite{colomo2015}. More precisely, for $s = 1, \dots, N$ and $k = 1, \dots , s+1$, the reduced Aztec diamond $A_N^{s,k}$ is the union of all the lattice squares within the region
\[\left\{(x,y) \in \mathbb{R}^2 : |x|+|y| \leq N+1 \,\, \text{and} \,\, |x+N-2s+k| \geq y - k + 1  \right\}.\]
This results in $N-s+1$ diagonal rows of $s-k+1$ horizontal dominoes being removed from the Aztec diamond. The set of all tilings of $A_N^{s,k}$ is denoted by $\mathcal{T}(A_N^{s,k})$. Alternatively, we can view tilings of $A_N^{s,k}$ as tilings of $A_N$ for which the removed region $A_N \setminus A_N^{s,k}$ is tiled with a frozen region of north dominoes. Such tilings of $A_N$ correspond in turn to admissible point configurations $\mathcal{X}$ which contain no points in the subset
\begin{equation} \label{def: gapAD}
    B_N^{s,k} := \{2s-1\} \times \{k,k+1,\dots,s\} \subset \Lambda_{2N}.
\end{equation}
Therefore, in the form \eqref{def: pregap}, $B_N^{s,k} = B_{2s-1,k,s+1}$.

\begin{figure}[t!]

\begin{subfigure}{0.5\textwidth}
\begin{center}
\includegraphics[keepaspectratio, width=0.8\linewidth]{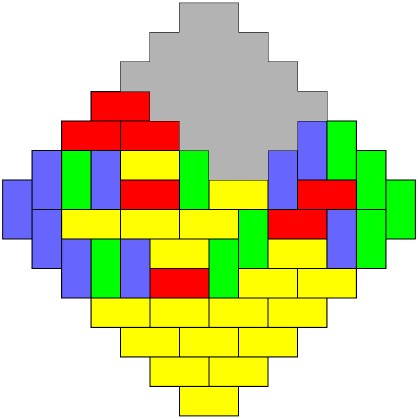} 
\end{center}
\end{subfigure}
\begin{subfigure}{0.5\textwidth}
\begin{center}
\includegraphics[keepaspectratio, width=0.8\linewidth]{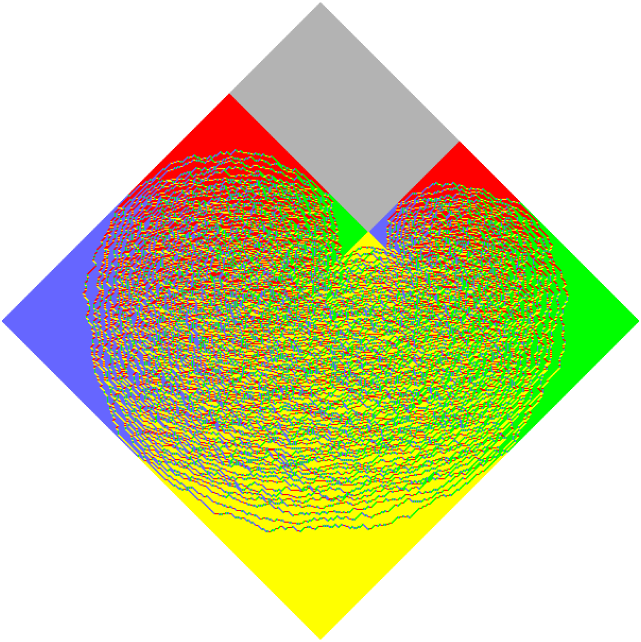} 
\end{center}
\end{subfigure}

\caption{Left: A domino tiling of $A_7^{2,5}$. Right: A domino tiling of $A_{300}^{215,85}$. Both tilings are sampled uniformly.}
\label{fig:ADreduced}
\end{figure}

\medskip

We define a probability distribution on the tilings of $A_N^{s,k}$ with the same system of weights as for tilings of $A_N$, i.e. vertical dominoes have weight $a$ and horizontal dominoes have weight $1$, such that
\begin{equation}
    \mathbb P_N^{s,k}\left(T\right)=\frac{a^{v(T)}}{\sum_{T'\in\mathcal T(A_N^{s,k})}a^{v(T')}}.
\end{equation}
Thus the distribution of tilings of $A_N^{s,k}$ is given by the distribution of tilings of $A_N$ conditioned on the event that $A_N \setminus A_N^{s,k}$ is frozen with north dominoes. Equivalently, on the level of point configurations $\mathcal X$, we condition $\mathbb{P}_N$ on the event that $\mathcal{X} \cap B_N^{s,k} = \emptyset$. Therefore, this produces a new point process $\mathbb{P}_N^{s,k}$ on $\Lambda_{2N} \setminus B_N^{s,k}$ whose correlation kernel is given by the integral kernel of the operator
\begin{equation}\label{def:ADkernel}
    K_N^{s,k} := \left(K_N\right)_{2s-1,k,s+1}.
\end{equation}
We can therefore apply Theorem \ref{thm: main} to obtain an expression of the form \eqref{def:kernelresult}
for the kernel $K_N^{s,k}$, with $Y_\pm$ expressed in terms of a $3\times 3$ RH problem. This RH problem can be further simplified and reduced to the following RH problem of size $2\times 2$.
\begin{problem} \label{prbl: AD}
    \begin{enumerate}
            \item $Y: \mathbb{C} \setminus (\Sigma_1 \cup \Sigma_2) \rightarrow \mathbb{C}^{2 \times 2}$ is analytic.
            \item 
            For $z \in \Sigma_1 \cup \Sigma_2$, $Y$ has continuous boundary values, $Y_+$ and $Y_-$, and they satisfy the jump relations
            \begin{align}
            Y_+(z)&=Y_-(z)\begin{pmatrix}
                1 & -z^kW_{2s-1}(z) \\
                0 & 1
            \end{pmatrix}, \qquad & \text{for $z\in\Sigma_1$},\\
            Y_+(z) &= Y_-(z)\begin{pmatrix}
                1 & 0 \\ 
                \frac{1}{z^{k}W_{2s-1}(z)} & 1
            \end{pmatrix}, \qquad & \text{for $z\in\Sigma_2$}.
            \end{align}
            \item $Y(z) = I_{2} + \mathcal{O}(1/z)$ as $z \rightarrow \infty$.
        \end{enumerate}
\end{problem}

\begin{theorem}
    Let $s\in\{1, \dots, N\}$ and $k \in\{1, \dots, s+1\}$. The kernel $K_N^{s,k}$ of the determinantal point process $\mathbb P_N^{s,k}$ is given by
    \begin{multline}\label{eq:kernelADresult}
        K_N^{s,k}(t,h;t',h') = (1_{t>2s-1}1_{2s-1>t'}-1_{t>t'}) \oint_{\Sigma_1} z^{-h} \frac{W_{t'}(z)}{W_t(z)} z^{h'} \frac{dz}{2\pi iz} \\
        + \oint_{\Sigma_2} \oint_{\Sigma_1} u^{-h} \frac{1}{W_{t}(u)} \frac{v}{u-v} \vec \phi_t(u)^T Y_+(u)^{-1}
        Y_-(v) \vec \psi_{t'}(v) W_{t'}(v) v^{h'} \frac{dv}{2\pi iv} \frac{du}{2\pi iu},
    \end{multline}
        where
    \begin{equation}
        \vec \phi_t(u) = \begin{pmatrix}
            1 \\ 1_{t>2s-1}u^{k}W_{2s-1}(u)
        \end{pmatrix}, \quad \vec \psi_{t'}(v) = \begin{pmatrix}
            1 \\ -1_{2s-1>t'}\frac{1}{v^{k}W_{2s-1}(v)}
        \end{pmatrix},
    \end{equation}
    and where $Y$ solves RH problem \ref{prbl: AD}.
    
\end{theorem}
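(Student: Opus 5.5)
The plan is to specialise Theorem \ref{thm: main} and then reduce the resulting $3\times 3$ RH problem to RH problem \ref{prbl: AD}. The specialisation uses $d=1$, $q=1$, $\lambda=1$, $t_1=2s-1$, $a_1=k$, $b_1=s+1$, together with the canonical choice \eqref{def:canonical}: $Y\equiv 1$, $\phi_t=\psi_t=1$, $J^{(1)}=J^{(2)}=1$. By \eqref{def: J1b}--\eqref{def: prephipsi} this gives a $3\times3$ RH problem for $Y(\cdot;2s-1,k,s+1,1)$ with jumps
\[
\begin{pmatrix}1 & z^{s+1}W_{2s-1}(z) & -z^{k}W_{2s-1}(z)\\ 0&1&0\\0&0&1\end{pmatrix}\ \text{on }\Sigma_1,
\qquad
\begin{pmatrix}1&0&0\\ \frac{1}{z^{s+1}W_{2s-1}(z)}&1&0\\ \frac{1}{z^{k}W_{2s-1}(z)}&0&1\end{pmatrix}\ \text{on }\Sigma_2 ,
\]
with $Y\to I_3$ at infinity. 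The vectors are
\[
\vec\phi_t=\bigl(1,\,-1_{t>2s-1}u^{s+1}W_{2s-1},\,1_{t>2s-1}u^kW_{2s-1}\bigr)^T,\qquad
\vec\psi_{t'}=\bigl(1,\,-1_{2s-1>t'}v^{-s-1}/W_{2s-1},\,-1_{2s-1>t'}v^{-k}/W_{2s-1}\bigr)^T .
\]
Deleting the middle components, which carry the exponent $b_1=s+1$, leaves exactly the $\vec\phi_t,\vec\psi_{t'}$ of the statement. So the whole task is to eliminate the index associated with $b_1$.

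The elimination rests on an explicit computation. With $W_{2s-1}(z)=z^{N-s}(z-a)^{-(N-s+1)}(1+az)^{-s}$ one gets
\[
\frac{1}{z^{s+1}W_{2s-1}(z)}=\frac{(z-a)^{N-s+1}(1+az)^s}{z^{N+1}} .
\]
This function is analytic and bounded outside $\Sigma_2$ away from $-1/a$, and tends to $a^s$ at infinity. Meanwhile $z^{s+1}W_{2s-1}$ is analytic in the annulus between the contours. The first thing I would try is the following redressing. Write $Y$ for the $3\times3$ solution, and let $\widehat Y$ be the proposed substitute: the solution of RH problem \ref{prbl: AD}, embedded in the indices $\{1,3\}$ and bordered by the identity in index $2$. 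One then exhibits an explicit triangular matrix $E(z)$, built from $z^{s+1}W_{2s-1}$, such that $Y=E\,\widehat Y$ in the annulus and in the exterior region. Equivalently, one can work with $Y^{-T}$ as in Remark \ref{remark:inversetranspose}. One checks that $E\widehat Y$ has the $3\times3$ jumps and the right normalisation at infinity, and uniqueness then gives the identity.

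Concretely, the entry $(1,2)$ of the $\Sigma_1$-jump and the entry $(2,1)$ of the $\Sigma_2$-jump are mutually inverse scalars. Conjugating these jumps across the annulus turns the middle index into a pure contribution in the $(1,1)$ entry, which can then be absorbed. I expect this step to be the main obstacle. The reason is that the point $-1/a$ and the behaviour at $0$ and infinity of $z^{\pm(s+1)}W_{2s-1}^{\pm1}$ must be tracked carefully, so that the correction factors are genuinely analytic in the regions where they are used. If the direct ansatz fails, the fallback is to reduce the $3\times3$ problem through a residue computation inside the kernel integrals, rather than at the level of the RH problem itself.

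With the reduction in hand, I would substitute it into \eqref{def:kernelresult}. The product $\vec\phi_t^TY_+^{-1}Y_-\vec\psi_{t'}$ then splits into two parts. The first is the $\{1,3\}$ part, which gives the double integral of the statement. The second comes from the middle component, whose contribution contains $-1_{t>2s-1}u^{s+1}W_{2s-1}(u)\cdot(-1_{2s-1>t'})\,v^{-s-1}/W_{2s-1}(v)$, dressed by $Y$. Using the analyticity established above, I deform $u$ from $\Sigma_2$ onto $\Sigma_1$; the only contribution picked up is the residue at $u=v$. This residue gives
\[
1_{t>2s-1}1_{2s-1>t'}\oint_{\Sigma_1}z^{-h}\frac{W_{t'}(z)}{W_t(z)}z^{h'}\frac{dz}{2\pi iz},
\]
which is the extra single-contour term in \eqref{eq:kernelADresult}, and the remaining double integral vanishes. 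Finally, I fix the boundary values as $Y_+(u)$ and $Y_-(v)$, which is allowed by \eqref{expr: no jump}, and this completes the argument.
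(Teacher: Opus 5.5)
Your specialisation of Theorem \ref{thm: main} is correct; it is exactly \eqref{4.8}. You also single out the right ingredient: $1/(z^{s+1}W_{2s-1}(z))$ is analytic outside $\Sigma_2$ with limit $a^s$.

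\textbf{The reduction ansatz cannot work.} Write $U$ for the $3\times3$ solution, and $J,\widehat J$ for the jumps of $U$ and of the bordered solution $\widehat Y$. If $U=E\widehat Y$ on both sides of a contour, then $E_-^{-1}E_+=\widehat Y_-\,J\widehat J^{-1}\,\widehat Y_-^{-1}$. Here $J\widehat J^{-1}=I+z^{s+1}W_{2s-1}e_1e_2^T$ on $\Sigma_1$ and $I+\frac{1}{z^{s+1}W_{2s-1}}e_2e_1^T$ on $\Sigma_2$. So $E$ necessarily depends on $Y$, and no explicit $E$ exists: $U$ carries a genuinely non-explicit column.

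What the paper does instead is apply the explicit transformation \eqref{4.9}. This uses a constant left factor involving $a^s$, needed for the normalisation at infinity, and a right factor swapping indices $2,3$, which outside $\Sigma_2$ also contains $1/(z^{s+1}W_{2s-1})$. Afterwards the $b_1$-index has no jump on $\Sigma_2$, and on $\Sigma_1$ only the entry $z^{s+1}W_{2s-1}$ in position $(1,3)$. Liouville's theorem applied to the third column of $V^{-T}$ then gives $V=\begin{bmatrix}Y&-Y\vec w^T\\ \vec 0_{1\times 2}&1\end{bmatrix}$, with $Y$ solving RH problem \ref{prbl: AD}. The block $-Y\vec w^T$ is the Cauchy transform over $\Sigma_1$ of $z^{s+1}W_{2s-1}(z)Y(z)e_1$. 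In your coordinates this says $e_2^TU(z)^{-1}\equiv(-a^s,1,0)$ inside $\Sigma_2$. The structure is block triangular, not block diagonal. A residue computation inside the kernel, which is your fallback, cannot replace this Liouville step.

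\textbf{The kernel step also fails as you describe it.} First, since $U_+(u)^{-1}U_-(v)$ is not block diagonal, $\vec\phi_t^TU_+^{-1}U_-\vec\psi_{t'}$ does not split into a $\{1,3\}$ part plus a middle--middle part. There are cross terms such as $\phi_1(u)\,[U_+(u)^{-1}U_-(v)]_{12}\,\psi_2(v)$, and the first row of $U^{-1}$ jumps on $\Sigma_1$. Second, deforming $u$ inward from $\Sigma_2$ does not only pick up the residue at $u=v$. For $t>2s-1$, $u^{s+1}W_{2s-1}(u)/W_t(u)$ has a pole at $u=a$, and $u^{-h}$ may contribute at $u=0$.

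The argument that works deforms $v$ outward, for the whole $b_1$-component of $\vec\psi_{t'}$ at once:
\begin{itemize}
\item $U(v)e_2$ has no jump on $\Sigma_2$, because the second column of the $3\times3$ $\Sigma_2$-jump is $e_2$.
\item $1_{2s-1>t'}\,v^{h'}W_{t'}(v)/(v^{s+1}W_{2s-1}(v))$ is analytic outside $\Sigma_1$ because the pole at $-1/a$ cancels, so the $v$-contour can be pushed to infinity.
\item Since $U_+(u)^{-1}U(u)e_2=e_2$, the residue at $v=u$ only sees $\phi_2\psi_2$. It gives exactly the extra single-contour term, and all cross terms with $\psi_2$ drop out.
\end{itemize}
Only after this does the identity $e_2^TU_+(u)^{-1}U_-(v)x=x_2$ remove the $b_1$-component of $\vec\phi_t$ and reduce the remaining product to $Y_+(u)^{-1}Y_-(v)$.

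\textbf{Boundary values.} The choices $U_+(u)$ and $U_-(v)$ must be fixed in the $3\times3$ formula before splitting; the outward $v$-deformation needs the exterior value $U_-(v)$. For the reduced vectors this choice is not licensed by \eqref{expr: no jump}, which they violate. For instance, with $J$ the $\Sigma_2$-jump of RH problem \ref{prbl: AD}, $J(u)^{-T}\vec\phi_t(u)\neq\vec\phi_t(u)$ for $t>2s-1$. This is why the paper stresses that the choice of boundary values matters.
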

\begin{remark}
We emphasize that the choices of boundary values $Y_+,Y_-$ in 
\eqref{eq:kernelADresult} is important, in contrast to the choices in \eqref{def:kernelresult}. This will become clear in our proof below.
\end{remark}

\begin{proof}
    By applying Theorem \ref{thm: main} to compute $\left(K_N\right)_{2s-1,k,s+1}$, we obtain
    \begin{multline} \label{4.8}
        K_N^{s,k}(t,h;t',h') = -1_{t>t'} \oint_{\Sigma_1} z^{-h} \frac{W_{t'}(z)}{W_t(z)} z^{h'} \frac{dz}{2\pi iz} \\
        + \oint_{\Sigma_2} \oint_{\Sigma_1} u^{-h} \frac{1}{W_{t}(u)} \frac{v}{u-v} \begin{pmatrix}
            1 & -1_{t>2s-1}u^{s+1}W_{2s-1}(u) & 1_{t>2s-1}u^{k}W_{2s-1}(u)
        \end{pmatrix} U_+(u)^{-1} \\
        \times U_-(v) \begin{pmatrix}
            1 \\ -1_{2s-1>t'}\frac{1}{v^{s+1}W_{2s-1}(v)} \\ -1_{2s-1>t'}\frac{1}{v^{k}W_{2s-1}(v)}
        \end{pmatrix} W_{t'}(v) v^{h'} \frac{dv}{2\pi iv} \frac{du}{2\pi iu},
    \end{multline}
    where $U$ solves the following RH problem.
    \begin{problem}
    \begin{enumerate}
            \item $U: \mathbb{C} \setminus (\Sigma_1 \cup \Sigma_2) \rightarrow \mathbb{C}^{3 \times 3}$ is analytic.
            \item 
            For $z \in \Sigma_1 \cup \Sigma_2$, $U$ has continuous boundary values, $U_+$ and $U_-$, and they satisfy the jump relations
            \begin{align}
            U_+(z)&=U_-(z)\begin{pmatrix}
                1 & z^{s+1}W_{2s-1}(z) & -z^kW_{2s-1}(z) \\
                0 & 1 & 0 \\
                0 & 0 & 1
            \end{pmatrix}, \qquad & \text{for $z\in\Sigma_1$},\\
            U_+(z) &= U_-(z)\begin{pmatrix}
                1 & 0 & 0 \\
                \frac{1}{z^{s+1}W_{2s-1}(z)} & 1 & 0\\ 
                \frac{1}{z^{k}W_{2s-1}(z)} & 0 & 1
            \end{pmatrix}, \qquad & \text{for $z\in\Sigma_2$}.
            \end{align}
            \item $U(z) = I_{3} + \mathcal{O}(1/z)$ as $z \rightarrow \infty$.
        \end{enumerate}
        \end{problem}
        We proceed by showing how we can reduce this RH problem of size $3\times 3$ to one of size $2\times 2$. We start with a transformation defined as follows,
        \begin{equation} \label{4.9}
    V(z) := \begin{cases}
        \begin{pmatrix}
            1 & 0 & 0 \\
            0 & 0 & 1 \\
            -a^s & 1 & 0 \\
        \end{pmatrix} U(z) \begin{pmatrix}
            1 & 0 & 0 \\
            \frac{1}{z^{s+1}W_{2s-1}(z)} & 0 & 1 \\
            0 & 1 & 0
        \end{pmatrix}, \quad & \text{for $z$ outside $\Sigma_2$,} \\
        \begin{pmatrix}
            1 & 0 & 0 \\
            0 & 0 & 1 \\
            -a^s & 1 & 0 \\
        \end{pmatrix} U(z) \begin{pmatrix}
            1 & 0 & 0 \\
            0 & 0 & 1 \\
            0 & 1 & 0
        \end{pmatrix}, \quad & \text{for $z$ inside $\Sigma_2$.}
    \end{cases}
    \end{equation}
    Using the fact that
    \begin{equation*}
        \frac{1}{z^{s+1}W_{2s-1}(z)} = \frac{(z-a)^{N-s+1}(1+az)^s}{z^{N+1}} = a^s + \mathcal{O}(1/z), \quad \text{as $z \rightarrow \infty$,}
    \end{equation*}
    we see that $V(z) = I_{3} + \mathcal{O}(1/z)$ as $z \rightarrow \infty$ and thus verify that $V$ solves the following RH problem.
    \begin{problem}
    \begin{enumerate}
            \item $V: \mathbb{C} \setminus (\Sigma_1 \cup \Sigma_2) \rightarrow \mathbb{C}^{3 \times 3}$ is analytic.
            \item 
            For $z \in \Sigma_1 \cup \Sigma_2$, $V$ has continuous boundary values, $V_+$ and $V_-$, and they satisfy the jump relations
            \begin{align}
            V_+(z)&=V_-(z)\begin{pmatrix}
                1 & -z^kW_{2s-1}(z) & z^{s+1}W_{2s-1}(z) \\
                0 & 1 & 0 \\
                0 & 0 & 1
            \end{pmatrix}, \qquad & \text{for $z\in\Sigma_1$},\\
            V_+(z) &= V_-(z)\begin{pmatrix}
                1 & 0 & 0 \\
                \frac{1}{z^{k}W_{2s-1}(z)} & 1 & 0\\ 
                0 & 0 & 1
            \end{pmatrix}, \qquad & \text{for $z\in\Sigma_2$}.
            \end{align}
            \item $V(z) = I_{3} + \mathcal{O}(1/z)$ as $z \rightarrow \infty$.
        \end{enumerate}
        \end{problem}
 Furthermore, by substituting \eqref{4.9} into \eqref{4.8}, we obtain
    \begin{multline} \label{4.10}
        K_N^{s,k}(t,h;t',h') = -1_{t>t'} \oint_{\Sigma_1} z^{-h} \frac{W_{t'}(z)}{W_t(z)} z^{h'} \frac{dz}{2\pi iz}\\
        + \oint_{\Sigma_2} \oint_{\Sigma_1} u^{-h} \frac{1}{W_{t}(u)} \frac{v}{u-v} \begin{pmatrix}
            1 & 1_{t>2s-1}u^{k}W_{2s-1}(u) & -1_{t>2s-1}u^{s+1}W_{2s-1}(u)
        \end{pmatrix} V_+(u)^{-1} \\
        \times V_-(v) \begin{pmatrix}
            1 \\ -1_{2s-1>t'}\frac{1}{v^{k}W_{2s-1}(v)} \\ -1_{2s-1>t'}\frac{1}{v^{s+1}W_{2s-1}(v)}
        \end{pmatrix} W_{t'}(v) v^{h'} \frac{dv}{2\pi iv} \frac{du}{2\pi iu}.
    \end{multline}
    We introduce
    \begin{multline}
        I := \oint_{\Sigma_2} \oint_{\Sigma_1} u^{-h} \frac{1}{W_{t}(u)} \frac{v}{u-v} \begin{pmatrix}
            1 & 1_{t>2s-1}u^{k}W_{2s-1}(u) & -1_{t>2s-1}u^{s+1}W_{2s-1}(u)
        \end{pmatrix} V_+(u)^{-1} \\
        \times V_-(v) \begin{pmatrix}
            0 \\ 0 \\ -1_{2s-1>t'}\frac{1}{v^{s+1}W_{2s-1}(v)}
        \end{pmatrix} W_{t'}(v) v^{h'} \frac{dv}{2\pi iv} \frac{du}{2\pi iu},
    \end{multline}
    and evaluate it by deforming the $v$-integration contour $\Sigma_1$ to a large circle $\Sigma_1'$ outside $\Sigma_2$. To do so, we observe that $$V(v) \begin{pmatrix}
            0 \\ 0 \\ -1_{2s-1>t'}\frac{1}{v^{s+1}W_{2s-1}(v)}
        \end{pmatrix} W_{t'}(v) v^{h'}$$ is analytic outside $\Sigma_1$. This follows from both the jump condition satisfied by $V$ on $\Sigma_2$, and the fact that $$1_{2s-1>t'}\frac{v^{h'}W_{t'}(v)}{v^{s+1}W_{2s-1}(v)}$$ is analytic outside $\Sigma_1$ since the singularity at $-1/a$ is removable. 
        Therefore, in deforming $\Sigma_1$ to $\Sigma_1'$, we only pick up the residue at $v=u$, and we obtain
    \begin{multline}
        I =1_{t>2s-1}1_{2s-1>t'} \oint_{\Sigma_1} z^{-h} \frac{W_{t'}(z)}{W_t(z)} z^{h'} \frac{dz}{2\pi iz} \\ +\oint_{\Sigma_2} \oint_{\Sigma_1'} u^{-h} \frac{1}{W_{t}(u)} \frac{v}{u-v} \begin{pmatrix}
            1 & 1_{t>2s-1}u^{k}W_{2s-1}(u) & -1_{t>2s-1}u^{s+1}W_{2s-1}(u)
        \end{pmatrix} V_+(u)^{-1} \\
        \times V(v) \begin{pmatrix}
            0 \\ 0 \\ -1_{2s-1>t'}\frac{1}{v^{s+1}W_{2s-1}(v)}
        \end{pmatrix} W_{t'}(v) v^{h'} \frac{dv}{2\pi iv} \frac{du}{2\pi iu}.
    \end{multline}
    However, since
    \[\frac{v^{h'}W_{t'}(v)}{v^{s+1}W_{2s-1}(v)} = \mathcal{O}(1/v), \quad \text{as $v \rightarrow \infty$,}\]
    the double integral converges to $0$ as we deform $\Sigma_1'$ towards infinity, and thus
    \begin{equation} \label{4.12}
        I = 1_{t>2s-1}1_{2s-1>t'} \oint_{\Sigma_1} z^{-h} \frac{W_{t'}(z)}{W_t(z)} z^{h'} \frac{dz}{2\pi iz}.
    \end{equation}
    Substituting \eqref{4.12} back into \eqref{4.10}, we obtain
    \begin{multline} \label{4.13}
        K_N^{s,k}(t,h;t',h') = \left(1_{t>2s-1}1_{2s-1>t'}-1_{t>t'}\right) \oint_{\Sigma_1} z^{-h} \frac{W_{t'}(z)}{W_t(z)} z^{h'} \frac{dz}{2\pi iz} \\
        + \oint_{\Sigma_2} \oint_{\Sigma_1} u^{-h} \frac{1}{W_{t}(u)} \frac{v}{u-v} \begin{pmatrix}
            1 & 1_{t>2s-1}u^{k}W_{2s-1}(u) & -1_{t>2s-1}u^{s+1}W_{2s-1}(u)
        \end{pmatrix} V_+(u)^{-1} \\
        \times V_-(v) \begin{pmatrix}
            1 \\ -1_{2s-1>t'}\frac{1}{v^{k}W_{2s-1}(v)} \\ 0
        \end{pmatrix} W_{t'}(v) v^{h'} \frac{dv}{2\pi iv} \frac{du}{2\pi iu}.
    \end{multline}

    \medskip 

    The inverse-transpose of $V$ solves the following RH problem.
    \begin{problem}
    \begin{enumerate}
            \item $V^{-T}: \mathbb{C} \setminus (\Sigma_1 \cup \Sigma_2) \rightarrow \mathbb{C}^{3 \times 3}$ is analytic.
            \item 
            For $z \in \Sigma_1 \cup \Sigma_2$, $V^{-T}$ has continuous boundary values, $V^{-T}_+$ and $V^{-T}_-$, and they satisfy the jump relations
            \begin{align}
            V^{-T}_+(z)&=V^{-T}_-(z)\begin{pmatrix}
                1 & 0 & 0 \\
                z^kW_{2s-1}(z) & 1 & 0 \\
                -z^{s+1}W_{2s-1}(z) & 0 & 1
            \end{pmatrix}, \qquad & \text{for $z\in\Sigma_1$},\\
            V^{-T}_+(z) &= V^{-T}_-(z)\begin{pmatrix}
                1 & -\frac{1}{z^{k}W_{2s-1}(z)} & 0 \\
                0 & 1 & 0\\ 
                0 & 0 & 1
            \end{pmatrix}, \qquad & \text{for $z\in\Sigma_2$}.
            \end{align}
            \item $V^{-T}(z) = I_{3} + \mathcal{O}(1/z)$ as $z \rightarrow \infty$.
        \end{enumerate}
        \end{problem}
    From this we conclude that the third column of $V^{-T}$, denoted by $\left(V^{-T}\right)_3$, is analytic, and furthermore,
    $$\left(V^{-T}\right)_3(z) = \begin{pmatrix}
        0 \\ 0 \\ 1
    \end{pmatrix} \quad \text{as $z \rightarrow \infty$.}$$
    Therefore by Liouville's theorem, $$\left(V^{-T}\right)_3 \equiv \begin{pmatrix}
        0 \\ 0 \\ 1
    \end{pmatrix}.$$
    We can therefore write $V^{-T}$ in the form
    \[V^{-T}(z)=\begin{pmatrix}A(z)&\vec 0_{2\times 1}\\ \vec w(z)&1\end{pmatrix},\]
    for some invertible $2\times 2$ matrix $A(z)$ and some $1\times 2$ vector $\vec w(z)$. Taking the inverse transpose of this expression, we obtain
    \begin{equation}
        V(z) = \begin{bmatrix}
            Y(z) & -Y(z)\vec w^T(z) \\
            \vec 0_{1 \times 2} & 1
        \end{bmatrix},\qquad Y=A^{-T}.\ 
    \end{equation}
    Moreover, substituting this expression in the RH conditions for $V$, we find that $Y$ solves RH problem \ref{prbl: AD}. 
        We compute
        \begin{align}
            &\begin{pmatrix}
            1 & 1_{t>2s-1}u^{k}W_{2s-1}(u) & -1_{t>2s-1}u^{s+1}W_{2s-1}(u)
        \end{pmatrix} V_+(u)^{-1}V_-(v) \begin{pmatrix}
            1 \\ -1_{2s-1>t'}\frac{1}{v^{k}W_{2s-1}(v)} \\ 0
        \end{pmatrix} \nonumber \\
        &\quad= \begin{pmatrix}
            1 & 1_{t>2s-1}u^{k}W_{2s-1}(u) & -1_{t>2s-1}u^{s+1}W_{2s-1}(u)
        \end{pmatrix} \begin{bmatrix}
                Y_+(u)^{-1} & \vec w_+(u)^T \\
                \vec 0_{1 \times 2} & 1
            \end{bmatrix} \nonumber \\
            &\qquad\times \begin{bmatrix}
            Y_-(v) & -Y_-(v)\vec w_-(v)^T \\
            \vec 0_{1 \times 2} & 1
        \end{bmatrix} \begin{pmatrix}
            1 \\ -1_{2s-1>t'}\frac{1}{v^{k}W_{2s-1}(v)} \\ 0
        \end{pmatrix} \nonumber \\
        &\quad = \begin{pmatrix}
            1 & 1_{t>2s-1}u^{k}W_{2s-1}(u)
        \end{pmatrix} Y_+(u)^{-1}Y_-(v) \begin{pmatrix}
            1 \\ -1_{2s-1>t'}\frac{1}{v^{k}W_{2s-1}(v)}
        \end{pmatrix}.
        \end{align}
Substituting this back into \eqref{4.13}, we complete the proof.
\end{proof}

\begin{remark}
 We could apply Theorem  \ref{thm: ratio} together with the above reduction of the RH problem to characterise the ratios of Fredholm determinants as follows,
    \begin{equation}\frac{\det \left(1 - 1_{B_N^{s,k+1}}K\right)}{\det \left(1 - 1_{B_N^{s,k}}K\right)} = Y^{-T}(0)_{2,2},\end{equation}
    in terms of the solution of RH problem \ref{prbl: AD}.
    This result is a rewriting of \cite[Proposition 1.9]{charlier2025countingdominolozengetilings}, so we do not go further into this.
\end{remark}

\begin{remark}
    One can produce more complicated reduced Aztec diamonds by conditioning on voids of points in multiple clusters of the form \eqref{def: gapAD}. This would result in multiple approximate rectangles being removed from the domain. Repeatedly iterating the above argument would again allow us to express the correlation kernel as a double contour integral. The associated RH problem would have size 1 more than the number of rectangles removed from the domain. 
\end{remark}

\section{Lozenge tilings of reduced hexagons}\label{section:5}
\textbf{Lozenge tilings of hexagons.} Given positive integers $L,M$, and $N$ with $M<L$, we consider the hexagon $H_{L,M,N}$ with corners at $(0,0), (L-M,0), (L,M), (L,M+N), (M,M+N)$, and $(0,N)$. We consider tilings of this domain with three types of lozenges (\begin{tikzpicture}[scale=0.2] \draw (0,0) -- (1,0) -- (1,1) -- (0,1) -- (0,0); \end{tikzpicture}, \begin{tikzpicture}[scale=0.2] \draw (0,0) -- (1,1) -- (1,2) -- (0,1) -- (0,0); \end{tikzpicture}, \begin{tikzpicture}[scale=0.2] \draw (0,0) -- (1,0) -- (2,1) -- (1,1) -- (0,0); \end{tikzpicture}), see Figure \ref{fig:hexagon} (left). The number of tilings of $H_{L,M,N}$, denoted by $G_{L,M,N}$ is given by MacMahon's formula \cite{MacMahon}
\begin{equation}
    G_{L,M,N} = \prod_{i=1}^{L-M} \prod_{j=1}^M \prod_{k=1}^N \frac{i+j+k-1}{i+j+k-2}.
\end{equation}

\begin{figure}[t!]

\begin{subfigure}{0.5\textwidth}
\begin{center}
\includegraphics[keepaspectratio, width=0.8\linewidth]{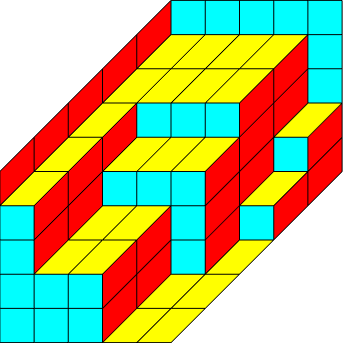} 
\end{center}
\end{subfigure}
\begin{subfigure}{0.5\textwidth}
\begin{center}
\includegraphics[keepaspectratio, width=0.8\linewidth]{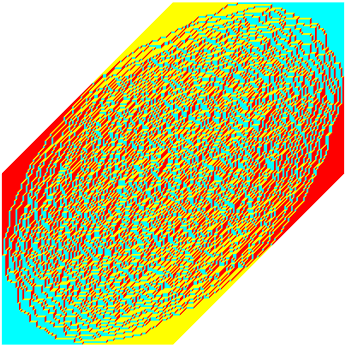} 
\end{center}
\end{subfigure}

\caption{Left: A lozenge tiling of $H_{10,5,5}$. Right: A lozenge tiling of $H_{200,100,100}$. Both tilings are sampled uniformly.}
\label{fig:hexagon}
\end{figure}

\medskip

There is a vast literature on lozenge tilings of hexagons and other planar domains; see \cite{Gorin} for an overview. As for domino tilings of the Aztec diamond, a uniformly random lozenge tiling of a large hexagon exhibits frozen regions in the corners of the hexagon and a liquid region in the centre; the arctic curve is given by the inscribed ellipse \cite{Cohn1998}, see Figure \ref{fig:hexagon} (right).
Local correlations around the arctic curve are described by the Airy line ensemble \cite{petrov2012} or by the GUE-minors/corners process near the points of tangency between the curve and the hexagon \cite{Johansson2006GUEminors, AggarwalGorin2022}.
Local correlations and height function fluctuations in the rough region are described by Gibbs measures \cite{baik2007, gorin2008, borodin2010, petrov2012, Aggarwal2023} and the Gaussian free field \cite{Kenyon2008, Kenyon2007, Petrov2015}.

\medskip

Similarly to domino tilings of the Aztec diamond, lozenge tilings of hexagons can be represented as configurations of points on a lattice, which under the uniform probability distribution on the set of tilings, form a determinantal point process \cite{Johansson2000NonintersectingPR, Duits2017TheTA}. In this section we consider lozenge tilings of reduced hexagons, introduced in \cite{charlier2025countingdominolozengetilings}, and find an exact expression for the correlation kernel of the corresponding determinantal point process. For exact formulas for correlation kernels associated to lozenge tilings of other modified domains, see \cite{petrov2012, adler2018nonconvexpolygons}. We note that the number of lozenges of each type is the same for every tiling of a hexagon, such that assigning different weights to different types of lozenges has no effect and does not lead to biased probability measures. 
One can however consider more complicated probability distributions related to doubly-periodic weights as in \cite{charlier2020periodichexagon, arno2025}.

\paragraph{Non-intersecting paths.} Given a lozenge tiling of the hexagon $H_{L,M,N}$, we construct a collection of $N$ non-intersecting paths on $H_{L,M,N}$  by drawing the following edges on the individual lozenges, \begin{tikzpicture}[scale=0.2] \draw (0,0) -- (1,0) -- (1,1) -- (0,1) -- (0,0) (0,.5) -- (1,.5); \draw [fill] (0,.5) circle [radius=.1]; \draw [fill] (1,.5) circle [radius=.1]; \end{tikzpicture}, \begin{tikzpicture}[scale=0.2] \draw (0,0) -- (1,1) -- (1,2) -- (0,1) -- (0,0) (0,.5) -- (1,1.5); \draw [fill] (0,.5) circle [radius=.1]; \draw [fill] (1,1.5) circle [radius=.1]; \end{tikzpicture}, \begin{tikzpicture}[scale=0.2] \draw (0,0) -- (1,0) -- (2,1) -- (1,1) -- (0,0); \end{tikzpicture}. After shifting down a half step, this produces $N$ non-intersecting paths on the lattice $\Lambda_L$ which start at $0, \dots, N-1$ and end at $M, \dots, M+N-1$. The paths either make horizontal steps to the right or diagonal up-right steps, see Figure \ref{fig:hexagon paths}.

\begin{figure}[t!]

\begin{subfigure}{0.5\textwidth}
\begin{center}
\includegraphics[keepaspectratio, width=0.8\linewidth]{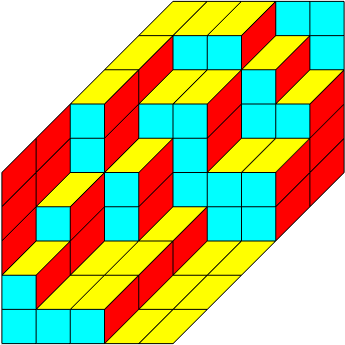} 
\end{center}
\end{subfigure}
\begin{subfigure}{0.5\textwidth}
\begin{center}
\includegraphics[keepaspectratio, width=0.8\linewidth]{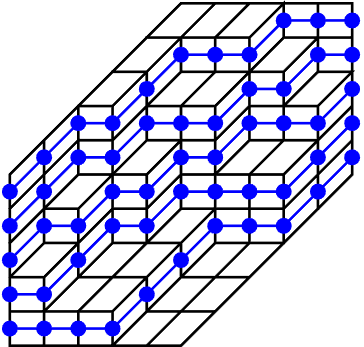} 
\end{center}
\end{subfigure}

\caption{For the same tiling of $H_{10,5,5,}$ shown on the left, the corresponding non-intersecting paths are shown on the right.}
\label{fig:hexagon paths}
\end{figure}

\medskip

The points marking the position of each path at each time level form a determinantal point process $\mathbb{P}_{L,M,N}$ on $\Lambda_L$. An explicit expression for the correlation kernel was obtained in \cite{Duits2017TheTA}, 
\begin{multline} \label{def: kernelhexagon}
    K_{L,M,N}(t,h;t',h') = -1_{t>t'} \oint_{\Sigma_1} z^{-h} \frac{W_{t'}(z)}{W_t(z)} z^{h'} \frac{dz}{2\pi iz} \\
    + \oint_{\Sigma_1} \oint_{\Sigma_1} u^{-h} \frac{1}{W_t(u)} \frac{v}{u-v} \vec \phi(u)^T Y_\pm(u)^{-1} Y_\pm \vec \psi(v) W_{t'}(v) v^{h'}  \frac{dv}{2\pi iv} \frac{du}{2\pi iu},
\end{multline}
where $\Sigma_1$ is a closed, positively oriented contour encircling $0$. For $t = 0, \dots, L$,
\begin{equation} \label{def: HW}
    W_{t}(z) = (z+1)^{L-t},
\end{equation}
and,
\begin{equation}
    \vec \phi(u) = \begin{pmatrix}
        (u+1)^L \\ 0
    \end{pmatrix}, \quad \vec \psi(v) = \begin{pmatrix}
        0 \\ v^{-(M+N)}
    \end{pmatrix}.
\end{equation}
Furthermore, $Y$ solves the following RH problem:
\begin{problem} \label{prbl: hexagon}
    \begin{enumerate}
    \item $Y : \C \setminus \Sigma_1 \rightarrow \C^{2 \times 2}$ is analytic,
    \item For $z \in \Sigma_1$,
    \begin{equation}
        Y_+(z) = Y_-(z) \begin{pmatrix}
            1 & 0 \\
            -\frac{(z+1)^L}{z^{M+N}} & 1
        \end{pmatrix}
    \end{equation}
    \item $Y(z) = \big(I+ O(1/z)\big) \begin{pmatrix}
        z^{-N} & 0 \\ 0 & z^{N}
    \end{pmatrix}$, as $z \rightarrow \infty$.
\end{enumerate}
\end{problem}
By analytic continuation of the integrands in \eqref{def: kernelhexagon}, we can freely deform the integration contours as long as we let them go around $0$. In order to have a kernel of our required form \eqref{def: kernelRHP}, we let the $z$- and $v$-integration contours coincide with the unit circle $\Sigma_1$, and we choose a positively oriented loop $\Sigma_2$ outside the unit circle as the $u$-integration contour. Likewise, we artificially introduce a trivial jump along $\Sigma_2$ into RH problem \ref{prbl: hexagon}. In doing so, $K_{L,M,N}$ becomes of the form \eqref{def: kernelRHP} and thus belongs to the space $S_2(\vec W)$, with the specific sequence of analytic functions given by \eqref{def: HW}. Observe that the parametrisation \eqref{def: kernelRHP} with $Y$ different from the identity matrix is very natural here.
We should note that there exist also other determinantal expressions and formulas for the correlation kernel of lozenge tilings of hexagons, see e.g.\ \cite{Johansson2000NonintersectingPR, petrov2012}, but these do not seem to fit directly in our framework.

\begin{figure}[t!]

\begin{subfigure}{0.5\textwidth}
\begin{center}
\includegraphics[keepaspectratio, width=0.8\linewidth]{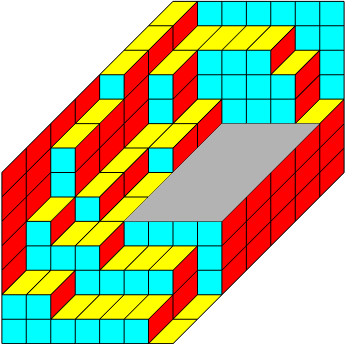} 
\end{center}
\end{subfigure}
\begin{subfigure}{0.5\textwidth}
\begin{center}
\includegraphics[keepaspectratio, width=0.8\linewidth]{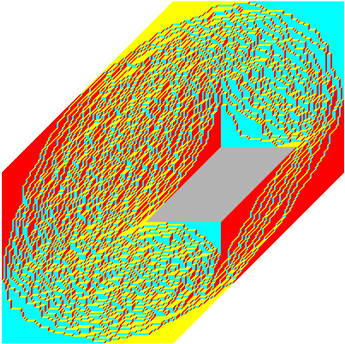} 
\end{center}
\end{subfigure}

\caption{Left: A lozenge tiling of $H_{14,7,7}^{9,5,9}$. Right: A lozenge tiling of $H_{200,100,100}^{128,71,115}$. Both tilings are sampled uniformly.}
\label{fig:reducedhexagon}
\end{figure}

\paragraph{Lozenge tilings of reduced hexagons} We consider reduced hexagons from which a parallelogram is removed, see figure \ref{fig:reducedhexagon}. For $r \in \{1, \dots, L-1\}$ and $\max\{0,r+M-L\} \leq a < b \leq \min\{N+r,N+M\}$, $P_{L,M,N}^{r,a,b}$ denotes the the parallelogram with corners $(r,a)$, $(r-a+b,b)$, $(r,b)$, and $(r-b+a,a)$. We define the reduced hexagon $$H_{L,M,N}^{r,a,b} := H_{L,M,N} \setminus P_{L,M,N}^{r,a,b}.$$ A lozenge tiling of $H_{L,M,N}^{r,a,b}$ can be viewed as a lozenge tiling of $H_{L,M,N}$ for which $P_{L,M,N}^{r,a,b}$ is tiled with a frozen region of \begin{tikzpicture} [scale=0.2] \draw (0,0) -- (1,0) -- (2,1) -- (1,1) -- (0,0); \end{tikzpicture} lozenges. Such tilings of $H_{L,M,N}$ correspond in turn to admissible point configurations $\mathcal{X}$ which contain no points in the subset
\begin{equation} \label{def: gapH}
    B_{L,M,N}^{r,a,b} := \{r\} \times \{a, a+1, \dots, b-1\}.
\end{equation}
We equip the set of lozenge tilings of $H_{L,M,N}^{r,a,b}$ with the uniform probability distribution. This is equivalent to the uniformly distribution of random lozenge tilings of $H_{L,M,N}$ conditioned on the event that $P_{L,M,N}^{r,a,b}$ is frozen with \begin{tikzpicture} [scale=0.2] \draw (0,0) -- (1,0) -- (2,1) -- (1,1) -- (0,0); \end{tikzpicture} lozenges. Therefore, in terms of the associated point process, we condition $\mathbb{P}_{L,M,N}$ on the event that $\mathcal{X} \cap B_{L,M,N}^{r,a,b} = \emptyset$. This produces a new point process $\mathbb{P}_{L,M,N}^{r,a,b}$ on $\Lambda_L \setminus B_{L,M,N}^{r,a,b}$ whose correlation kernel is given by the integral kernel of the operator
\begin{equation}
    K_{L,M,N}^{r,a,b} := \left(K_{L,M,N}\right)_{r,a,b}.
\end{equation}
Thus we can apply Theorem \ref{thm: main} to obtain an expression of the form \eqref{def:kernelresult}
for the kernel $K_{L,M,N}^{r,a,b}$. The kernel is expressed implicitly in terms of the solution of RH problem \ref{prbl: main} which in this particular case becomes the following $4 \times 4$ RH problem.
\begin{problem} \label{prbl: hexagon2}
    \begin{enumerate}
    \item $Y(\cdot;r,a,b) : \C \setminus \Sigma_1 \rightarrow \C^{4 \times 4}$ is analytic,
    \item For $z \in \Sigma_1\cup\Sigma_2$, $Y(\cdot;r,a,b)$ has continuous boundary values, $Y_+(\cdot;r,a,b)$ and $Y_-(\cdot;r,a,b)$, and they satisfy the jump relations
    \begin{align}
            Y_+(z;r,a,b)&=Y_-(z;r,a,b)J^{(1)}(z;r,a,b), \qquad & \text{for $z\in\Sigma_1$},\\
            Y_+(z;r,a,b)&=Y_-(z;r,a,b)J^{(2)}(z;r,a,b), \qquad & \text{for $z\in\Sigma_2$},
            \end{align}
    with
    \begin{align}
     \label{J1hexagon}   J^{(1)}(z;r,a,b) &= \begin{pmatrix}
            1 & 0 & 0 & 0\\
            -\frac{(z+1)^L}{z^{M+N}} & 1 & z^{b-M-N}(z+1)^{L-r} & -z^{a-M-N}(z+1)^{L-r} \\
            0 & 0 & 1 & 0 \\
            0 & 0 & 0 & 1
        \end{pmatrix}, \\
        \label{J2hexagon}
        J^{(2)}(z;r,a,b) &= \begin{pmatrix}
            1 & 0 & 0 & 0\\
            0 & 1 & 0 & 0 \\
            z^{-b}(z+1)^r & 0 & 1 & 0 \\
            z^{-a}(z+1)^r & 0 & 0 & 1
        \end{pmatrix}.
    \end{align}
    \item $Y(z;r,a,b) = \big(I+ O(1/z)\big) \begin{pmatrix}
        z^{-N} & 0 & 0 & 0 \\ 0 & z^{N} & 0 & 0 \\
        0 & 0 & 1 & 0 \\
        0 & 0 & 0 & 1
    \end{pmatrix}$, as $z \rightarrow \infty$.
\end{enumerate}
\end{problem}
\begin{theorem}
Let $L,M,N$ be positive integers with $M<L$. Let $r,a,b$ be positive integers such that
$r \in \{1, \dots, L-1\}$ and $\max\{0,r+M-L\} \leq a < b \leq \min\{N+r,N+M\}$.
The kernel $K_{L,M,N}^{r,a,b}$ of the determinantal point process $\mathbb P_{L,M,N}^{r,a,b}$ is given by
    \begin{multline}
    K_{L,M,N}^{r,a,b}(t,h;t',h') = -1_{t>t'} \oint_{\Sigma_1} z^{-h} \frac{W_{t'}(z)}{W_t(z)} z^{h'} \frac{dz}{2\pi iz} \\
    + \int_{\Sigma_1} \int_{\Sigma_1} u^{-h} (u+1)^{-L+t} \frac{v}{u-v} \vec \phi_t(u;r,a,b) Y_\pm(u;r,a,b)^{-1} \\ \times Y_\pm(v;r,a,b) \vec \psi_{t'}(v;r,a,b) (v+1)^{L-t'} v^{h'} \frac{dv}{2\pi iv} \frac{du}{2\pi iu},
\end{multline}
where
\begin{equation}
    \vec \phi_t(u;r,a,b) := \begin{pmatrix}
        (u+1)^L \\ 0 \\ -1_{t>r} u^b(u+1)^{L-r} \\ 1_{t>r} u^a(u+1)^{L-r}
    \end{pmatrix}, \quad \vec \psi_{t'}(v;r,a,b) := \begin{pmatrix}
        0 \\ v^{-(M+N)} \\ - 1_{r>t'} v^{-b}(v+1)^{-L+r} \\ - 1_{r>t'} v^{-a}(v+1)^{-L+r}
    \end{pmatrix},
\end{equation}
and where $Y$ solves RH problem \ref{prbl: hexagon}.
\end{theorem}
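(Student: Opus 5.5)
This is a direct application of Theorem \ref{thm: main} with $q=1$, $d=2$, $\vec t=(r)$, $\vec a=(a)$, $\vec b=(b)$ and $\lambda=1$, starting from the representation \eqref{def: kernelhexagon} of $K_{L,M,N}$. The plan has four steps: put $K_{L,M,N}$ into the form \eqref{def: kernelRHP}, justify the invertibility hypothesis, read off the transformed data, and check that they match the statement.

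\textbf{Step 1: the form \eqref{def: kernelRHP}.} The $z$- and $v$-integrals are taken on the unit circle $\Sigma_1$, the $u$-integral on a circle $\Sigma_2$ outside it, and a trivial jump $J^{(2)}\equiv I_2$ is added on $\Sigma_2$. The original data are
\[
J^{(1)}(z)=\begin{pmatrix}1&0\\-(z+1)^L z^{-M-N}&1\end{pmatrix},\qquad Y^{(\infty)}(z)=\mathrm{diag}(z^{-N},z^N),
\]
with $\vec\phi_t=\vec\phi$ and $\vec\psi_t=\vec\psi$ independent of $t$. We must check the compatibility relations \eqref{jump rel1}--\eqref{jump rel2}. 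The first is trivial since $J^{(2)}=I_2$. For the second, $J^{(1)}\vec\psi=\vec\psi$ holds because $\vec\psi$ has zero first entry. The analyticity requirements also hold: $W_t^{-1}\vec\phi_t=(u+1)^{t}$ times a constant vector is entire, and $W_t\vec\psi_t$ is analytic away from $0$. So $K_{L,M,N}\in\mathcal S_2(\vec W)$ with $W_t=(z+1)^{L-t}$.

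\textbf{Step 2: invertibility.} We need $1-1_BK_{L,M,N}$ to be invertible. Since $1_BK$ has finite rank, this is equivalent to the Fredholm determinant being nonzero. That determinant equals the probability that the reduced hexagon $H^{r,a,b}_{L,M,N}$ is tiled, which is positive under the stated constraints on $r,a,b$. The constraints are exactly what makes $H^{r,a,b}_{L,M,N}$ admit a tiling, for example one in which the parallelogram is frozen. I expect this to be the main point requiring an argument beyond bookkeeping, although it is essentially combinatorial.

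\textbf{Step 3: the transformed data.} Apply Theorem \ref{thm: main} through \eqref{def: J1b}--\eqref{def: prephipsi} with $\vec f_1=(z^b,-z^a)^T$, $\vec g_1=(z^{-b},z^{-a})^T$ and $W_r=(z+1)^{L-r}$.
\begin{itemize}
\item[(i)] $\vec\phi_t(u;r,a,b)=\bigl(\vec\phi(u),\,-1_{t>r}\vec f_1(u)(u+1)^{L-r}\bigr)$, which gives the entries $-1_{t>r}u^b(u+1)^{L-r}$ and $+1_{t>r}u^a(u+1)^{L-r}$.
\item[(ii)] $\vec\psi_{t'}(v;r,a,b)=\bigl(\vec\psi(v),\,-1_{r>t'}\vec g_1(v)(v+1)^{-L+r}\bigr)$.
\item[(iii)] The off-diagonal block of $J^{(1)}(z;r,a,b)$ is $\vec\psi(z)W_r(z)\vec f_1(z)^T$. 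Its only nonzero row is the second, equal to $\bigl(z^{b-M-N}(z+1)^{L-r},\,-z^{a-M-N}(z+1)^{L-r}\bigr)$, which matches \eqref{J1hexagon}.
\item[(iv)] The lower-left block of $J^{(2)}(z;r,a,b)$ is $\vec g_1(z)W_r(z)^{-1}\vec\phi(z)^T$. Since $\vec\phi=((z+1)^L,0)^T$, its first column is $\bigl(z^{-b}(z+1)^r,\,z^{-a}(z+1)^r\bigr)^T$ and its second column is zero, which matches \eqref{J2hexagon}.
\end{itemize}

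\textbf{Step 4: conclusion.} The asymptotic condition becomes $\mathrm{diag}(Y^{(\infty)},I_2)$, which is exactly RH problem \ref{prbl: hexagon2}. Uniqueness of its solution follows from Theorem \ref{thm: main}. Writing $W_t(u)^{-1}=(u+1)^{-L+t}$ and $W_{t'}(v)=(v+1)^{L-t'}$ gives the stated formula. Finally, the $u$-contour may be deformed from $\Sigma_2$ back to $\Sigma_1$, as the statement writes it. This is allowed because $J^{(2)}(z;r,a,b)$ is analytic in the annulus between the contours, so the integrand extends across $\Sigma_2$ after absorbing the jump, using $\vec\phi_t^TJ^{(2)-1}$-invariance. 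The only singularity $v=u$ is avoided provided the $v$-contour stays slightly inside. I would state this deformation with care, since the residue at $u=v$ must not be crossed.
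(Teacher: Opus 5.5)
Your route is the paper's: its proof is a direct application of Theorem \ref{thm: main} with $q=1$, $\lambda=1$. Your bookkeeping in Steps 1 and 3 is correct (compatibility relations, the blocks $\vec\psi W_r\vec f_1^T$ and $\vec g_1W_r^{-1}\vec\phi^T$, the enlarged vectors), and you rightly use the $4\times 4$ problem \ref{prbl: hexagon2} rather than \ref{prbl: hexagon}. Two points need correcting.

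\textbf{Step 2 rests on a false claim.} The stated inequalities do not make $H^{r,a,b}_{L,M,N}$ tileable. For $(L,M,N,r,a,b)=(3,1,2,1,1,3)$ every hypothesis holds. Yet the path starting at height $1$ is at height $1$ or $2$ at time $1$, so it lies inside $B=\{1\}\times\{1,2\}$. Hence $\det(1-1_BK_{L,M,N})=\mathbb P(\mathcal X\cap B=\emptyset)=0$, and $1-1_BK_{L,M,N}$ is not invertible. In general, only $\min\{r,M,L-r,L-M\}$ sites at time $r$ are vacant within the range of attainable heights, so $b-a$ cannot exceed this; the hypotheses do not enforce it. Invertibility must instead come from the implicit premise that the conditioned process $\mathbb P^{r,a,b}_{L,M,N}$ exists, i.e.\ that the void probability is positive. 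Finite rank of $1_BK_{L,M,N}$ then gives invertibility, as you say. The paper takes this premise for granted too.

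\textbf{Step 4 does not reach the displayed formula.} The paper deforms nothing: Theorem \ref{thm: main} gives $u\in\Sigma_2$, $v\in\Sigma_1$, and the $\Sigma_1$--$\Sigma_1$ in the statement is inherited from \eqref{def: kernelhexagon}. Your deformation is valid, but it ends with $v$ on a circle strictly inside $\Sigma_1$, which is not the displayed formula. Moreover, the residue at $u=v$ that you guard against is actually zero. The missing observation is that $\vec\phi^T\vec\psi=0$ and $\vec f_1^T\vec g_1=0$. These give $\vec\phi_t(z;r,a,b)^T\vec\psi_{t'}(z;r,a,b)\equiv 0$, so the numerator vanishes on $u=v$ and the singularity there is removable. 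The same identities give $\vec\phi_t(z;r,a,b)^TJ^{(1)}(z;r,a,b)^{-1}=\vec\phi_t(z;r,a,b)^T$ and $J^{(2)}(z;r,a,b)\vec\psi_{t'}(z;r,a,b)=\vec\psi_{t'}(z;r,a,b)$. Hence $\vec\phi_t^TY^{-1}$ and $Y\vec\psi_{t'}$ have no jumps at all and are analytic in $\C\setminus\{0\}$. Both contours can therefore be placed on $\Sigma_1$, and the choice of $\pm$ boundary values there is immaterial.
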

\begin{proof}
    The statement follows from a direct application of Theorem \ref{thm: main} with $K$ given by \eqref{def: kernelhexagon} and $B$ given by \eqref{def: gapH}. By \eqref{def: J1}--\eqref{def: J2}, the jump matrices $J^{(1)}(\cdot;r,a,b)$ and $J^{(2)}(\cdot;r,a,b)$ are given by \eqref{J1hexagon}--\eqref{J2hexagon}.
    This completes the proof.
\end{proof}
\begin{remark}\label{remark:choiceY}
One could also apply Theorem \ref{thm: main} using the representation \eqref{def: kernelRHP} with the canonical choice $Y=I_2$. This would lead to a RH problem with more complicated jump matrices involving the solution of the model RH problem \ref{prbl: hexagon}.  
One can use a dressing procedure similar to the one in \cite[Section 6]{charlier2025countingdominolozengetilings} to show that the resulting RH problem is equivalent to our RH problem \ref{prbl: hexagon2}, but nevertheless this shows the convenience of choosing $Y$ different from the identity matrix in \eqref{def: kernelRHP}.
\end{remark}
\begin{remark}
    One can produce more complicated reduced hexagons by conditioning on voids of points in multiple clusters of the form \eqref{def: gapH}. This would result in multiple holes being removed from the domain. Repeatedly iterating the above argument would again allow us to express the correlation kernel as a double contour integral. The associated RH problem would have size given by twice the number of holes plus two.
\end{remark}

\section*{Acknowledgments}
No generative AI tools have been used to produce the mathematical content or the presentation of this manuscript.
The authors are grateful to Christophe Charlier and Arno Kuijlaars for useful discussions. TC
acknowledges support by FNRS Research Project T.0028.23 and by the Fonds Sp\'ecial de Recherche
of UCLouvain. FG acknowledges support by the Global PhD Partnership UCLouvain - KU Leuven.

\bibliography{bibliography.bib}

\begin{verbatim}
Tom Claeys 
Institute for Research in Mathematics and Physics, UCLouvain, Belgium
tom.claeys@uclouvain.be

Felix Gideonse
Institute for Research in Mathematics and Physics, UCLouvain, Belgium, and 
Department of Mathematics, KU Leuven, Belgium
felix.gideonse@uclouvain.be
\end{verbatim}

\end{document}